\newcommand{\yd}{\ydiagram}
\newcolumntype{L}{>{$}l<{$}} 
\newcolumntype{R}{>{$}r<{$}} 
\newcolumntype{C}{>{$}c<{$}} 
\renewcommand*\env@matrix[1][*\c@MaxMatrixCols c]{%
  \hskip -\arraycolsep
  \let\@ifnextchar\new@ifnextchar
  \array{#1}}
\newtheorem{theorem}{Theorem}[section]
\newtheorem{lemma}[theorem]{Lemma}
\newtheorem{proposition}[theorem]{Proposition}
\newtheorem{corollary}[theorem]{Corollary}
\newtheorem{remark}{Remark}[section]
  \newcommand{\del}{\partial}
  \newcommand{\eps}{\varepsilon}
\renewcommand{\d}{\mathrm{d}}
  \newcommand{\K}{\mathrm{K}}
  \newcommand{\CK}{\mathrm{CK}}
  \newcommand{\CYK}{\mathrm{CYK}}
  \newcommand{\CKY}{\CYK}
  \newcommand{\bd}{\overline{\d}}  
  \newcommand{\bCK}{\overline{\CK}}
  \newcommand{\bCKY}{\overline{\CKY}}
\DeclareFlexCompoundSymbol{\coloneq}{Rel}{:=}
\DeclareFlexCompoundSymbol{\eqcolon}{Rel}{=:}
\title{Closed Conformal Killing-Yano Initial Data}
\author{%
	Alfonso Garc\'{\i}a-Parrado$^{\sharp}$%
		\thanks{E-mail: alfonso@utf.mff.cuni.cz} {} and
	Igor Khavkine$^\flat$%
		\thanks{E-mail: khavkine@math.cas.cz}
\\[2ex]
	{\small $^\sharp$Institute of Theoretical Physics, Faculty of Mathematics and Physics,}\\
	{\small Charles University in Prague, V~Hole\v{s}ovi\v{c}k\'ach~2, 180~00 Praha 8, Czech Republic}\\
	{\small $^\flat$Institute of Mathematics of the Czech Academy of Sciences,}\\
	{\small \v{Z}itn{\'a} 25, 115 67 Praha 1, Czech Republic}
}
\date{}
\begin{document}
\maketitle

\begin{abstract}
Through an exhaustive search, we produce a 5-parameter family of
propagation identities for the \emph{closed conformal Killing-Yano}
equation on 2-forms, which hold on an Einstein cosmological vacuum
spacetime in any dimension $n>4$. It is well-known that spacetimes
admitting a non-degenerate 2-form of this type are exhausted by the
Kerr-NUT-(A)dS family of exact higher dimensional black hole solutions.
As a consequence, we identify a set of necessary and sufficient
conditions ensuring that the cosmological vacuum development of an
initial data set for Einstein's field equations admits a closed
conformal Killing-Yano 2-form. We refer to these conditions as
\emph{closed conformal Killing-Yano initial data} (cCYKID) equations.
The 4-dimensional case is special and is treated separately, where we
can also handle the conformal Killing-Yano equation without the closed
condition.
\end{abstract}

\section{Introduction} \label{sec:intro}

Solutions of the \emph{Killing} equation are vector fields generating
infinitesimal isometries of Lorentzian spacetimes or more generally
(pseudo-)Riemannian geometries. Generalizations of the Killing equation
to higher rank tensors~\cite{cariglia} include the
\emph{Killing-St\"ackel}~\cite{eisenhart1934} equations on symmetric
tensors, as well as the \emph{Killing-Yano} equations on
$p$-forms~\cite{yano1952}. Solutions of these equations, the higher rank
Killing tensors, can be associated with so-called hidden symmetries,
which are responsible for the integrability of geodesic equation, and
the separability of Hamilton-Jacobi or wave/Laplace equations, as well
as supersymmetric or spinorial generalizations of any of these
equations~\cite{cariglia, grvh-ky, tanimoto-ky}. Of particular interest
is the closely related equation for \emph{closed conformal Killing-Yano
2-form}, which is responsible for the complete integrability of
Einstein's equations, resulting in the so-called \emph{Kerr-NUT-(A)dS}
family of higher dimensional rotating black holes (where this 2-form is
called the \emph{principal tensor}). This result has by
now a substantial literature, with~\cite{Houri_2007a, Houri_2007b,
Houri_2012, Houri_2015, sergyeyev-krtous, kfk-kerrnutads} being some key
references and more listed in the extensive review~\cite{fkk-review},
and it has motivated us to focus on the equation for closed conformal
Killing-Yano 2-forms. For our work, the most relevant aspects of this
equation are its integrability conditions, which are conveniently
summarized in~\cite{Batista2015}. Later in this work, we make use of
some representation theoretic methods and Young diagrams. These tools
have also been recently fruitfully used to study the integrability
conditions of higher rank Killing-St\"ackel and Killing-Yano
tensors~\cite{Houri_2018}, a development that goes in a different
direction than our work.

In the recent work~\cite{gpkh-ckid} we returned to the question of how
to detect the presence of solutions of a geometric PDE on the bulk of a
solution of Einstein's equations just by looking at the initial data for
the metric. This question had been studied and successfully answered, by
deriving the corresponding initial data equations, for only for a small
number of examples: Killing equation~\cite{berezdivin, MONCRIEF-KID1,
MONCRIEF-KID2, COLL77, beig-chrusciel}, homothetic Killing
equation~\cite{berger}, and some Killing spinor
equations~\cite{GOMEZLOBO2008, bk-kerrness1, bk-kerrness2, gasperin-williams} (though, only in 4
dimensions). In~\cite{gpkh-ckid} we have succeeded in adding the
conformal Killing equation to this short list. In this work,
motivated by the possibility of characterizing the initial data giving
rise to the Kerr-NUT-(A)dS family of spacetimes, we use the methods
of~\cite{gpkh-ckid} to derive the initial data equations for closed
conformal Killing-Yano (cCYK) 2-forms. Such an initial data
characterization is complementary to the above mentioned local
characterization by the existence of a principal tensor. In particular,
such an initial data characterization would be independent of the way
the corresponding Cauchy surface is embedded in the ambient spacetime
(cf.~Remark~\ref{rem:isometric-embedding}). In numerical relativity, a
Cauchy surface independent characterization of Kerr initial data, which
was constructed by one of us~\cite{gpgl-kerr} (though using a different
approach), has already been fruitfully exploited~\cite{kerrness-numeric}
to quantitatively estimate the convergence of a ringdown simulation to a
Kerr background. Our results might lead in the future to similar
applications in numerical relativity in higher dimensions, or in the study
of non-linear stability of Kerr-NUT-(A)dS black holes in mathematical
relativity.

In Section~\ref{sec:propeq} we recall the general strategy
from~\cite{gpkh-ckid}, setup the notation, and recall the simplest
example of the Killing initial data (KID). The strategy involves
identifying a propagation identity (Proposition~\ref{prp:propeq}), whose
existence is then responsible for the successful identification of the
desired initial data conditions. At the moment, such a propagation
identity can only be found by trial and error, or by an exhaustive
search. In Section~\ref{sec:ccyk}, we use representation-theoretic ideas
to carry out an exhaustive search, at low differential order, for a
propagation identity for cCYK 2-forms. Some of the more technical
details are relegated to Appendices~\ref{app:tensors}
and~\ref{app:hyp-conds}. The search is successful in all spacetime
dimensions higher than $n=4$ and yields a multi-parameter family of
propagation identities (Theorem~\ref{thm:ccyk-propeq}). The $n=4$ case
is handled separately (Theorem~\ref{thm:ccyk-propeq-4dim}), where the
needed propagation identity was discovered after some trial and
error. The corresponding initial data conditions are derived in
Theorems~\ref{thm:ccykid} and~\ref{thm:ccykid-4dim}, respectively.
Finally, in Section~\ref{sec:cykid}, we adapt our methods to conformal
Killing-Yano 2-forms (without the closed condition) in $n=4$ dimensions.
The equivalent spinorial result was first obtained
in~\cite{GOMEZLOBO2008}, but we give a purely tensorial result and
derivation, which are in line with our motivation to improve the
characterization of the initial data of 4-dimensional rotating black
holes~\cite{gpgl-kerr}.

While we concentrate on Lorentzian geometries, all the covariant
identities that we present are valid also for pseudo-Riemannian
geometries of any signature. All the computations of this paper have
been double-checked with the tensor computer algebra system {\em
xAct}~\cite{XACT,XPERM}.

\section{Propagation equations and initial data characterizations} \label{sec:propeq}

In this section, we give the necessary background information for
presenting our new results in Sections~\ref{sec:ccyk}
and~\ref{sec:cykid}. Namely, we describe what we mean by an
\emph{initial data system} or \emph{initial data characterization} and
state the main proposition about \emph{propagation identities}
(Proposition~\ref{prp:propeq}), which defines the parameters of the
exhaustive search we will later perform to find the initial data systems
for (closed) conformal Killing-Yano equations. A propagation identity,
provided it exists, allows us to conclude that a given auxiliary
condition can be propagated to the future or past if it is satisfied on
initial data. The initial data characterization for this condition then
reduces to the search for a corresponding propagation identity. We have
simply formalized a well-known argument: if some dynamical fields obey
an evolution (or \emph{propagation}) equation, then an auxiliary
property of the initial data (expressed as a differential operator) is
preserved by the evolution if this property itself is propagated by a
compatible evolution equation. What we call a propagation identity
simply captures this compatibility. A classic non-trivial example is the
fact that the harmonic gauge (a.k.a\ wave gauge) condition on initial
data for the Einstein equations is preserved by evolution via the
Einstein equations in harmonic gauge, which was used in the proof of
local well-posedness of the Cauchy problem in General
Relativity~\cite{foures-bruhat-1952}.

Often, it is the main dynamical evolution equation that is fixed and one
searches for convenient auxiliary conditions on the fields that have
compatible propagation equations. Instead, we flip the attention to what
would be the auxiliary condition 
(we call it the \emph{target geometric PDE} and consider it fixed), 
and then look for compatible evolution
equations. Simplifying to the case where all equations are linear, a
propagation identity can be illustrated as follows:
\begin{center}
\begin{tikzpicture}
	\matrix (M) [matrix of math nodes,
			row sep=3em, column sep = 4em, minimum width=2em] {
		\text{\em fields }      & \text{\em eqs }      \\
		\text{\em fields } \phi & \text{\em eqs } \psi \\
	};
	\path[-stealth]
		(M-2-1) edge node [left] {$Q$} (M-1-1)
		        edge node [below] {$E$} (M-2-2)
		(M-2-2) edge node [right] {$P$} (M-1-2)
		(M-1-1) edge node [above] {$\sigma$} (M-1-2);
\end{tikzpicture}
\end{center}
Here our target geometric PDE is $E[\phi]=0$, where $\phi$ denotes a set of fields.
The operator $Q$ evolves the fields, the operator $P$ evolves the
target geometric equation components, collectively labelled by $\psi=E[\phi]$, and the operator
$\sigma$ must exist to close the desired propagation identity:
$P[E[\phi]] = \sigma[Q[\phi]]$. The key implication of this identity is
that when $\phi$ is propagated by $Q[\phi] = 0$, then the value of $\psi
= E[\phi]$ is propagated by $P[\psi] = 0$. In particular, when $\phi$ is
on-shell ($Q[\phi] = 0$) the vanishing of $\psi = E[\phi]$ 
on the initial data hypersurface implies that $E[\phi] = 0$ everywhere (in the domain of
dependence of the initial data). Since we would like the above
propagation identity to be valid for arbitrary solutions of $E[\phi] =
0$, the propagation operator for the fields must itself be a consequence
of the geometric PDE, that is, we need the extra condition 
$Q[\phi] = \rho[E[\phi]]$ for some linear operator $\rho$. 
Finally, since we are working on vacuum backgrounds,
our identities are allowed error terms that vanish when the Einstein
equations $G[g] = 0$ are satisfied. This is merely a simplifying
assumption and there are examples where vacuum propagation identities
can be extended to non-vacuum backgrounds~\cite{racz-kid1, racz-kid2}.
This concludes the explanation of the notation used in the formal
statement of Proposition~\ref{prp:propeq} below.

From now on, all of our differential operators are presumed to be
defined between vector bundles over a manifold $M$ and have smooth
coefficients. What follows is a summary of a more extensive discussion
from our previous work~\cite{gpkh-ckid}.

We call a linear partial differential equation (PDE) $P[\psi]=0$ a
\emph{propagation equation (of order $k\ge 1$)} if it has a well-posed
initial value problem: given a Cauchy surface $\Sigma \subset M$ with
unit normal $n^a$, the equation can be put into Cauchy-Kovalevskaya form
(solved for the highest time derivative) and for each assignment of arbitrary
smooth initial data ${\psi|_{\Sigma}}=\psi_0, \ldots , \nabla_n^{k-1}
{\psi|_{\Sigma}}=\psi_{k-1}$ (where $\nabla_n = n^a\nabla_a$) there
exists a unique solution of $P[\psi]=0$ on all of $M$. In
particular, due to the linearity of the propagation equation, if the
initial data all vanish, $\psi_0 = \cdots = \psi_{k-1} = 0$, then
$\psi=0$ is the corresponding unique solution on $M$.

There are multiple examples of propagation equations: (a) Wave (a.k.a\ 
\emph{normally-hyperbolic}) equations, $P[\psi] = \square \psi +
P'(\nabla\psi,\psi)$~\cite{bgp}, and generalized versions of
those~\cite[Sec.2]{gpkh-ckid}. (b) Transport equations, $P[\psi] = u^a
\nabla_a \psi + P'(\psi)$, with $u^a$ everywhere transverse to
$\Sigma$~\cite{IONESCU-KLAINERMAN-K}. (c) Special cases, like
$P_{bcd}[\psi] = \nabla^a \psi_{abcd}$ for $\psi_{abcd}$ satisfying the
symmetry and tracelessness conditions of the Weyl tensor in
4~dimensions (a so-called \emph{Weyl candidate}~\cite{IONESCU-KLAINERMAN-K}).

\begin{proposition}[{\cite[Lem.1--2]{gpkh-ckid}}] \label{prp:propeq}
Consider a globally hyperbolic spacetime $(M,g)$, satisfying the
Einstein $\Lambda$-vacuum equations, $G_{ab} = R_{ab} - \frac{1}{2} R g_{ab} +
\Lambda g_{ab} = 0$.
Let $E[\phi] = 0$ be a PDE (system) defined on some (possibly
multicomponent) field $\phi$. Suppose that there exist propagation
equations $P[\psi] = 0$, $Q[\phi] = 0$ (of respective orders $k$ and
$l$), where the differential operators $P$ and $Q$ satisfy the identities
\begin{dmath} \label{eq:propeq}
	P[E[\phi]] = \sigma[Q[\phi]] + \tau_P[G] , \quad
	Q[\phi] \hiderel{=} \rho[E[\phi]] + \tau_Q[G]
\end{dmath}
for some linear differential operators $\rho$, $\sigma$, $\tau_P$ and
$\tau_Q$. Then, given a Cauchy surface $\Sigma \subset M$ with unit
timelike normal $n^a$, there is a bijection between the solutions of
$E[\phi] = 0$ and the solutions of $Q[\phi] = 0$ whose initial data
${\phi|_\Sigma} = \phi_0, \ldots, \nabla_n^{l-1} {\phi|_\Sigma} =
\phi_{l-1}$ satisfies ${\psi|_{\Sigma}}=0, \ldots , \nabla_n^{k-1}
{\psi|_{\Sigma}}=0$, for $\psi = E[\phi]$.

In addition, there exists a purely spatial linear PDE on $\Sigma$,
$E^\Sigma[\phi_0,\ldots,\phi_{l-1}] = 0$ such that the conditions
$Q[\phi] = 0$ and $E^\Sigma[{\phi|_\Sigma}, \ldots,
\nabla_n^{l-1}{\phi|_\Sigma}] = 0$ are equivalent to the vanishing of
the initial data ${\psi|_{\Sigma}}=0, \ldots , \nabla_n^{k-1}
{\psi|_{\Sigma}}=0$ for $\psi = E[\phi]$.
\end{proposition}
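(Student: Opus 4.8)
The plan is to combine the two well-posed initial value problems with the algebraic structure of the identities in \eqref{eq:propeq}, observing at the outset that since $(M,g)$ is $\Lambda$-vacuum we have $G_{ab}=0$, so the source terms $\tau_P[G]$ and $\tau_Q[G]$ vanish identically and the identities reduce, throughout $M$, to $P[E[\phi]]=\sigma[Q[\phi]]$ and $Q[\phi]=\rho[E[\phi]]$.

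First I would establish the claimed bijection as the identity map on $\phi$, by proving two inclusions of solution sets. For the forward inclusion, suppose $E[\phi]=0$; the second reduced identity and linearity of $\rho$ give $Q[\phi]=\rho[0]=0$, so $\phi$ solves the propagation equation $Q[\phi]=0$, while $\psi=E[\phi]=0$ has trivially vanishing Cauchy data $\psi|_\Sigma=\cdots=\nabla_n^{k-1}\psi|_\Sigma=0$. For the reverse inclusion, suppose $\phi$ solves $Q[\phi]=0$ and $\psi=E[\phi]$ has vanishing Cauchy data up to order $k-1$. The first reduced identity and linearity of $\sigma$ give $P[\psi]=P[E[\phi]]=\sigma[Q[\phi]]=\sigma[0]=0$, so $\psi$ solves the order-$k$ propagation equation $P[\psi]=0$ with vanishing initial data; by the well-posedness of $P$ and the linearity remark following the definition of a propagation equation, this forces $\psi\equiv 0$ on $M$, i.e. $E[\phi]=0$. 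The two inclusions show that the solution set of $E[\phi]=0$ coincides with the set of solutions of $Q[\phi]=0$ carrying vanishing $\psi$-data, so the identity map is the asserted bijection.

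For the second part I would build $E^\Sigma$ by reducing the conditions $\nabla_n^j\psi|_\Sigma=0$, $j=0,\ldots,k-1$, to intrinsic conditions on the free data $(\phi_0,\ldots,\phi_{l-1})$. Writing $\psi=E[\phi]$ with $E$ of order $m$, the restriction $\nabla_n^j E[\phi]|_\Sigma$ a priori involves transverse derivatives $\nabla_n^p\phi|_\Sigma$ up to $p=m+j$. Because $Q[\phi]=0$ is a propagation equation it is in Cauchy--Kovalevskaya form, so along $\Sigma$ it solves for $\nabla_n^l\phi$ in terms of $\phi_0,\ldots,\phi_{l-1}$ and their tangential derivatives; differentiating $Q[\phi]=0$ repeatedly in the normal direction and restricting to $\Sigma$ then expresses every $\nabla_n^p\phi|_\Sigma$ with $p\ge l$ as a tangential differential expression in the free data. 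Substituting these relations turns each condition $\nabla_n^j\psi|_\Sigma=0$ into a purely spatial, linear equation on $\Sigma$; collecting them defines the operator $E^\Sigma[\phi_0,\ldots,\phi_{l-1}]=0$. By this very construction, for any $\phi$ solving $Q[\phi]=0$ the vanishing of the Cauchy data of $\psi=E[\phi]$ is equivalent to $E^\Sigma[\phi_0,\ldots,\phi_{l-1}]=0$, which is the content of the claim.

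The genuinely analytic input --- well-posedness of the two initial value problems and the uniqueness it yields from vanishing data --- is supplied directly by the hypothesis that $P$ and $Q$ are propagation equations, so the bijection argument is essentially formal. I therefore expect the only real work to lie in the last step: the main obstacle is the bookkeeping needed to verify that the Cauchy--Kovalevskaya elimination of transverse derivatives of order $\ge l$ via $Q[\phi]=0$ terminates within the available jet $(\phi_0,\ldots,\phi_{l-1})$ and produces a well-defined tangential operator $E^\Sigma$, with the non-characteristic nature of $\Sigma$ (guaranteed by the unit timelike normal $n^a$) ensuring that each solve-for-the-top-derivative step is legitimate.
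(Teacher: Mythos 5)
Your proposal is correct and follows exactly the intended argument: the paper defers the proof to \cite[Lem.1--2]{gpkh-ckid}, and the argument there is precisely your two-inclusion bijection (using $G_{ab}=0$ to kill the $\tau$ terms, $\rho$ for the forward direction, and uniqueness for the propagation equation $P$ with vanishing data for the reverse) together with the Cauchy--Kovalevskaya elimination of normal derivatives of order $\ge l$ via $Q[\phi]=0$ to produce the purely spatial operator $E^\Sigma$.
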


We must emphasize that the existence of a propagation identity
like~\eqref{eq:propeq} for a particular equation $E[\phi]=0$ is not a
given and must be discovered either by trial-and-error or through a
systematic search. The bulk of this work, in Sections~\ref{sec:ccyk}
and~\ref{sec:cykid}, is devoted exactly to such a systematic search.
When a propagation identity exists, there is no reason for it to be
unique. For instance, we will find such identities in continuous
families.

\begin{remark}\em
For an operator $E^\Sigma$ satisfying the second part of
Proposition~\ref{prp:propeq}, we call
\begin{equation}\label{eq:p-initial-data}
	E^\Sigma[\phi_0, \ldots, \phi_{l-1}]=0
\end{equation}
an \emph{initial data characterization} for the geometric 
equation $E[\phi]=0$ or in short
a set of \emph{$E$-initial data conditions} or a \emph{$E$-initial data
system}. Clearly, the operator $E^\Sigma$ is not uniquely fixed. For
instance, its components may contain many redundant equations. Thus, in
practice, once some $E$-initial data conditions have been obtained, they
will be significantly simplified by eliminating as many higher order (in
spatial derivatives) terms as possible. Also, when some of the
components of $E^\Sigma[\phi_0,\ldots,\phi_{l-1}]=0$ can be used to
directly solve for one of the arguments, say $\phi_{l-1}$, in terms of
the remaining ones, we can split the initial data system into
(a) $\phi_{l-1}=\cdots$ and (b) a system involving only the remaining arguments,
$E^{\prime\Sigma}[\phi_0,\ldots,\phi_{l-2}]=0$. When presenting an
$E$-initial data system, we will omit from $E^\Sigma$ those components that
can be rewritten as type (a) and only write the remaining components of
type (b), reduced to the smallest convenient set of arguments. Of
course, the derivation of the initial data system will provide the
information about how all type (a) components can be recovered.
\end{remark}
\begin{remark}\label{rem:isometric-embedding}\em
The geometric meaning of Proposition \ref{prp:propeq} is that the 
equations that characterize the initial data of the equation $E[\phi]=0$
given by \eqref{eq:p-initial-data} yields a set of necessary and sufficient 
conditions for the existence of an isometric embedding between a Riemannian
manifold $(\Sigma,h)$, that is isometric to the Cauchy hypersurface
$\Sigma$, and a globally hyperbolic $\Lambda$-vacuum solution $(M,g)$ of
the Einstein equations possessing a corresponding solution of
$E[\phi]=0$, with $\Sigma$ as its Cauchy hypersurface. 
Recall that any (vacuum) Cauchy data always 
has a {\em maximal} globally hyperbolic 
extension \cite{GEROCH-BRUHAT} that does not necessarily agree with the maximal 
analytic extension of a vacuum spacetime (the latter might not even be 
globally hyperbolic). Therefore the globally
hyperbolic spacetime $(M,g)$ considered in proposition \ref{prp:propeq}
shall be understood as the maximal globally hyperbolic extension of 
the Cauchy data. In this sense we may generalize the assumptions of 
Proposition \ref{prp:propeq} and speak of an initial data characterization
of the geometric equation $E[\phi]=0$ in a given ambient vacuum 
$(M,g)$, by restricting to a globally hyperbolic development of any
partial Cauchy surface, even if $(M,g)$ is not globally hyperbolic.
\end{remark}

For our purposes, in order to establish that any particular propagation
equation has a well-posed initial value problem, it will be sufficient
to check that it belongs to the class that we called \emph{generalized
normally hyperbolic} in~\cite[Sec.2]{gpkh-ckid}. We showed how equations
from that class inherit the well-posedness properties from the better
known \emph{normally hyperbolic} class~\cite{bgp}. An operator $Q$ (of
order $l$) is generalized normally hyperbolic when it is
\emph{determined} (acts between vector bundles of equal rank) and there
exists an operator $Q'$ (of order $2m-l$, $m\ge 1$) such that
\begin{equation} \label{eq:adjugate}
	N[\phi] := Q'[Q[\phi]] = \square^m \phi + \text{l.o.t} ,
\end{equation}
where l.o.t stands for term of differential order lower than $2m$. That
is, the principal symbol of $N[\phi]$ is a power of the wave operator.
We call $Q'$ an \emph{adjugate operator} for $Q$.

One of the consequences~\cite[Lem.3]{gpkh-ckid} of generalized normal
hyperbolicity of an operator $Q$ is the non-degeneracy of its principal
symbol $\sigma_p(Q)$ as a numerical matrix for any $p\in T^*M$ that is
not null (recall that $\sigma_p(Q)$ is a vector bundle morphism obtained
by collecting the highest order terms of $Q$ and replacing $\nabla_a$
with multiplication by a covector $p_a$, and hence, for given frames on
the source and target vector bundles, the principal symbol is a
numerical matrix valued function of $p_a$). Therefore, to show that some
$Q$ cannot be generalized normally hyperbolic, it is sufficient to
exhibit at least one non-null ($p_a p^a \ne 0$) value of $p\in T^*M$ for
which $\sigma_p(Q)$ is singular (equivalently, it possesses at least one
left or right null-vector). To save the trouble of explicitly writing
down such a covector $p_a$, for instance when there is no single
canonical choice, the following is a useful result:

\begin{lemma} \label{lem:not-normhyp}
Suppose that $Q$ is a determined operator of order $l$ and there exists
a non-vanishing differential operator $Q'$ of order $l'$ such that
$Q'\circ Q = 0 + \text{l.o.t}$ (or $Q\circ Q' = 0 + \text{l.o.t}$),
where $0$ is to be interpreted as a special case of a differential
operator of order $l+l'$. Then $Q$ cannot be generalized
normally hyperbolic.
\end{lemma}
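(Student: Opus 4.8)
The plan is to argue by contradiction, using two ingredients: the multiplicativity of the principal symbol under composition, and the non-degeneracy of $\sigma_p(Q)$ for non-null $p$ that follows from generalized normal hyperbolicity (the consequence of \cite[Lem.3]{gpkh-ckid} recalled just above). So suppose, for contradiction, that the determined operator $Q$ is generalized normally hyperbolic. Then $\sigma_p(Q)$ is an invertible numerical matrix for every covector $p \in T^*M$ with $p_a p^a \neq 0$.

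First I would invoke that the principal symbol is multiplicative: for $Q'$ of order $l'$ and $Q$ of order $l$, the degree-$(l+l')$ part of the symbol of $Q' \circ Q$ is exactly $\sigma_p(Q')\,\sigma_p(Q)$ (and that of $Q \circ Q'$ is $\sigma_p(Q)\,\sigma_p(Q')$). The hypothesis $Q'\circ Q = 0 + \text{l.o.t.}$, with $0$ read as an operator of order $l+l'$, says precisely that this leading part vanishes identically in $p$, i.e.\ $\sigma_p(Q')\,\sigma_p(Q) = 0$ (respectively $\sigma_p(Q)\,\sigma_p(Q') = 0$). Restricting to any non-null $p$ and right-multiplying (resp.\ left-multiplying) by the inverse $\sigma_p(Q)^{-1}$, I conclude $\sigma_p(Q') = 0$ for every non-null $p$.

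It remains to upgrade this to identical vanishing of the symbol, and this is the step I expect to require the most care. Each entry of $\sigma_p(Q')$ is a homogeneous polynomial of degree $l'$ in the components of $p$, while the non-null covectors $\{p : p_a p^a \neq 0\}$ form the complement of the null cone and hence contain a nonempty open subset of each cotangent fibre (the null cone is a proper algebraic hypersurface whenever the metric is non-degenerate). A polynomial vanishing on a nonempty open set is the zero polynomial, so $\sigma_p(Q') \equiv 0$ for all $p$. But this says the degree-$l'$ part of $Q'$ vanishes, contradicting the assumption that $Q'$ is a non-vanishing operator of order $l'$. Hence no such generalized-normally-hyperbolic $Q$ can exist, proving the lemma. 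The multiplicativity and invertibility inputs are standard; the only genuinely geometric point is that non-null covectors fill out an open set, which is what licenses the polynomial-vanishing argument.
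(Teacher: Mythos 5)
Your proof is correct and follows essentially the same route as the paper's: both arguments hinge on the multiplicativity of principal symbols, the non-degeneracy of $\sigma_p(Q)$ at non-null $p$ guaranteed by generalized normal hyperbolicity, and the fact that a polynomial symbol that is not identically zero is non-zero on an open dense set of covectors (you phrase this contrapositively, arguing by contradiction that $\sigma_p(Q')$ would have to vanish identically, while the paper argues directly by picking a non-null $p$ where $\sigma_p(Q')$ has rank $\ge 1$ and exhibiting a null-vector of $\sigma_p(Q)$). The difference is purely one of logical organization, not of substance.
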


\begin{proof}
The hypotheses basically mean that $\sigma_p(Q') \sigma_p(Q) = 0$
(or $\sigma_p(Q) \sigma_p(Q') = 0$) with $\sigma_p(Q')$ not being
identically zero at least for some $x\in M$. Since $\sigma_p(Q')$
depends polynomially on $p\in T^*_x M$, it has rank $\ge 1$ on an open
dense subset of $T^*_x M$. Pick any non-null covector $p_a$ from that
set, so that the row (column) space of $\sigma_p(Q')$ has at least one
non-vanishing element, which is then a left (right) null-vector of
$\sigma_p(Q)$. Since this implies that $\sigma_p(Q)$ is singular for a
non-null covector $p_a$, the operator $Q$ cannot be generalized
normally hyperbolic.
\end{proof}

\subsection{Example: Killing initial data in an Einstein space} \label{sec:kid}
To show a practical application of Proposition~\ref{prp:propeq}
we review here the case of the 
\emph{Killing equation} as it was presented in~\cite[Subsect.2.1]{gpkh-ckid}
,
\begin{dgroup*}
\begin{dmath} \label{eq:k}
	\K_{ab}[v] \hiderel{=} \nabla_a v_b + \nabla_b v_a = 0
	\condition[]{$(E[\phi] = 0)$} ,
\end{dmath}
\intertext{whose solution vector fields $v^a$ are infinitesimal
isometries of the background metric. If we assume that the Einstein
tensor $G_{ab}=0$ then the corresponding propagation equations are}
\begin{dmath} \label{eq:k-v-propeq}
	\square v_a + \frac{2\Lambda}{n-2} v_a = 0
	\condition[]{$(Q[\phi] = 0)$} ,
\end{dmath}
\begin{dmath} \label{eq:k-h-propeq}
	\square h_{ab} - 2R^c{}_{ab}{}^d h_{cd} = 0
	\condition[]{$(P[\psi] = 0)$} ,
\end{dmath}
\end{dgroup*}
where symmetric tensors $h_{ab} = K_{ab}[v]$ ($\psi = E[\phi]$) coincide
with the target of the Killing equation. The propagation equation
$P[\psi] = 0$ happens to coincide with the harmonic gauge linearized
Einstein evolution equation, with $h_{ab}$ considered as a linearized
perturbation of the background metric. The propagation
identities~\eqref{eq:propeq} then take the form 
\begin{dgroup*}
\begin{dmath} \label{eq:k-propeq}
	\square \K_{ab}[v] - 2R^c{}_{ab}{}^d \K_{cd}[v]
		= \K_{ab}\left[\square v+\frac{2\Lambda}{n-2} v\right] \\
	{}\hspace{7em} (P[E[\phi]] \hiderel{=} \sigma[Q[\phi]] + \tau[G]) ,
\end{dmath}
\begin{dmath} \label{eq:k-propeq-conv}
	\square v_a + \frac{2\Lambda}{n-2}v_a
		= \nabla^b \K_{ab}[v] - \frac{1}{2} \nabla_a \K^b{}_b[v] \\
	{}\hspace{7em} (Q[\phi] \hiderel{=} \rho[E[\phi]]) .
\end{dmath}
\end{dgroup*}

To obtain the $\K$-initial data conditions, or more commonly the
\emph{Killing initial data (KID)} conditions, we must first introduce a
space-time split around a Cauchy surface $\Sigma \subset M$, $\dim M =
n$ and $\dim \Sigma = n-1$. Let us use Gaussian normal coordinates to
set up a codimension-$1$ foliation on an open neighborhood $U \supset
\Sigma$ by level sets of a smooth temporal function $t\colon U\to
\mathbb{R}$, of which $\Sigma = \{ t = 0 \}$ is the zero level set.
Choose $t$ such that $n_a = \nabla_a t$ is a unit normal to the level
sets of $t$. Let us identify tensors on $\Sigma$ by upper case Latin
indices $A, B, C, \ldots$, denote the pullback of the ambient metric to
$\Sigma$ by $g_{AB}$ and its inverse by $g^{AB}$, and also denote by
$h^a_A$ the injection $T_\Sigma\to TM$ induced by the foliation. Raising
and lowering the respective indices on $h^a_A$ with $g_{ab}$ and
$g_{AB}$, we get the corresponding injections and orthogonal projections
between $T\Sigma$, $T^*\Sigma$, $TM$ and $T^*M$. In our notation, all
covariant and contravariant tensors split according to
\begin{equation}
	v_a = v_0 n_a + h_a^A v_A, \quad
	u^b = -u^0 n^b + h^b_B u^B ,
\end{equation}
which we also denote by
\begin{equation}
	v_a \to \begin{bmatrix} v_0 \\ v_A \end{bmatrix} , \quad
	u^b \to \begin{bmatrix} u^0 \\ u^B \end{bmatrix} .
\end{equation}
Thus, in our convention, the ambient metric splits as
\begin{equation}
	g_{a b} \to \begin{bmatrix} -1 & 0 \\ 0 & g_{A B} \end{bmatrix} .
\end{equation}
Let $D_A$ denote the Levi-Civita connection on $(\Sigma,g_{AB})$,
depending on the foliation time $t$ of course, and let $\del_t =
\mathcal{L}_{-n}$ denote the Lie derivative with respect to the
future-pointing normal vector $-n^a$. The action of $\del_t$ extends to
$t$-dependent tensors on $\Sigma$ in the natural way. The
($t$-dependent) extrinsic curvature on $\Sigma$ is then defined by
\begin{equation}
	\pi_{A B} = \frac{1}{2} \del_t g_{A B}
\end{equation}
and the ambient spacetime connection decomposes as
\begin{equation}
	\nabla_a v_b
	\to \begin{bmatrix} \nabla_0 v_b \\ \nabla_A v_b \end{bmatrix} ,
\end{equation}
where 
\begin{align}
	\nabla_0 v_a \to
		\begin{bmatrix}
			\nabla_0 v_0 \\ \nabla_0 v_A
		\end{bmatrix}
	&= \begin{bmatrix}
			\del_t & 0 \\
			0 & \del_t \delta_A^B - \pi_{A}{}^{B}
		\end{bmatrix}
		\begin{bmatrix}
			v_0 \\ v_B
		\end{bmatrix} ,
	\\
	\nabla_A v_b \to
		\begin{bmatrix}
			\nabla_A v_0 \\ \nabla_A v_B
		\end{bmatrix}
	&= \begin{bmatrix}
			D_A & -\pi_A{}^C \\
			-\pi_{A B} & D_A \delta_B^C
		\end{bmatrix}
		\begin{bmatrix}
			v_0 \\ v_C
		\end{bmatrix} .
\end{align}
The ambient $\Lambda$-vacuum Einstein equations $R_{ab} - \frac{2\Lambda}{n-2}
g_{ab} = 0$ decompose as
\begin{equation} \label{eq:ricci-split}
	\begin{bmatrix}
		-\nabla_0 \pi - \pi \cdot \pi+\frac{2\Lambda}{n-2}
			& D^{C} \pi_{C B} - D_{B} \pi \\
		D^{C} \pi_{C A} - D_{A} \pi
			& \nabla_0 \pi_{AB} + \pi \pi_{AB} + r_{AB}-\frac{2\Lambda}{n-2}g_{AB}
	\end{bmatrix} = 0 ,
\end{equation}
where now $r_{AB}$ is the Ricci tensor of $g_{AB}$ on $\Sigma$, $\pi =
\pi_C{}^C$, $(\pi\cdot \pi)_{AB} = \pi_A{}^C \pi_{C B}$ and $\pi\cdot
\pi = (\pi\cdot \pi)_C{}^C$. Note that we have found it convenient to
use the $\nabla_0$ operator instead of $\del_t$, because of its
preservation of both the orthogonal splitting with respect to the
foliation and of the spatial metric, $\nabla_0 g_{AB} = \nabla_0 g^{AB}
= 0$. For convenience, we note the commutator 
\begin{equation}\label{eq:commutator}
	(\nabla_0 D_A - D_A \nabla_0) \begin{bmatrix} v_0 \\ v_B \end{bmatrix}
	= -\pi_A{}^C D_C \begin{bmatrix} v_0 \\ v_B \end{bmatrix}
	+ \begin{bmatrix}
			0 \\ (D^C\pi_{AB} - D_B\pi_A{}^C)
		\end{bmatrix} v_C \;,
\end{equation}
and the identity
\begin{dmath} \label{eq:driem}
	\nabla_0 r_{ABCD}
	= -D_{(A} D_{C)} \pi_{BD} + D_{(A} D_{D)} \pi_{BC}
			+ D_{(B} D_{C)} \pi_{AD} - D_{(B} D_{D)} \pi_{AC} \\
		- \left( \pi_{[A|}{}^E r_{E|B]CD}
			+ \pi_{[C|}{}^E r_{ABE|D]} \right) \;,
\end{dmath}
which can be obtained by splitting the Bianchi identity
$\nabla_{[a} R_{bc]de} = 0$.
Taking the trace and using the vacuum equations yields
\begin{dmath}\label{eq:dricc}
	\nabla_0 r_{AC}
	= D_A D_C \pi - D^B D_B \pi_{AC} - 2 r_{ABCD} \pi^{BD} .
\end{dmath}

According to Proposition~\ref{prp:propeq} and the specific
identity~\eqref{eq:k-propeq}, the Killing equation $\K_{ab}[v] = 0$ is
satisfied when $v_a$ is any solution of~\eqref{eq:k-v-propeq} where both
$\left.\K_{ab}[v]\right|_\Sigma = \left.\nabla_0\K_{ab}[v]\right|_\Sigma = 0$.
Using respectively $\K_{00}[v]=0$ and $\K_{0B}[v]=0$ to eliminate the
time derivatives of $v_0$ and $v_B$ from these conditions, while also
eliminating the time derivatives of $\pi_{AB}$ using the
$\Lambda$-vacuum Einstein equations~\eqref{eq:ricci-split}, we obtain
the well-known Killing initial data (KID)
conditions~\cite{beig-chrusciel} in the presence of a cosmological
constant:
\begin{dgroup}[indentstep=6em] \label{eq:kid}
\begin{dmath} \label{eq:kid0}
	D_A v_B + D_B v_A - 2\pi_{AB} v_0 = 0 ,
\end{dmath}
\begin{dmath} \label{eq:kid1}
	D_{A} D_{B} v_{0}
	+ (2(\pi\cdot\pi)_{A B} - \pi \pi_{AB} - r_{AB}) v_0 \\
	- 2 \pi_{(B}{}^{C} D_{A)} v_C
	- (D^{C}{\pi_{A B}}) v_{C} + \frac{4\Lambda}{n-2}g_{AB} v_0 = 0 .
\end{dmath}
\end{dgroup}

\section{Closed Conformal Killing-Yano initial data} \label{sec:ccyk}

Consider an $n$-dimensional Lorentzian manifold $(M,g)$ satisfying the
vacuum Einstein equations with a cosmological constant $\Lambda$,
$R_{ab} - \frac{1}{2} R g_{ab} + \Lambda g_{ab}=0$ or equivalently
$R_{ab} = \frac{2\Lambda}{n-2} g_{ab}$. We will restrict ourselves to
dimensions $n>2$. Let $Y_{ab} = Y_{[ab]}$ be a $2$-form. The
\emph{conformal Killing-Yano (CYK)\footnote{We use the non-standard abbreviation 
CYK instead of CKY in order to use more consonant
abbreviations cCYKID, CYKID for (closed) Killing-Yano initial data rather than the alternative
CKYID, cCKYID that result from CKY.}} equation in $n$~dimensions is
\begin{dmath} \label{eq:cky-def}
	\CKY_{a\colon bc}[Y] \coloneq
	2\nabla_{a}{Y_{b c}}
	-\nabla_{b}{Y_{c a}}
	+\nabla_{c}{Y_{b a}}
	\\
	+\frac{3}{n-1} g_{ab} \nabla^d Y_{c d}
	-\frac{3}{n-1} g_{ca} \nabla^d Y_{b d}
	\hiderel{=} 0 .
\end{dmath}
In our index notation $a{:}bc$, the $:$ only serves to visually separate
groups of indices. When both $\CKY[Y]=0$ and the exterior derivative $\d
Y = 0$, we call $Y_{ab}$ a \emph{closed conformal Killing-Yano (cCYK)}
$2$-form.

\begin{remark}\em
The $\CKY_{a:bc}$ operator takes values in $3$-tensors that transform
pointwise irreducibly under $SO(1,n-1)$ (the group of orientation
preserving linear transformations respecting the Lorentzian metric
$g_{ab}$), which is traditionally labelled by the \emph{Young tableau}
{\LARGE\ytableaushort{{\scalebox{.45}{\raisebox{.75ex}{$b$}}}{\scalebox{.45}{\raisebox{.75ex}{$a$}}},{\scalebox{.45}{\raisebox{.75ex}{$c$}}}}}.
Given any such tableau (consisting of left-aligned rows of boxes of
non-increasing length) filled with tensor indices, the corresponding
subspace carrying the irreducible representation is obtained by first
symmetrizing over the rows, then antisymmetrizing over the columns and
finally subtracting all the traces. The resulting representation is
always irreducible, with the possible exception of tableaux with columns
of length exactly $n/2$ (due to the possibility of decomposing
$(n/2)$-forms into self-dual and anti-self-dual subspaces). But such
exceptions only occur for some dimensions and signatures and we will not
encounter them below (since we use Lorentzian signature and real
representation), with the exception of 3-forms $\yd{1,1,1}$ in dimension
$n=6$. If we do not subtract the traces, then we obtain a
subspace transforming irreducibly under $GL(n)$ (the group of general
linear transformations), but which may be reducible with respect to
$SO(1,n-1)$. Below, we will freely use \emph{Young diagrams} (unfilled
Young tableaux) to label other irreducible tensor representations.
Although we will not need more of them, basic facts about Young diagrams
and their relation to $GL(n)$ and $SO(p,q)$ representation theory can be
found in~\cite{fulton, fulton-harris, hamermesh}.
\end{remark}

The representation type of the conformal Killing-Yano operator implies
that it must satisfy the following identities
\begin{equation}
	\CKY_{a:(bc)}[Y] = \CKY_{[a:bc]}[Y] = g^{ab} \CKY_{a:bc}[Y] = 0 ,
\end{equation}
which can be straightforwardly verified from its definition
in~\eqref{eq:cky-def}. Our use of $:$ is to separate out the
antisymmetric index group $bc$.

The covariant derivative of $Y_{ab}$ decomposes as
\begin{dgroup*}
\begin{dmath} \label{eq:dY-decomp}
	(\yd{1}\, \nabla_a) ( \yd{1,1}\, Y_{bc} )
	= \frac{1}{3} \yd{2,1}\, \CKY_{a:bc}[Y]
		+ \frac{1}{6} (\yd{1,1,1}\, \d Y)_{abc}
		- \frac{2}{n-1} g_{a [b} (\yd{1}\, \delta Y)_{c]} ,
\end{dmath}
\begin{dmath} \label{eq:dR-def}
	\text{with} \quad
	(\d Y)_{abc} \coloneq 2 (\nabla_a Y_{bc} + \nabla_b Y_{ca} + \nabla_c
	Y_{ab}) ,
\end{dmath}
\begin{dmath} \label{eq:div-def}
	\text{and} \quad
	(\delta Y)_b \coloneq \nabla^a Y_{ab} ,
\end{dmath}
\end{dgroup*}
where we have prefixed $Y_{ab}$ and various operators with Young
diagrams indicating that they take values in the corresponding
irreducible $SO(1,n-1)$ representation. The most important information
contained in~\eqref{eq:dY-decomp} can be summarized
representation-theoretically by the decomposition of the following
tensor product of representations into irreducible ones: $\yd{1} \:
\yd{1,1} = \yd{1} + \yd{2,1} + \yd{1,1,1}$. Since each irreducible
representation on the right-hand side appears with multiplicity one,
Schur's lemma guarantees that the projection of $\nabla_a Y_{bc}$ onto
the corresponding representation is uniquely fixed up to a scalar
multiple, which we explicitly fix in the definitions~\eqref{eq:cky-def},
\eqref{eq:dR-def} and~\eqref{eq:div-def}. In the case of multiplicity
greater than one, there will exist multiple independent projectors onto
the same representation, with the number of independent ones equal to
the multiplicity. Of course, the choice of basis in this space of
projectors is not unique and has to be made by hand.

The above basic ideas from representation theory will help us carry out
an exhaustive search for a \emph{covariant}, \emph{second order}
propagation identity of the form~\eqref{eq:propeq} for the cCYK
equations. A priori, the order of the differential operators
in~\eqref{eq:propeq} is not bounded, nor do they have to be covariant,
but for practical reasons, we have restricted our search to operators
$P$ and $Q$ that are of second order are covariantly constructed from the
Levi-Civita connection $\nabla_a$, the metric $g_{ab}$ and the Riemann tensor%
	\footnote{Our conventions are $2\nabla_{[a}\nabla_{b]} v_c =
	R_{abc}{}^d v_d$ and $R_{ab} = R_{acb}{}^c$.} %
$R_{abcd}$. Expanding the search to higher orders would be prohibitively
expensive, at least without significant automation of our methods. In
any case, our search will succeed (Theorem~\ref{thm:ccyk-propeq}) in all
dimensions ($n>2$) except $n=4$. The $4$-dimensional case will be
handled separately in Section~\ref{sec:cykid}.

Let us also briefly remark that the two separate $\CKY[Y] = 0$ and $\d Y
= 0$ equations can be combined into the single equivalent equation
\begin{equation} \label{eq:ccyk-combined}
	\nabla_a Y_{bc} - \frac{2}{n-1} g_{a[b} \nabla^d Y_{c]d} = 0 ,
\end{equation}
where the left-hand side is just the traceless part of $\nabla_a
Y_{bc}$. The tensor type of this equation is not irreducible in the
sense of $SO(1,n-1)$ representations. So it would not be as helpful in
the representation-theoretic exhaustive search described above.
Projecting this equation onto the $\yd{2,1}$ and $\yd{1,1,1}$ tensor
types recovers the original separate equations, also demonstrating the
complete equivalence of the two formulations.

\subsection{Dimensions $n<4$}

The lowest dimension in which the CYK operator makes sense is $n=2$.
However, in that case, the $SO(1,1)$ representations $\yd{2,1}$ and
$\yd{1,1,1}$ are both 0-dimensional, meaning that the equations $\CYK[Y]
= 0$ and $\d Y = 0$ are both trivial conditions of the form $0=0$.

In dimension $n=3$, we can represent any 2-form as $Y_{ab} =
\eta_{ab}{}^c Y_c$, where $Y_c$ is a $1$-form and $\eta_{abc}$ is the
Levi-Civita tensor. We then have the identities
\begin{gather}
	-\frac{1}{3} \eta_{(a}{}^{cd} \CKY_{b):cd}[Y]
	= \nabla_a Y_b + \nabla_b Y_a - \frac{2}{3} g_{ab} \nabla^c Y_c ,
	\\
	-\frac{1}{2} \eta^{abc} (\d Y)_{abc} = \nabla^c Y_c .
\end{gather}
This means that $\CKY[Y] = 0$ is equivalent to the \emph{conformal
Killing} equation on $Y_c$,
\begin{equation} \label{eq:ck}
	\CK_{ab}[Y] := \nabla_a Y_b + \nabla_b Y_a
		- \frac{2}{n} g_{ab} \nabla^c Y_c = 0,
\end{equation}
while imposing the additional condition $(\d Y)_{abc} = 0$, or the
equivalent $\nabla^c Y_c = 0$, turns it into the Killing equation. The
propagation identities and initial data for the conformal Killing
equation were found in our previous work~\cite{gpkh-ckid}, where we also
reviewed the analogous well-known results for the Killing equation as
well as their history.

Thus, in the rest of this work we concentrate on dimension $n\ge 4$.

\subsection{Propagation identity in dimension $n>4$} \label{sec:ccyk-gendim}

Our search strategy has the following steps: (a) identify a basis for
the potential $P$, $Q$, $\rho$, and $\sigma$ operators in the
propagation identity~\eqref{eq:propeq}, (b) find the most general
solution for these operators and identify the free parameters that it
depends on, (c) check for which values of the free parameters the
operators $P$ and $Q$ are generalized normally hyperbolic. The result of
this search, recorded in Theorem~\ref{thm:ccyk-propeq}, is that there
exists a $5$-parameter family of identities satisfying all of our search
criteria.

\paragraph{(a)}
We start by listing the basis elements of for all the operators we want
to parametrize. The following schematic identity illustrates the number
of basis elements and their labels:

\begin{dgroup} \label{eq:gen-ccyk}
\begin{dmath}[indentstep=9.5em] \label{eq:gen-ccyk-P}
	\begin{matrix}
		\yd{2,1} \\ \yd{1,1,1}
	\end{matrix}
	\begin{bmatrix}
		P^{1,2,3,4,5,6} & P^{7,8} \\
		\hat{P}^{5,6} & \hat{P}^{1,2,3,4}
	\end{bmatrix}
	\begin{matrix}
		\yd{2,1} \\ \yd{1,1,1}
	\end{matrix}
	\begin{bmatrix}
		\CKY \\ \d
	\end{bmatrix}
	\yd{1,1}
	-
	\begin{matrix}
		\yd{2,1} \\ \yd{1,1,1}
	\end{matrix}
	\begin{bmatrix}
		\sigma^1 \\ \hat{\sigma}^1
	\end{bmatrix}
	\yd{1,1}
	\begin{bmatrix}
		\rho^1 & \rho^2
	\end{bmatrix}
	\begin{matrix}
		\yd{2,1} \\ \yd{1,1,1}
	\end{matrix}
	\begin{bmatrix}
		\CKY \\ \d
	\end{bmatrix}
	\yd{1,1}
	=
	\begin{matrix}
		\yd{2,1} \\ \yd{1,1,1}
	\end{matrix}
	\begin{bmatrix}
		T^{1,2,3,4,5,6,7,8} \\ \hat{T}^{1,2,3,4}
	\end{bmatrix}
	\yd{1,1} \:,
\end{dmath}
\begin{dmath} \label{eq:gen-ccyk-Q}
	\yd{1,1}
	\begin{bmatrix}
		\rho^1 & \rho^2
	\end{bmatrix}
	\begin{matrix}
		\yd{2,1} \\ \yd{1,1,1}
	\end{matrix}
	\begin{bmatrix}
		\CKY \\
		\d
	\end{bmatrix}
	\yd{1,1}
	= \yd{1,1} \: Q^{1,2,3,4} \: \yd{1,1} \:.
\end{dmath}
\end{dgroup}
To visually help the reader, we have inserted Young diagram labels to
illustrate the irreducible tensor representations that each operator
acts between.

To explain the size of each basis, we first need to define precisely what
we mean by a \emph{second order operator}. Obviously, it cannot contain
terms with more than two iterated $\nabla_a$ derivatives. But each
subleading term should also be of \emph{total order} two. Being
covariant, each such subleading term consists of a number of iterated
$\nabla_a$ derivatives multiplied by copies of the Riemann tensor $R$
and its covariant derivatives. We count the total order as follows: it
is additive for products, $\nabla^k$ has total order $k$, $\nabla^l R$
has total $2+l$, the cosmological constant $\Lambda$ has total order
$2$, while other constants, the metric $g_{ab}$ and tensor
contractions have total order zero. It is easy to see that this total
order is preserved by the Leibniz rule, exchange of covariant
derivatives and substitution of Einstein's equations, while it is
additive under operator composition.

Thus, in the identities~\eqref{eq:gen-ccyk}, $\CYK$, $\d$, $\rho^1$,
$\rho^2$, $\sigma^1$, $\hat{\sigma}^1$ are all of total order one,
$P^i$, $\hat{P}^i$, $Q^i$ are all of total order two, and $T^i$,
$\hat{T}^i$ are all of total order three. To see how many independent
ways there are to combine two $(\yd{1}\, \nabla_a)$ derivatives with a
tensor like $\yd{2,1}\, C_{a:bc}$, consider the tensor product
decomposition
\begin{equation}
	(\yd{1} \: \yd{1}) \: \yd{2,1}
	= (\mathbb{R} + \yd{2}) \: \yd{2,1}
	= (\yd{2,1})
		+\left(\yd{4,1}
		 +   \yd{3,2}
		 +   \yd{3,1,1}
		 +   \yd{2,2,1}
		 +   \yd{3}
		 +2\,\yd{2,1}
		 +   \yd{1,1,1}
		 +   \yd{1} \right) ,
\end{equation}
where in the first product, the trivial representation $\mathbb{R}$
corresponds to $\square= g^{ab}\nabla_a\nabla_b$, while $\yd{2}$
corresponds to the traceless symmetrized projection of
$\nabla_a\nabla_b$. We need only consider symmetrized derivatives
$(\yd{1}\:\yd{1})\: \nabla_{(a} \nabla_{b)}$, since the antisymmetric
part is equivalent to contractions with the Riemann tensor.
The total multiplicity of $\yd{2,1}$ appearing on
the right-hand side is $3=1+2$, thus there are three independent ways of
applying two derivatives to $C_{a:bc}$. This number automatically counts
all possible index permutations, index contractions and products with
metric $g_{ab}$ or Levi-Civita $\eta_{a_1\cdots a_n}$ tensors, as all
such operations are $SO(1,n-1)$ equivariant. Similarly, we can work out
the number of independent ways to combine the Riemann tensor $R_{abcd}$
with a tensor argument by recalling~\cite[p.193]{penrose-spinor} that
the $\Lambda$-vacuum Weyl tensor $W_{abcd} = R_{abcd} - \frac{4\Lambda
g_{a[c} g_{d]b}}{(n-1)(n-2)}$ (due to the algebraic Bianchi identity and
being fully traceless) belongs to a representation of type $\yd{2,2}$,
while the (appropriately symmetrized traceless) derivatives $\nabla W$,
$\nabla\nabla W$, \ldots (due to the differential Bianchi identity and
its contractions) have independent projections only onto the respective
representations $\yd{3,2}$, $\yd{4,2}$, \ldots, with other projections
being expressible in terms of lower order derivatives.

For sufficiently large $n$ (cf.~Remark~\ref{rmk:littlewood}), the
following tensor decomposition product tables, show how all the operator
basis elements from~\eqref{eq:gen-ccyk} fit into the above scheme, where
to save space we have dropped all summands that are irrelevant for
identity~\eqref{eq:gen-ccyk}, with multiplicity indicated by listing
multiple basis element labels. For economy of notation, we refer
directly to the Riemann tensor and its derivatives $R$ and $\nabla R$,
rather than the Weyl tensor expressions $W$ and $\nabla W$, since in
explicit computations $W_{abcd}$ must anyway be expressed in terms of
$R_{abcd}$ and $\Lambda$ (cf.~Remark~\ref{rmk:weyl}). For the second
order operators:
\begin{center}
\begin{tabular}{@{}CR|C|C|C@{}}
\toprule
                   &            & \yd{1,1}\,Y_{ab}                             & \yd{2,1}\,C_{a:bc}                    & \yd{1,1,1}\,\Xi_{abc}             \\[0.5ex] \hline &&&& \\[-2ex] 
\nabla             & \yd{1}     & \yd{2,1}\,\sigma^1+ \yd{1,1,1}\,\hat{\sigma}^1 & \yd{1,1}\,\rho^1                      & \yd{1,1}\,\rho^2                  \\[0.5ex] \hline &&&& \\[-2ex] 
\square            & \mathbb{R} & \yd{1,1}\,Q^1                                & \yd{2,1}\,P^1                         & \yd{1,1,1}\,\hat{P}^1             \\
\nabla\nabla       & \yd{2}     & \yd{1,1}\,Q^2
& \yd{2,1}\,P^{2,3}+\yd{1,1,1}\,\hat{P}^5 & \yd{2,1}\,P^7+\yd{1,1,1}\,\hat{P}^2 \\
R                  & \yd{2,2}   & \yd{1,1}\,Q^3                                & \yd{2,1}\,P^{4,5}+\yd{1,1,1}\,\hat{P}^6 & \yd{2,1}\,P^8+\yd{1,1,1}\,\hat{P}^3 \\
\Lambda            & \mathbb{R} & \yd{1,1}\,Q^4                                & \yd{2,1}\,P^{6}                         & \yd{1,1,1}\,\hat{P}^4
\\ \bottomrule
\end{tabular}
\end{center}
And the same for the third order operators:
\begin{center}
\begin{tabular}{@{}CR|C|CCC@{}}
\toprule
                   &            &                                            & \multicolumn{3}{C}{(\yd{1}\:\yd{1,1}) \nabla_a Y_{bc}}                          \\[1ex] \cline{4-6} & & & & & \\[-2ex]
                   &            & \yd{1,1}\,Y_{ab}                             & \yd{1}\,\delta Y & \yd{2,1}\,\CKY[Y]                     & \yd{1,1,1}\,\d Y                 \\[0.5ex] \hline &&&&& \\[-2ex] 
\square\nabla      & \yd{1}     & \yd{2,1}\,T^1+\yd{1,1,1}\,\hat{T}^1         &              &                                     &                                  \\
\nabla\nabla\nabla & \yd{3}     & \yd{2,1}\,T^2                                &              &                                     &                                  \\
\nabla R           & \yd{3,2}   & \yd{2,1}\,T^7                                &              &                                     &                                  \\
R                  & \yd{2,2}     &                                            & \yd{2,1}\,T^6  & \yd{2,1}\,T^{3,4}+\yd{1,1,1}\,\hat{T}^2 & \yd{2,1}\,T^5+\yd{1,1,1}\,\hat{T}^3  \\
\Lambda            & \mathbb{R}   &                                            &                & \yd{2,1}\,T^8                       & \yd{1,1,1}\,\hat{T}^4  \\
	\bottomrule
\end{tabular}
\end{center}

\begin{remark} \label{rmk:littlewood}\em
For sufficiently large $n$ (larger than the total number of boxes
involved in the product, for instance) these product tables can be
checked using Littlewood's rule~\cite[Thm.I]{littlewood-ortho}
(cf.~\cite{king-ortho, koike-terada-ortho} for complete proofs and
modern generalizations). In our case, we need to take $n\ge 8$ for
Littlewood's rule to apply for all products that we are interested in.
For small values of $n$, namely $4<n<8$, the above multiplication tables
can be checked using computer algebra~\cite{sagemath,lie-cas}.
We have found
exceptions only in dimension $n=6$. Basically, the 3-form representation
$\yd{1,1,1}$ becomes reducible for $n=6$ in Lorentzian signature, and
splits into the eigen-subspaces of the Hodge ${*}$ operator. Thus the
tensor product tables need to be rewritten taking that into account. In
addition, the multiplicity of $\yd{2,1}$ in the product $\yd{2,2}\;
\yd{2,1}$ is $3$ instead of $2$. Thus, we cannot claim that our lists of
operators give complete bases in dimension $n=6$.
\end{remark}

The simplest possibilities are for the $\sigma$ and $\rho$ operators:
\begin{align}
	\sigma^1_{a:bc}[Y] &= \CKY_{a:bc}[Y] , \\
	\hat{\sigma}^1_{abc}[Y] &= (\d Y)_{abc} ; \\
	\rho^1_{ab}[C] &= \nabla^c C_{c:ab}[Y] , \\
	\rho^2_{ab}[\Xi] &= (\delta \Xi)_{ab} \coloneq \nabla^c \Xi_{cab} .
\end{align}

The definitions of the operators $P^i$, $\hat P^i$, $T^i$, $\hat{T}^i$
and $Q^i$ are somewhat lengthy and their precise form can be found in
Appendix~\ref{app:tensors}. What is salient about these operators are
their composition rules, which are reported in the next paragraph.

\paragraph{(b)}
The left-hand side in the schematic identity~\eqref{eq:gen-ccyk-P} is
parametrized by the coefficients in front of the $P^i$, $\hat{P}^i$, and
$\rho^i$ terms (up to rescaling, there is a unique possibility for each
of the $\sigma^1$ and $\hat{\sigma}^1$ operators, so their coefficients
can be absorbed into those of the $P^i$ and $\hat{P}^i$, respectively),
while the right-hand side is parametrized by the coefficients of $T^i$
and $\hat{T}^i$. By explicitly computing all the relevant operator
compositions, we end up with the following matrix identities:
\begin{subequations}
\begin{equation} \scriptsize \label{eq:PT-matrix}
	\begin{bmatrix}[l]
		P^1 \circ \CKY \\
		P^2 \circ \CKY \\
		P^3 \circ \CKY \\
		P^4 \circ \CKY \\
		P^5 \circ \CKY \\
		P^6 \circ \CKY \\
			\cmidrule(lr){1-1}
		P^7 \circ \d \\
		P^8 \circ \d \\
			\cmidrule(lr){1-1}
		\sigma^1 \circ \rho^1 \circ \CKY \\
		\sigma^1 \circ \rho^2 \circ \d
	\end{bmatrix}
	= \begin{bmatrix}
		1 & 0 & 0 & 0 & 0 & 0 & 0 & 0 \\
		0 & 3\frac{(n-2)}{(n-1)} & 0 & 0 & 0 & -\frac{9}{2}(n-2) & 0 & 0 \\
		-2 & -3\frac{(n-4)}{(n-1)} & 0 & 2 & 3 & -\frac{3}{2}(n+8) & 3 & 0 \\
		0 & 0 & 1 & 0 & 0 & 0 & 0 & 0 \\
		0 & 0 & 0 & 1 & 0 & 0 & 0 & 0 \\
		0 & 0 & 0 & 0 & 0 & 0 & 0 & 1 \\
			\cmidrule(lr){1-8}
		2 & -6 & 1 & -2 & 0 & -3(n-1) & 6 & -\frac{12}{n-2} \\
		0 & 0 & 0 & 0 & 1 & 0 & 0 & 0 \\
			\cmidrule(lr){1-8}
		2 & 3\frac{(n-4)}{(n-1)} & 0 & -2 & -3 & \frac{3}{2}(n+8) & -3 & 0 \\
		2 & -6 & 1 & -2 & 0 & -3(n-1) & 6 & -\frac{12}{n-2}
	\end{bmatrix}
	\begin{bmatrix}
		T^1 \\
		T^2 \\
		T^3 \\
		T^4 \\
		T^5 \\
		T^6 \\
		T^7 \\
		T^8
	\end{bmatrix} ,
\end{equation}

\begin{equation} \label{eq:hatPT-matrix}
	\begin{bmatrix}[l]
		\hat{P}^1 \circ \d \\
		\hat{P}^2 \circ \d \\
		\hat{P}^3 \circ \d \\
		\hat{P}^4 \circ \d \\
			\cmidrule(lr){1-1}
		\hat{P}^5 \circ \CKY \\
		\hat{P}^6 \circ \CKY \\
			\cmidrule(lr){1-1}
		\hat{\sigma}^1 \circ \rho^1 \circ \CYK \\
		\hat{\sigma}^1 \circ \rho^2 \circ \d
	\end{bmatrix}
	= \begin{bmatrix}
		1 & 0 & 0 & 0 \\
		-2 & 0 & -2 & \frac{12}{n-2} \\
		0 & 0 & 1 & 0 \\
		0 & 0 & 0 & 1 \\
			\cmidrule(lr){1-4}
		1 & -1 & \frac{1}{2} & 0 \\
		0 & 1 & 0 & 0 \\
			\cmidrule(lr){1-4}
		2 & -2 & 1 & 0 \\
		2 & 0 & 2 & -\frac{12}{n-2}
	\end{bmatrix}
	\begin{bmatrix}
		\hat{T}^1 \\
		\hat{T}^2 \\
		\hat{T}^3 \\
		\hat{T}^4
	\end{bmatrix} .
\end{equation}
\end{subequations}

The goal now is to find a set of coefficients for the left-hand side
of the propagation identity~\eqref{eq:gen-ccyk-P} such that the
right-hand side is identically zero. It is easy to see that such
coefficients correspond exactly to left null-vectors of the matrix
$\mathcal{C}$ defined by the identity
\begin{equation}
	\begin{bmatrix}[l|l]
		P^{1,2,3,4,5,6} \circ \CKY & \\ \cmidrule(lr){1-1}
		P^{7,8} \circ \d & \\ \cmidrule(lr){1-1} \cmidrule(lr){2-2}
		& \hat{P}^{1,2,3,4} \circ \d \\ \cmidrule(lr){2-2}
		& \hat{P}^{5,6} \circ \CKY \\ \cmidrule(lr){1-1} \cmidrule(lr){2-2}
		\sigma^1\circ \rho^1 \circ \CKY & \hat{\sigma}^1 \circ
		\rho^1 \circ \CKY \\
		\sigma^1\circ \rho^2 \circ \d & \hat{\sigma}^1 \circ \rho^2 \circ \d
	\end{bmatrix}
	= \mathcal{C}
	\begin{bmatrix}[c|c]
		T^{1,2,3,4,5,6,7,8} & \\ \cmidrule(lr){1-1}  \cmidrule(lr){2-2} 
		& \hat{T}^{1,2,3,4}
	\end{bmatrix} .
\end{equation}
The matrix $\mathcal{C}$ can be easily constructed from the blocks of
the coefficient matrices in~\eqref{eq:PT-matrix}
and~\eqref{eq:hatPT-matrix}. It is now a matter of basic linear algebra
to check that (for $n>2$) $\mathcal{C}$ has a $5$-dimensional left
null-space. The general $5$-parameter solution for a left-null vector
recorded in Theorem~\ref{thm:ccyk-propeq}.

\paragraph{(c)}
It remains to check that the possible $P$ and $Q$ operators in the
propagation identity~\eqref{eq:gen-ccyk} are generalized normally
hyperbolic (according to the definition given in
Section~\ref{sec:propeq}). To that end, only the coefficients of the
operators that contribute to the principal symbol are important (those
actually of second differential order). For $P$, these are
$P^{1,2,3}$, $P^7$, $\hat{P}^{1,2}$ and $\hat{P}^5$, while for $Q$
these are $Q^{1,2}$. Note that, explicitly working out the operator
compositions in~\eqref{eq:gen-ccyk-Q} gives the following relation
between the coefficients of $\rho^{1,2}$ and those of $Q^{1,2,3,4}$:
\begin{equation} \label{eq:rhoQ-matrix}
	\begin{bmatrix}[l]
		\rho^1 \circ \CKY \\
		\rho^2 \circ \d
	\end{bmatrix}
	= \begin{bmatrix}
		2 & -\frac{(n-4)}{(n-1)} & -1 & \frac{4}{n-2} \\
		2 & 2 & 2 & -\frac{8}{n-2}
	\end{bmatrix}
	\begin{bmatrix}
		Q^1 \\
		Q^2 \\
		Q^3 \\
		Q^4
	\end{bmatrix} .
\end{equation}

Checking generalized normal-hyperbolicity of an operator comes down to
pa\-ra\-met\-ri\-zing an ansatz for the adjugate operator and checking whether
the key identity~\eqref{eq:adjugate} can be satisfied for some values of the
parameters. In Appendix~\ref{app:hyp-conds}, we have recorded the
necessary and sufficient conditions for generalized normal hyperbolicity
of $Q$ in Lemma~\ref{lem:hyp-Q-ccyk} and of $P$ in
Lemma~\ref{lem:hyp-P-ccyk}. Applied to the family of identities obtained
in step (b), we find that for $n>2$ (with the exception of $n=4$) a
generic element of the family both $P$ and $Q$ are generalized normally
hyperbolic, with the exceptional values of the parameters consisting of
the union of certain hyperplanes. The full result is recorded in
Theorem~\ref{thm:ccyk-propeq}.

\bigskip

We are now ready to state the main result of this section in
\begin{theorem} \label{thm:ccyk-propeq}
For $n>2$, there exists the following identity of the
form~\eqref{eq:gen-ccyk} (with vanishing right-hand side):
\begin{equation} \label{eq:ccyk-propeq}
	P = \begin{bmatrix}
		\displaystyle
		\sum_{i=1}^{6} p_i P^i &
			p_7 P^7 + p_8 P^8 \\
		\hat{p}_5 \hat{P}^5 + \hat{p}_6 \hat{P}^6 &
			\displaystyle
			\sum_{i=1}^{4} \hat{p}_i \hat{P}^i
	\end{bmatrix} ,
	\quad
	Q = r_1 \rho^1\circ \CYK + r_2 \rho^2 \circ \d ,
\end{equation}
with
\begin{equation}
\begin{aligned}
	p_1 &= x  , &
	\hat{p}_1 &= y  , \\
	p_2 &= -\frac{x}{n-2}  , &
	\hat{p}_2 &= \frac{x_1+y+z}{2} , \\
	p_3 &= \frac{x}{3} - y_1 , &
	\hat{p}_3 &= \frac{2y+z}{2} - y_1 , \\
	p_4 &= \frac{x}{6}  , &
	\hat{p}_4 &= -\frac{6(y+z)}{n-2} + \frac{12}{n-2} y_1 , \\
	p_5 &= -x  , &
	\hat{p}_5 &= z , \\
	p_6 &= -\frac{2x}{n-2} , &
	\hat{p}_6 &= z - 2 y_1 , \\
	p_7 &= -\frac{x}{6} - \frac{x_1}{2} - y_1 , &
	r_1 &= y_1 , \\
	p_8 &= -x  , &
	r_2 &= -\frac{x_1}{2} - y_1 ,
\end{aligned}
\end{equation}
where $x$, $y$, $z$, $x_1$ and $y_1$ are free parameters. Moreover, for
$n>2$, necessary and sufficient conditions on these free parameters for
the generalized normal hyperbolicity of $Q$ consist of
\begin{subequations} \label{eq:ccyk-propeq-hyp}
\begin{equation}\label{eq:hyp-par1}
	x_1 \ne 0 , \quad
	y_1 \ne 0 ,
\end{equation}
and for $P$ they consist of
\begin{equation}\label{eq:hyp-par2}
\begin{gathered}
	x \ne 0, \quad
	(n-4) x \ne 0 , \quad
	y \ne 0 , \\
	x_1 (3z-x - 6y_1) \ne 0 ,
	\quad \text{and} \quad
	y_1 \ne 0 .
\end{gathered}
\end{equation}
\end{subequations}
For $n>4$ and $n\ne 6$, the family of operators $P$ and $Q$ in
\eqref{eq:ccyk-propeq} is the most general one of its kind.
\end{theorem}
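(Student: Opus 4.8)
The plan is to follow the three-step strategy (a)--(c) described above, taking as given the operator-composition matrices \eqref{eq:PT-matrix} and \eqref{eq:hatPT-matrix} produced in step (a). First I would assemble the $16\times 12$ matrix $\mathcal{C}$ by stacking the rows of those two matrices exactly as the block identity defining $\mathcal{C}$ prescribes, so that its two column blocks correspond to the $\yd{2,1}$-valued coefficients $T^i$ and the $\yd{1,1,1}$-valued coefficients $\hat T^i$ on the right-hand side of \eqref{eq:gen-ccyk-P}. By construction, a choice of coefficients for the $P^i$, $\hat P^i$ and $\rho^i$ terms on the left-hand side of \eqref{eq:gen-ccyk-P} yields a vanishing right-hand side precisely when the corresponding covector of coefficients is a left null-vector of $\mathcal{C}$. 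Thus the existence assertion reduces to computing the left null-space of $\mathcal{C}$.

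Next I would carry out this purely linear-algebraic computation over the field $\mathbb{Q}(n)$. Since the entries of $\mathcal{C}$ are rational functions of $n$ with denominators $(n-1)$ and $(n-2)$, I would clear denominators and row-reduce, verifying that the rank of $\mathcal{C}$ is exactly $11$ and hence that its left null-space is five-dimensional for every $n>2$, the excluded values being precisely those at which these denominators degenerate. Parametrizing the null-space by the five free scalars $x,y,z,x_1,y_1$ and reading off the entries produces the explicit coefficients $p_i,\hat p_i$ of \eqref{eq:ccyk-propeq}; a direct back-substitution into \eqref{eq:PT-matrix}--\eqref{eq:hatPT-matrix} confirms that every $T^i,\hat T^i$ on the right-hand side of \eqref{eq:gen-ccyk-P} cancels. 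The operator $Q$ is then obtained from its defining identity \eqref{eq:gen-ccyk-Q}: the coefficients $r_1,r_2$ of $\rho^1\circ\CYK$ and $\rho^2\circ\d$ are read off the same null-vector, and \eqref{eq:rhoQ-matrix} converts them into the $Q^i$ building $Q$.

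For the hyperbolicity statement I would invoke the appendix results directly. The operators entering the principal symbols are exactly $P^{1,2,3},P^7,\hat P^{1,2},\hat P^5$ for $P$ and $Q^{1,2}$ for $Q$, the latter recovered from $r_1,r_2$ through \eqref{eq:rhoQ-matrix}. Substituting the explicit parameter formulas of \eqref{eq:ccyk-propeq} into the necessary-and-sufficient criteria of Lemma~\ref{lem:hyp-Q-ccyk} and Lemma~\ref{lem:hyp-P-ccyk} and simplifying, the conditions collapse to the open inequalities \eqref{eq:hyp-par1} for $Q$ and \eqref{eq:hyp-par2} for $P$. Geometrically these describe the complement of a finite union of hyperplanes in the parameter space, so that for a generic choice of $x,y,z,x_1,y_1$ both $P$ and $Q$ are generalized normally hyperbolic.

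Finally, the maximality claim for $n>4$, $n\ne 6$ rests not on any further computation but on the completeness of the operator bases tabulated above: the representation-theoretic product tables must enumerate \emph{every} covariant second-order $P$ and third-order $T$ operator between the relevant irreducible bundles, so that the five-dimensional left null-space of $\mathcal{C}$ genuinely exhausts all identities of the form \eqref{eq:gen-ccyk}. This is where the only real subtlety lies: as explained in Remark~\ref{rmk:littlewood}, Littlewood's rule certifies the tables for $n\ge 8$ and computer algebra does so for $4<n<8$, but the argument fails at $n=6$ (where $\yd{1,1,1}$ splits under Hodge duality and the multiplicity of $\yd{2,1}$ in $\yd{2,2}\,\yd{2,1}$ rises to three), so the bases are incomplete there and the maximality assertion must exclude $n=6$. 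I expect this base-completeness bookkeeping, rather than the null-space computation or the hyperbolicity checks, to be the main obstacle to a fully rigorous argument.
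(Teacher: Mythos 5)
Your proposal is correct and follows essentially the same route as the paper: the authors' own proof of Theorem~\ref{thm:ccyk-propeq} is precisely the three-step argument you describe — assembling $\mathcal{C}$ from~\eqref{eq:PT-matrix} and~\eqref{eq:hatPT-matrix}, computing its $5$-dimensional left null-space, checking hyperbolicity against Lemmas~\ref{lem:hyp-Q-ccyk} and~\ref{lem:hyp-P-ccyk}, and deriving maximality from the completeness of the representation-theoretic bases with the $n=6$ exception of Remark~\ref{rmk:littlewood}. You also correctly identify the base-completeness bookkeeping as the only genuinely delicate point.
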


\begin{proof}
The theorem follows from the calculations discussed in steps (a), (b)
and (c) above. As explained in Remark~\ref{rmk:littlewood}, the
structure of the tensor product decomposition tables from step (a)
allows us to claim that we have carried out an exhaustive search in all
dimensions $n>4$, with the exception of $n=6$.
\end{proof}

\begin{remark}\em
The propagation identity from the above theorem can now be used to
construct cCYKID conditions, for generic values of the free parameters.
The free parameters can also be chosen to simplify $P$ in some ways. For
instance, we can make its principal symbol block diagonal ($p_7 =
\hat{p}_5 = 0$) by setting $z=0$ and $x_1 = -(x+6y_1)/3$. But we
cannot make $P$ either lower ($p_7 = p_8 = 0$) or upper ($\hat{p}_5
= \hat{p}_6 = 0$) block triangular without violating at least one
of the hyperbolicity inequalities. As a consequence, it is also
impossible to decouple the $\CKY[Y] = 0$ equation from the $\d Y = 0$
equation ($p_7 = p_8 = r_2 = 0$).
\end{remark}

The exhaustive nature of the search which produced
Theorem~\ref{thm:ccyk-propeq} then leads to the following
\begin{corollary} \label{cor:cyk-propeq-nogo}
In dimensions $n>4$, $n\ne 6$, there does not exist a propagation
identity for the equation $\CKY[Y] = 0$ with operators $P$ and $Q$ of
total order 2.
\end{corollary}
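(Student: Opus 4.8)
The plan is to rerun the exhaustive search behind Theorem~\ref{thm:ccyk-propeq}, specialized to the case in which the only propagated equation is $E[Y]=\CKY[Y]$, with $\d Y$ left unconstrained. Now $\psi=\CKY[Y]$ takes values in the single irreducible bundle $\yd{2,1}$, so the propagation operator $P$ must map $\yd{2,1}\to\yd{2,1}$: only $P^1,\dots,P^6$ are available, and we must set $p_7=p_8=0$ (no $\d Y$ input) together with $\hat p_1=\dots=\hat p_6=0$ (no $\yd{1,1,1}$ output). Likewise the only first-order operator $\yd{2,1}\to\yd{1,1}$ is $\rho^1$ and the only first-order $\yd{1,1}\to\yd{2,1}$ is $\sigma^1=\CKY$, so $\rho=r_1\rho^1$ (hence $r_2=0$), $\sigma=\sigma^1$, and $Q=r_1\,\rho^1\circ\CKY$. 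In this reduced problem the identity has no $\yd{1,1,1}$ component at all, and the sole constraint is the vanishing of the right-hand side in the $\yd{2,1}$ sector, i.e.\ of the relevant rows of~\eqref{eq:PT-matrix}.

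First I would solve this reduced linear system. Keeping only the rows $P^1\circ\CKY,\dots,P^6\circ\CKY$ and $\sigma^1\circ\rho^1\circ\CKY$ of the coefficient matrix in~\eqref{eq:PT-matrix}, and demanding that $\sum_{i=1}^6 p_i\,(P^i\circ\CKY)-r_1\,(\sigma^1\circ\rho^1\circ\CKY)$ vanish in the $T^j$ basis, one reads off $p_2=p_4=p_5=p_6=0$ and $p_3=-r_1$, whereupon the $T^1$ component forces $p_1=2p_3+2r_1=0$. Up to the overall scale $r_1$ this leaves the single candidate $P=-r_1\,P^3$, $Q=r_1\,\rho^1\circ\CKY$, in which the wave-operator term $\square=P^1$ is \emph{absent} from $P$.

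Next I would test generalized normal hyperbolicity. The operator $Q$ is unproblematic: a direct symbol computation gives $\sigma_p(Q)_{bc}[Y]=r_1\bigl(2p^2Y_{bc}+\tfrac{n-4}{n-1}(p_bw_c-p_cw_b)\bigr)$ with $w_c=p^aY_{ac}$, whose eigenvalues on $2$-forms are $2r_1p^2$ and $r_1p^2\tfrac{3(n-2)}{n-1}$, so $\det\sigma_p(Q)$ is a nonzero multiple of $(p^2)^{\binom{n}{2}}$ for $n>2$ and $Q$ is generalized normally hyperbolic. The obstruction lies entirely with $P$. Since the identity forces $p_1=0$, the symbol $\sigma_p(P)=-r_1\,\sigma_p(P^3)$ is a pure $\nabla\nabla$-contraction with no $p^2\,\mathrm{Id}$ term: projecting $p_ap_b\,C_{c:de}$ back onto $\yd{2,1}$ requires exactly one contraction, and, $C$ being traceless, the only nonvanishing contractions pair a $p$ with an index of $C$, so every surviving term carries a factor of the form $p^eC_{e:\cdots}$. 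Consequently any $\yd{2,1}$ tensor built solely from directions orthogonal to a fixed non-null $p$ is annihilated by $\sigma_p(P^3)$; such tensors are nonzero once $n-1\geq 4$, i.e.\ for all $n>4$. Hence $\sigma_p(P)$ is singular at a non-null covector, and by the necessary condition of Section~\ref{sec:propeq} (equivalently Lemma~\ref{lem:not-normhyp}) $P$ cannot be a propagation operator.

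The main obstacle is this last step: showing that the forced vanishing $p_1=0$ genuinely destroys hyperbolicity of the reduced block $P=-r_1P^3$. The clean mechanism is the unavoidable single $p$-contraction above, which exhibits a null direction of $\sigma_p(P)$ at every non-null $p$ for $n>4$; alternatively one may invoke Lemma~\ref{lem:hyp-P-ccyk}, whose necessary conditions for this $\yd{2,1}\to\yd{2,1}$ block include $p_1\neq 0$. Granting this, no pair $(P,Q)$ of total order $2$ can furnish a propagation identity for $\CKY[Y]=0$ alone, and since the operator bases of step~(a) are complete precisely in dimensions $n>4$, $n\neq 6$ (Theorem~\ref{thm:ccyk-propeq}), the search is exhaustive there and the corollary follows.
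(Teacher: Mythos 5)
Your proposal is correct and follows essentially the same route as the paper: the corollary is obtained there by observing that the exhaustive search forces any decoupled identity to be (a multiple of) $-P^3 = \sigma^1\circ\rho^1\circ\CKY$ (equation~\eqref{eq:cyk-propeq2-old}), whose $P$-block has vanishing $P^1$ ($\square$) coefficient and therefore violates the hyperbolicity condition $u\neq 0$ of Lemma~\ref{lem:hyp-P-ccyk} (equivalently $x\neq 0$ in~\eqref{eq:hyp-par2}). Your direct exhibition of a kernel of $\sigma_p(P^3)$ at non-null $p$ is a nice self-contained verification of that same condition, but it is not a different argument.
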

Likely, dimension $n=6$ is not an exception to the corollary, but our
analysis would have to be extended to arrive at that conclusion
rigorously (cf.~Remark~\ref{rmk:littlewood}).

\subsection{Construction of cCYKID in dimension $n>4$} \label{sec:ccykid}

Let us denote by $C_{a:bc}$ the left-hand side
of~\eqref{eq:ccyk-combined}, the combined form of the cCYK operator.
Then we have the following integrability condition
\begin{multline} \label{eq:ccyk-ic}
	\nabla_{[d} C_{a]:bc}
		- \frac{g_{c[d} \nabla^e C_{a]:b}{}_e - g_{b[d} \nabla^e C_{a]:c}{}_e}{(n-2)} 
	\\
	=
	R_{da e[b} Y_{c]}{}^e
	+ 2 \Lambda \frac{g_{c[d} Y_{a]b} - g_{b[d} Y_{a]c}}{(n-2)^2}
	+ \frac{\left(g_{b[d} R_{a]cef}-g_{c[d} R_{a]bef}\right) Y^{ef}}{2 (n-2)}
	\\
	\eqcolon I_{da:bc}[Y] .
\end{multline}
The zeroth order operator $I_{da:bc}[Y]$ acting on $Y_{ab}$ is traceless
and antisymmetric in both groups of $:$ separated indices, but has no
other symmetries. It will be useful in giving the precise form of the
cCYKID conditions below.

\begin{theorem} \label{thm:ccykid}
Consider a globally hyperbolic Einstein $\Lambda$-vacuum Lorentzian ma\-ni\-fold,
$(M,g)$ of dimension $n>4$ with $R_{ab}=\frac{2\Lambda}{n-2}g_{ab}$, and
a Cauchy surface $\Sigma\subset M$. The necessary and sufficient
conditions yielding a set of \emph{closed conformal Killing-Yano initial
data} (cCYKID) for $Y_{ab}$ on $\Sigma$ are given by the following
equations, where we also indicate the provenance of each equation, and
each equality holds modulo the preceding ones.

\begin{subequations} \label{eq:ccykid}
\noindent
$\left.\frac{1}{3}(\CKY_{A:B0}[Y] + \frac{1}{2}(\d Y)_{AB0})\right|_\Sigma=0 \colon$
\begin{dmath}\label{eq:tfcdelectric}
	D_A Y_{B0} - \frac{1}{n-1} g_{AB} D^C Y_{C0} - \pi_A{}^C Y_{BC} = 0 \;,
\end{dmath}
$\left.\frac{1}{3}(\CKY_{A:BC}[Y] + \frac{1}{2}(\d Y)_{ABC})\right|_\Sigma=0 \colon$
\begin{dmath}\label{eq:tfcdmagnetic}
	D_A Y_{BC} - \frac{2}{n-2} g_{A[B|} D^D Y_{D|C]} \\
		+ 2\pi_{A[B} Y_{C]0}
		- \frac{2}{n-2} g_{A[B}
			(\pi g_{C]D} - \pi_{C]D}) Y^{D}{}_0 = 0 \;,
\end{dmath}
$\frac{(n-3)}{3(n-2)} \left.\nabla_0 \CKY_{(A:B)0}[Y]\right|_\Sigma=0 \colon$
\begin{dmath}\label{eq:CD0CYKIDAB0}
	2 I_{0(A\colon B)0}[Y] \hiderel{=} 
	(D_{C}\pi_{AB} - D_{(A}\pi_{B)C}) Y^{C}{}_{0}
	\\
	+ \pi \pi_{(A}{}^{C} Y_{B)C} 
	+ \pi_{(A|C} \pi^{CD} Y_{D|B)}
	+ r_{(A}{}^{C} Y_{B)C} 
	= 0 \;,
\end{dmath}
$-\frac{1}{6} \left.\nabla_0 \CKY_{A:BC}[Y]\right|_\Sigma=0 \colon$
\begin{dmath}\label{eq:CD0CYKIDABC}
	-I_{0A\colon BC}[Y] \hiderel{=}
	(D_{[C}{\pi_{B]}{}^{E}}) Y_{A E}
	+(D_{[C|}{\pi_{A}{}^{E}}) Y_{|B] E}
	+(D^{E}{\pi_{A [B}}) Y_{C] E}
	+\frac{1}{2}\left( r_{B C A}{}^{E} + 2\pi_{A [B} \pi_{C]}{}^{E} \right) Y_{E0}
	+ \left( r_{A [B} -\pi_{A E} \pi^{E}{}_{[B} +\pi \pi_{A [B}
		-\frac{2\Lambda}{n-2} g_{A[B} \right) Y_{C]0}
	= 0 \;,
\end{dmath}
$\frac{(n-3)}{6}\left.\nabla_0 (\d Y)_{AB0}\right|_\Sigma=0 \colon$
\begin{dmath}\label{eq:CD0dY0BC}
	2(n-2) I_{0[A:B]0}[Y] \hiderel{=}
	\frac{1}{2} \left( 2\pi_A{}^C \pi_B{}^D + r_{AB}{}^{CD}\right) Y_{CD}
	- (n-2) (\pi^{CD} \pi_{C[A} Y_{B]D}
		- \pi \pi_{[A}{}^C Y_{B]C} - r_{[A}{}^C Y_{B]C})
	+ (n-4) D_{[A} \pi_{B]}{}^C Y_{C0}
	+ \frac{2 (n-3)}{(n-2)}  \Lambda Y_{AB}
	= 0 \;.
\end{dmath}
\end{subequations}
\end{theorem}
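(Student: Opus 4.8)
The plan is to specialize the general mechanism of Proposition~\ref{prp:propeq} to the cCYK operator and then to render the abstract spatial operator $E^\Sigma$ fully explicit. I take $E[Y]$ to be the cCYK operator, whose vanishing is equivalent to $C_{a:bc}=0$ by~\eqref{eq:ccyk-combined}, and I fix once and for all a choice of the free parameters $x,y,z,x_1,y_1$ for which Theorem~\ref{thm:ccyk-propeq} guarantees that both $P$ and $Q$ are generalized normally hyperbolic, so that Proposition~\ref{prp:propeq} applies with $k=l=2$. Its conclusion is that, given $Q[Y]=0$ on $M$, the bulk equation $C_{a:bc}=0$ is equivalent to the two initial conditions $\left.C_{a:bc}\right|_\Sigma=0$ and $\left.\nabla_0 C_{a:bc}\right|_\Sigma=0$, and that the cCYKID system is precisely the purely spatial content of these. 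The proof thus reduces to performing the spacetime split of these two conditions in the Gaussian-normal foliation of Section~\ref{sec:kid}, and to sorting the resulting components into type~(a) relations (which determine normal derivatives of $Y$) and type~(b) relations (which constrain the data and become the five displayed equations).

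First I would treat the zeroth-order condition $\left.C_{a:bc}\right|_\Sigma=0$, splitting it by the number of normal indices. Since $C_{a:bc}$ is the trace-free part of $\nabla_a Y_{bc}$, the components $C_{0:BC}$ and $C_{0:0C}$ carry $\nabla_0 Y_{BC}$ and $\nabla_0 Y_{0C}$; being of type~(a), they determine the entire normal derivative $\left.\nabla_0 Y\right|_\Sigma$ in terms of $\left.Y\right|_\Sigma$ and its tangential derivatives, so that the genuine free datum is $\left.Y\right|_\Sigma$ alone. The spatially-indexed components $C_{A:B0}$ and $C_{A:BC}$ contain no normal derivatives once the divergence $\nabla^d Y_{cd}$ is rewritten using $\nabla_0 Y_{00}=0$ and the type~(a) relations; converting $\nabla_A$ into $D_A$ plus extrinsic-curvature terms then yields exactly the electric and magnetic constraints~\eqref{eq:tfcdelectric} and~\eqref{eq:tfcdmagnetic}.

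Next I would handle $\left.\nabla_0 C_{a:bc}\right|_\Sigma=0$ by means of the integrability identity~\eqref{eq:ccyk-ic}. Writing $\nabla_0 C_{A:bc}=2\nabla_{[0}C_{A]:bc}+\nabla_A C_{0:bc}$ and using that $\left.C\right|_\Sigma=0$ (so $\left.\nabla_A C_{0:bc}\right|_\Sigma=0$ and the purely spatial divergences $\left.\nabla^D C_{A:BD}\right|_\Sigma$ vanish), the identity~\eqref{eq:ccyk-ic} expresses $\left.\nabla_0 C_{A:bc}\right|_\Sigma$ through the zeroth-order tensor $\left.I_{0A:bc}[Y]\right|_\Sigma$. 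Two bookkeeping points arise from the divergence correction in~\eqref{eq:ccyk-ic}: for the all-spatial components it reintroduces normally-indexed pieces $\nabla_0 C_{0:B0}$, which are set to zero by $Q[Y]=0$, while for the components carrying one normal index ($bc=B0$) it reintroduces $\nabla_0 C_{A:B0}$ itself, and resumming this self-referential term produces the factor $\tfrac{n-3}{n-2}$ visible in the provenance of~\eqref{eq:CD0CYKIDAB0} and~\eqref{eq:CD0dY0BC}. Projecting onto the irreducible $\CKY$ and $\d$ types then gives the three equations: the symmetric electric $\CKY$-projection is~\eqref{eq:CD0CYKIDAB0}, the magnetic $\CKY$-projection is~\eqref{eq:CD0CYKIDABC}, and the antisymmetric $\d Y$-projection is~\eqref{eq:CD0dY0BC}. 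To reach the displayed right-hand sides one must finally expand $I_{0A:bc}[Y]$, trading the bulk Riemann tensor for the intrinsic curvature $r_{AB}$, $r_{ABCD}$, the extrinsic curvature $\pi_{AB}$ and its spatial derivatives, and $\Lambda$, via the Gauss--Codazzi relations together with the vacuum equations~\eqref{eq:ricci-split} (and, where normal derivatives of the geometry occur, the derived identities~\eqref{eq:driem}--\eqref{eq:dricc}).

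Two points then close the argument. For completeness, one checks that the remaining first-order components yield nothing new: the normally-indexed $\left.\nabla_0 C_{0:bc}\right|_\Sigma$ are, by the second identity of~\eqref{eq:ccyk-propeq} which exhibits $Q[Y]$ as a combination of divergences of $C_{a:bc}$, controlled by $\left.Q[Y]\right|_\Sigma$ and so vanish once $Q[Y]=0$ is imposed; the all-spatial $\d$-projection $\left.\nabla_0(\d Y)_{ABC}\right|_\Sigma$ is made automatic by $\d^2 Y=0$ together with the zeroth-order conditions, which is why only three (rather than four) first-order equations appear; and the antisymmetric part of the $\CKY_{A:B0}$-projection is tied by the cyclic identity $\CKY_{[a:bc]}=0$ to a normally-indexed component and is therefore also absorbed. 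Necessity and sufficiency then follow formally from Proposition~\ref{prp:propeq}: a bulk cCYK form has $C_{a:bc}=0$, so all restrictions vanish and~\eqref{eq:tfcdelectric}--\eqref{eq:CD0dY0BC} hold; conversely, for data satisfying these with $\left.\nabla_0 Y\right|_\Sigma$ reconstructed from the type~(a) relations, solving $Q[Y]=0$ and invoking the bijection of Proposition~\ref{prp:propeq} forces $C_{a:bc}=0$ on $M$. The main obstacle I anticipate is the last reduction of the preceding paragraph, namely the lengthy but mechanical splitting of $I_{0A:bc}[Y]$ into intrinsic and extrinsic spatial data, together with the careful tracking needed to verify that each equation holds only modulo the preceding ones and modulo $Q[Y]=0$.
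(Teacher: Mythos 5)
Your proposal is correct and follows essentially the same route as the paper's proof: split the cCYK system into normal (type (a)) components that determine $\nabla_0 Y$ and tangential (type (b)) components that become \eqref{eq:tfcdelectric}--\eqref{eq:tfcdmagnetic}, shortcut the first-order conditions via the integrability identity \eqref{eq:ccyk-ic} so that they reduce to the zeroth-order tensor $I_{0a:bc}[Y]$, and invoke Theorem~\ref{thm:ccyk-propeq} together with Proposition~\ref{prp:propeq} for necessity and sufficiency. The only cosmetic differences are in how the redundant components are dispatched --- the paper identifies $I_{0[A:BC]}[Y]$ with a $D$-derivative of \eqref{eq:tfcdelectric} rather than appealing to $\d\d Y=0$, and argues that $\nabla_0\CKY_{0:0C}[Y]=\nabla_0\CKY_{0:BC}[Y]=0$ merely determine $\nabla_0^2 Y$ (consistently with $Q[Y]=0$ because $Q$ factors through the cCYK system) rather than being directly forced by $Q[Y]=0$ --- and these do not change the substance.
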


\begin{proof}
First, we compute the split form of the independent and non-trivial
components of these $\CKY[Y]$ and $\d Y$ operators:
\begin{dgroup}\label{eq:cky-dy-split}
\begin{dmath}\label{eq:cky00C}
	\CKY_{0\colon 0C}[Y]
	= \frac{3}{n-1}\bigg( 
		2 \pi^{F}{}_{[F} Y_{C]0} -  
		D_{F}Y_{C}{}^{F} - 
		(n-2)\nabla_{0}Y_{C0}
		\bigg),
\end{dmath}
\begin{dmath}\label{eq:cky0BC}
	\CKY_{0\colon BC}[Y]
	= 2\CKY_{[B\colon C]0}[Y]
	= 2(\pi_{[B}{}^{A} Y_{C]A}
	- D_{[B}Y_{C]0}
	+ \nabla_{0}Y_{BC}),
\end{dmath}
\begin{dmath}\label{eq:ckyAB0}
	\CKY_{(A\colon B)0}[Y]
	= \frac{3}{2} \left( \bCK_{AB}[Y_{\cdot 0}]
		- 2\pi_{(A}{}^{D} Y_{B)D} \right) ,
\end{dmath}
\begin{dmath}\label{eq:ckyABC}
	\CKY_{A\colon BC}[Y]
	= \bCKY_{A:BC}[Y] \\
	+ 6 \pi_{A[B} Y_{C]0}
	- \frac{6}{n-2} g_{A[B} (\pi g_{C]D} - \pi_{C]D}) Y^{D}{}_0
	+ \frac{2}{n-2} g_{A[B} \CKY_{0\colon 0|C]}[Y] ,
\end{dmath}
\begin{dmath}\label{eq:dyAB0}
	(\d Y)_{AB0}
	= 2 ( 2 D_{[A}Y_{B]0} - 2 \pi_{[A}{}^{C} Y_{B]C} + \nabla_{0}Y_{AB} ) ,
\end{dmath}
\begin{dmath}\label{eq:dyABC}
	(\d Y)_{ABC}
	= (\bd Y)_{ABC} ,
\end{dmath}
\end{dgroup}
where we have used $\bCK$, $\bCKY$, $\bd$ to denote the
$(n-1)$-dimensional versions of the operators defined in~\eqref{eq:ck},
\eqref{eq:cky-def} and~\eqref{eq:dR-def} respectively. We will use
$\CKY_{0:0C}[Y]=0$ and $\CKY_{0:BC}[Y]=0$ to systematically eliminate
$\nabla_0$ derivatives of $Y_{A0}$ and $Y_{AB}$ from the rest of the
calculations. Specifically, this results in the substitutions
\begin{dgroup}\label{eq:cdtime}
\begin{dmath}\label{eq:cdtime-electric}
	\nabla_{0}Y_{A0}
	= \frac{1}{n-2}\left(
			D^{C}Y_{CA} -\pi_{A}{}^{C} Y_{C0} + \pi Y_{A0}
		\right) ,
\end{dmath}
\begin{dmath}\label{eq:cdtime-magnetic}
	\nabla_{0}Y_{AB}
	= D_{[A}Y_{B]0} - \pi_{[B}{}^{C} Y_{A]C} .
\end{dmath}
\end{dgroup}
Similarly, we will use the Einstein equations~\eqref{eq:ricci-split} to
systematically eliminate $\nabla_0 \pi_{AB}$, $\nabla^B \pi_{AB}$ and
the spatial Ricci scalar $r$ throughout our calculations.

Further, the conditions $\nabla_0 \CKY_{0:0C}[Y] = \nabla_0
\CKY_{0:BC}[Y] = 0$ will appear as part of setting to zero the
$\nabla_0$ derivatives of all the components in~\eqref{eq:cky-dy-split}.
However, they can always be satisfied by solving for $\nabla_0^2 Y_{A0}$
and $\nabla_0^2 Y_{AB}$, in analogy with~\eqref{eq:cdtime}. Strictly
speaking, these second order derivatives are constrained by the
propagation equation $Q_{ab}[Y] = 0$. But one of the requirements on the
propagation identities, imposed by Proposition~\ref{prp:propeq} and
verified by Theorem~\ref{thm:ccyk-propeq}, is that $Q[Y]$ factors
through the $\CKY[Y]$ and $\d Y$, which means that we can solve for
$\nabla_0^2 Y_{A0}$ and $\nabla_0^2 Y_{AB}$ just by
differentiating~\eqref{eq:cdtime} and the conditions $\nabla_0
\CKY_{0:0C}[Y] = \nabla_0 \CKY_{0:BC}[Y] = 0$ do not impose any
independent purely spatial constraints on the initial data for $Y_{ab}$.

Now, in the same way that we obtained the combined
form~\eqref{eq:ccyk-combined} of the spacetime cCYK operator, combining
$\CKY_{(A:B)0}[Y]=0$ and $(\d Y)_{AB0}=0$ immediately
gives~\eqref{eq:tfcdelectric}, while combining $\CKY_{A:BC}[Y]=0$ and
$(\d Y)_{ABC}=0$ immediately gives~\eqref{eq:tfcdmagnetic}, the first
two cCYKID conditions.

It now remains to take the $\nabla_0$ derivatives of the already
obtained~\eqref{eq:tfcdelectric} and~\eqref{eq:tfcdmagnetic},
systematically eliminate $\nabla_0 Y_{A0}$ and $\nabla_0 Y_{AB}$ as
above, and to simplify the results (meaning trying to eliminate as many
high order spatial derivatives of $Y$ as possible) using purely spatial
integrability conditions of the same equations. We can shortcut this
process by taking advantage of the spacetime integrability
condition~\eqref{eq:ccyk-ic}. Splitting that identity results in the
following relevant components:
\begin{dgroup}
\begin{dmath}
	\frac{(n-3)}{6(n-2)} \nabla_0 \CKY_{(A\colon B)0}[Y]
		+ O(\CKY[Y]) + O(\d Y)
	= I_{0(A:B)0}[Y] ,
\end{dmath}
\begin{dmath}
	\frac{1}{6} \nabla_0 \CKY_{A\colon BC}[Y]
		+ O(\CKY[Y]) + O(\d Y)
	= I_{0A:BC}[Y] ,
\end{dmath}
\begin{dmath}
	\frac{(n-3)}{12(n-2)} \nabla_0 (\d Y)_{AB0}[Y]
		+ O(\CKY[Y]) + O(\d Y)
	= I_{0[A:B]0}[Y] ,
\end{dmath}
\begin{dmath}
	\frac{1}{12} \nabla_0 (\d Y)_{ABC}[Y]
		+ O(\CKY[Y]) + O(\d Y)
	= I_{0[A:BC]}[Y] ,
\end{dmath}
\end{dgroup}
where $O(-)$ denotes linear dependence on the argument and any of its
spatial derivatives. Each right-hand side is already a zeroth order
operator acting on $Y$. Computing the components of $I_{0(A:B)0}[Y]$ and
$I_{0[A:B]0}[Y]$ directly gives us the desired cCYKID
conditions~\eqref{eq:CD0CYKIDAB0} and~\eqref{eq:CD0dY0BC}. Next, we find
\begin{dmath}\label{eq:CD0dYABC}
	I_{0[A\colon BC]}[Y] \hiderel{=}
	-Y_{[A}{}^{D} D_ {B}\pi_{C]D} \;,
\end{dmath}
which is of spatial tensor type $\yd{1,1,1}$. It happens to be
proportional to an integrability condition obtained by applying $D_C$
to~\eqref{eq:tfcdelectric} and projecting onto~\yd{1,1,1}. Another
integrability condition that we can get is the projection of the
derivative of~\eqref{eq:tfcdelectric} onto spatial tensors of
type~\yd{2,1}\ (in one of the two possible ways), which helps us simplify
the explicit expression for $I_{0A:BC}[Y]$. The resulting simplified
expression is our remaining cCYKID condition~\eqref{eq:CD0CYKIDABC}.

Finally, having established that the cCYKID conditions~\eqref{eq:ccykid}
are equivalent to $\CKY[Y]=0$, $(\d Y) = 0$, and $\nabla_0 \CKY[Y] = 0$,
$\nabla_0 (\d Y) = 0$ on $\Sigma$, a joint application of
Theorem~\ref{thm:ccyk-propeq} and Proposition~\ref{prp:propeq} completes
the proof.
\end{proof}

\begin{remark}\label{rmk:finite-type}\em
It is well-known that the Killing equation and its generalizations,
including Killing-Yano, Killing-St\"ackel equations and their conformal
versions, tend to have at most a finite number of linearly independent
solutions, with the maximal number achieved on maximally symmetric
backgrounds~\cite{Houri_2015}. This behavior is characteristic of PDEs
of so-called \emph{finite type}~\cite[Apx.A]{kh-compat}, which are
defined by the property that the Taylor expansion of a general solution
at any point admits only finitely many independent coefficients. The
finite type property depends only on the symbol of the equaiton (the
coefficients of the highest derivative terms) and it may be enough to
check a subsystem. In fact, considering more equations only increases
the constraints on the number of linearly independent solutions, while
adding subleading terms may only add integrability conditions, which do
the same. So, given that the cCYK equation is of finite type and that
our Theorem~\ref{thm:ccykid} establishes a bijection between solutions
to the cCYKID conditions~\eqref{eq:ccykid} on in initial data surface
$\Sigma$ and solutions to the cCYK equations on the domain of dependence
of $\Sigma$, the cCYKID equations should themselves be of finite type.
Indeed, this can be checked explicitly by noting that the symbol
of~\eqref{eq:tfcdmagnetic} coincides with the symbol of the
form~\eqref{eq:ccyk-combined} of the cCYK equation for $Y_{BC}$ on
$\Sigma$, while the symbol of the $(AB)$ symmetrization
of~\eqref{eq:tfcdelectric} coincides with the symbol of the CK
equation~\eqref{eq:ck} for $Y_{B0}$ on $\Sigma$, which is also
well-known to be of finite type.
\end{remark}

\subsection{Propagation identity in dimension $n=4$} \label{sec:ccyk-4dim}

The $5$-parameter propagation identity from
Theorem~\ref{thm:ccyk-propeq} can be specialized to dimension $n=4$, but
it fails one of the seven inequalities needed to establish hyperbolicity
of the $P$ and $Q$ operators, for any value of the parameters. More
specifically, it fails the inequality associated with the coefficient of
the operator $P^2$. Fortunately, we can use the same trick that was
used for the conformal Killing operator in~\cite{gpkh-ckid}. The idea is
to reduce $P^2$ from a second order to a first order operator by
decoupling a differential consequence of the cCYK system and propagating it
independently. Ultimately, instead of a second order one, we will find a
fourth order propagation identity for the cCYK system in dimension
$n=4$.

Note the following identity (valid in general dimension and without
restriction on the Ricci tensor $R_{ab}$):
\begin{dgroup*}
\begin{dmath} \label{eq:k-to-cky-ident}
	\K_{ab}[\delta Y]
	= \frac{(n-1)}{3(n-2)} S_{ab}[\CKY[Y]]
		+ \frac{(n-1)}{(n-2)} 2 R_{(a}{}^c Y_{b)c} ,
\end{dmath}
\begin{dmath}\label{eq:k-to-cky}
	\text{with} \quad
	S_{ab}[C] \coloneq 2\nabla^c C_{(a:b)c} .
\end{dmath}
\end{dgroup*}
The Ricci-dependent term vanishes for $\Lambda$-vacua, when $R_{ab} =
\frac{2\Lambda}{n-2} g_{ab}$. This identity can be used to factor
\begin{dgroup*}
\begin{dmath} \label{eq:p2-through-k}
	P^2_{a\colon bc}[\CKY[Y]] = -\frac{3(n-2)}{(n-1)}
		\bar{P}_{a:bc}[\K[\delta Y]] ,
\end{dmath}
\begin{dmath} \label{eq:barP-def}
	\text{with} \quad
	\bar{P}_{a\colon bc}[h]
	\coloneq 2\nabla_{[b} h_{c]a}
		- \frac{2}{n-1} g_{a[b} \nabla^d h_{c]d}
		+ \frac{2}{n-1} g_{a[b} \nabla_{c]} h_d{}^d .
\end{dmath}
\end{dgroup*}
But the Killing operator $\K_{ab}[v]$ satisfies its own propagation
identity~\eqref{eq:k-propeq}, which is compatible with that from
Theorem~\ref{thm:ccyk-propeq} in the sense that
\begin{equation} \label{eq:v-propeq-Y}
	{\textstyle (\square + \frac{2\Lambda}{n-2})} (\delta Y)_a
		+ \frac{(n-1)}{3(n-2)} R_{a}{}^{bcd} \CKY_{b:cd}[Y]
	= \frac{1}{y_1} \frac{(n-1)}{3(n-2)} (\delta Q[Y])_a ,
\end{equation}
with $Q[Y]$ defined by Theorem~\ref{thm:ccyk-propeq}. Writing the
propagation identity~\eqref{eq:k-propeq} in terms of $Y$, we get
\begin{dmath} \label{eq:k-propeq-Y}
	\square \K_{ab}[\delta Y]
		- \frac{(n-1)}{3(n-2)} 2R^c{}_{ab}{}^{d} S_{cd}[\CKY[Y]]
		+ \frac{(n-1)}{3(n-2)} \K_{ab}[R\cdot \CKY[Y]]
	= \frac{1}{y_1}\frac{(n-1)}{3(n-2)} \K_{ab}[\delta Q[Y]] ,
\end{dmath}
where $(R\cdot\CKY[Y])_a = R_a{}^{bcd} \CKY_{b:cd}$.

We can now use the same strategy as was used for the \emph{conformal
Killing} equation in~\cite[Sec.4]{gpkh-ckid} to prove
\begin{theorem} \label{thm:ccyk-propeq-4dim}
Under the same hypotheses as Theorem~\ref{thm:ccyk-propeq}, but for
$n=4$, there exists a 6-parameter family of 4th order propagation
identities of the form
\begin{dmath} \label{eq:ccyk-propeq-4dim}
	\square \begin{bmatrix}
		p_1 P^1 + (p_2-y_2) P^2 + p_3 P^3 & p_7 P^7 \\
		\hat{p}_5 \hat{P}^5 & \hat{p}_1 \hat{P}^1 + \hat{p}_2 \hat{P}^2
	\end{bmatrix}
	\begin{bmatrix}
		\CKY \\
		\d
	\end{bmatrix}
		+ \text{l.o.t}
	= \begin{bmatrix}
		\square \sigma^1 - \frac{y_2}{y_1} \bar{P}\circ \K\circ\delta \\
		\square \hat{\sigma}^1
	\end{bmatrix}
	\begin{bmatrix}
		r_1 \rho^1 & r_2 \rho^2
	\end{bmatrix}
	\begin{bmatrix}
		\CKY \\
		\d
	\end{bmatrix} ,
\end{dmath}
where {l.o.t} stands for operators of differential order three or lower
acting on the cCYK system, while the $p_i$, $\hat{p}_j$ and $r_k$
coefficients depend on the free parameters $x$, $y$, $z$, $x_1$, $y_1$
in the same way as in Theorem~\ref{thm:ccyk-propeq} and $y_2$ is an
additional free parameter. The necessary and sufficient conditions for
the generalized normal hyperbolicity of the corresponding $Q$ operator
are still
\begin{subequations} \label{eq:ccyk-propeq-hyp-4dim}
\begin{equation}
	x_1 \ne 0 , \quad
	y_1 \ne 0 ,
\end{equation}
and for the corresponding $P$ operator they are now
\begin{equation}
\begin{gathered}
	x \ne 0, \quad
	y_2 \ne 0 , \quad
	y \ne 0 , \\
	x_1 (3z-x - 6y_1) \ne 0 ,
	\quad \text{and} \quad
	y_1 \ne 0 .
\end{gathered}
\end{equation}
\end{subequations}
\end{theorem}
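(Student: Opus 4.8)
The plan is to adapt, to the cCYK setting, the order-raising trick of \cite[Sec.4]{gpkh-ckid}. First I would specialize the $5$-parameter identity of Theorem~\ref{thm:ccyk-propeq} to $n=4$. By the analysis of step~(c), the only obstruction to hyperbolicity of $P$ at $n=4$ is the operator $P^2$, whose associated inequality $(n-4)x\ne 0$ degenerates to $0\ne 0$. I would then compose the whole identity on the left with $\square$, turning it into a fourth-order identity, and split the coefficient of $P^2$ as $p_2=(p_2-y_2)+y_2$, where $y_2$ is a new free parameter. The piece $(p_2-y_2)P^2$ is kept as an explicit second-order operator, while the piece $y_2\,\square\,P^2\circ\CKY$ is singled out for reduction.

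The second step reduces this singled-out piece in differential order. Using the factorization \eqref{eq:p2-through-k} (whose derivation relies on the Ricci term of \eqref{eq:k-to-cky-ident} vanishing on $\Lambda$-vacua), I would rewrite $y_2\,\square\,P^2[\CKY[Y]] = -\tfrac{3(n-2)}{(n-1)}\,y_2\,\square\,\bar P[\K[\delta Y]]$ and commute $\square$ through the first-order operator $\bar P$ (the commutator being of order $\le 3$ on the cCYK system, hence l.o.t). I would then substitute the Killing propagation identity \eqref{eq:k-propeq-Y} for $\square\,\K[\delta Y]$: its terms built from $R\cdot\CKY[Y]$ are of order $\le 3$ on the system and are absorbed into l.o.t, whereas the term proportional to $\K[\delta Q[Y]]$ — the wave source identified through \eqref{eq:v-propeq-Y} — yields exactly the $-\tfrac{y_2}{y_1}\,\bar P\circ\K\circ\delta$ contribution acting on $Q[Y]$ on the right-hand side. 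Collecting terms, the fourth-order-on-the-system part of the left-hand side is precisely the displayed matrix block of \eqref{eq:ccyk-propeq-4dim}, everything else is of order $\le 3$ and goes into l.o.t, and because each manipulation is an exact identity the result holds for every value of $y_2$. This exhibits the claimed $6$-parameter family.

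The third step is the hyperbolicity analysis, carried out with the adjugate criterion \eqref{eq:adjugate} as in Lemma~\ref{lem:hyp-P-ccyk}. Since the $Q$ operator is literally the one from Theorem~\ref{thm:ccyk-propeq}, its conditions $x_1\ne 0$, $y_1\ne 0$ carry over verbatim. For $P$, the principal symbol of the fourth-order left-hand operator factors as $\sigma_p(\square)=p_cp^c$ times the principal symbol of the explicit second-order block $B$, which is the principal part of the original $P$ with the single replacement $p_2\mapsto p_2-y_2$. For non-null $p$ the factor $p_cp^c$ is nonzero, so non-degeneracy of $\sigma_p(P)$ is equivalent to that of $\sigma_p(B)$. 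Re-running the computation of \eqref{eq:hyp-par2} with the free coefficient $p_2-y_2$ in place of the constrained $p_2=-x/2$, the factor $(n-4)x$ — which vanishes at $n=4$ precisely because the constrained value of the $P^2$ coefficient then coincides with its singular value — is replaced by the shifted value, giving the condition $y_2\ne 0$, while all the remaining inequalities of \eqref{eq:hyp-par2} survive unchanged into \eqref{eq:ccyk-propeq-hyp-4dim}. A final appeal to Proposition~\ref{prp:propeq} then closes the argument.

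I expect the main obstacle to be exactly this recomputation of the fourth-order principal symbol: one must verify that trading $y_2P^2$ for the first-order Killing reduction genuinely removes the degenerate second-order $P^2$ contribution from the leading symbol, and that the replacement of $(n-4)x\ne 0$ by $y_2\ne 0$ is both necessary and sufficient. Concretely, this requires checking that the commutator $[\square,\bar P]$ and the $R\cdot\CKY$ terms generated in the reduction really drop to order $\le 3$ on the system and do not contaminate $\sigma_p(P)$ — bookkeeping that is routine in spirit but delicate in the $n=4$ degenerate regime.
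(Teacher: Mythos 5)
Your proposal is correct and follows essentially the same route as the paper: apply $\square$ to the $n=4$ specialization of Theorem~\ref{thm:ccyk-propeq}, split $p_2=(p_2-y_2)+y_2$, reduce the $y_2$ piece via the factorization~\eqref{eq:p2-through-k} and the Killing propagation identity~\eqref{eq:k-propeq-Y}, and observe that the hyperbolicity conditions of Lemmas~\ref{lem:hyp-Q-ccyk} and~\ref{lem:hyp-P-ccyk} are unchanged under multiplication of the principal part by a power of $\square$, with the degenerate $(n-4)x\ne 0$ condition replaced by $y_2\ne 0$. The only difference is that you spell out the principal-symbol bookkeeping in more detail than the paper does, which is harmless.
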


\begin{proof}
The first step is to apply the wave operator $\square$ to both sides of
the propagation identity from Theorem~\ref{thm:ccyk-propeq} restricted
to $n=4$ dimensions. Then, note that we are completely free to do the
following rewriting:
\begin{dmath}
	\square p_2 P^2 \circ \CKY
	= \square (p_2 - y_2) P^2\circ \CKY
		- y_2 \frac{3(n-2)}{(n-1)} \square \bar{P}\circ \K\circ \delta
	= \square (p_2 - y_2) P^2\circ \CKY
		- y_2 \frac{3(n-2)}{(n-1)} \bar{P}\circ \square\K\circ \delta
		+ \text{l.o.t} .
\end{dmath}
Finally, using~\eqref{eq:k-propeq-Y} to eliminate $\square \K\circ
\delta$ from the above formula, we arrive directly at the desired
propagation identity~\eqref{eq:ccyk-propeq-4dim}. Recalling the relevant
hyperbolicity conditions from Lemmas~\ref{lem:hyp-Q-ccyk}
and~\ref{lem:hyp-P-ccyk}, which are unchanged when the operators
contributing to the principal symbol are multiplied by a power of
$\square$, we get the corresponding
inequalities~\eqref{eq:ccyk-propeq-hyp-4dim}.
\end{proof}

\subsection{Construction of cCYKID in dimension $n=4$} \label{sec:ccykid-4dim}

In contrast to the case of $n>4$ dimensions (Section~\ref{sec:ccykid}),
the fact that in $n=4$ dimensions we must use the \emph{fourth order}
propagation identity from Theorem~\ref{thm:ccyk-propeq-4dim} to apply
Proposition~\ref{prp:propeq} means that the corresponding cCYKID
conditions must be obtained by evaluating $\left.\nabla_0^k
\CKY[Y]\right|_\Sigma = 0$ and $\left.\nabla_0^k \d Y\right|_\Sigma = 0$
for $k=0,1,2,3$. But, our task is simplified by the observation, already
exploited in the proof of Theorem~\ref{thm:ccyk-propeq-4dim}, that the
fourth order identity~\eqref{eq:ccyk-propeq-4dim} follows from the
coupled set of \emph{second order} propagation
identities~\eqref{eq:k-propeq-Y} for $\K[\delta Y]=0$
and~\eqref{eq:ccyk-propeq} for $\CKY[Y]=0$, $\d Y=0$. We have previously
encountered an analogous situation in the construction of the
\emph{conformal Killing initial data}~\cite{gpkh-ckid}. The same
argument as in the proof of Theorem~3 of~\cite{gpkh-ckid}, which we do
not reproduce here, shows that it is in fact sufficient to evaluate the
initial data conditions $\left.\nabla_0^k \CKY[Y]\right|_\Sigma = 0$,
$\left.\nabla_0^k \d Y\right|_\Sigma = 0$ and $\left.\nabla_0^k
\K[\delta Y]\right|_\Sigma = 0$ only for $k=0,1$.

\begin{theorem} \label{thm:ccykid-4dim}
Consider a globally hyperbolic Einstein $\Lambda$-vacuum Lorentzian ma\-ni\-fold,
$(M,g)$ of dimension $n=4$ with $R_{ab}=\Lambda g_{ab}$, and
a Cauchy surface $\Sigma\subset M$. The necessary and sufficient
conditions yielding a set of \emph{closed conformal Killing-Yano initial
data} (cCYKID) for $Y_{ab}$ on $\Sigma$ are the initial data conditions
of Theorem~\ref{thm:ccykid} (specialized to $n=4$) together with the KID
conditions~\eqref{eq:kid} applied to $v=\delta Y$, which can be
rewritten in two equivalent ways, modulo the conditions already included
in Theorem~\ref{thm:ccykid}. The first consists of only
the~\eqref{eq:kid1} condition
\begin{multline} \label{eq:ccykid-4dim-kid1}
	D_{A} D_{B} v_{0}
	+ (2(\pi\cdot\pi)_{A B} - \pi \pi_{AB} - r_{AB}) v_0 \\
	- 2 \pi_{(B}{}^{C} D_{A)} v_C
	- (D^{C}{\pi_{A B}}) v_{C} + \frac{4\Lambda}{n-2}g_{AB} v_0 = 0 ,
\end{multline} 
with
\begin{equation} \label{eq:divY-split}
	v_0 = D^B Y_{B0}, \quad
	v_A = \frac{3}{2} (D^B Y_{BA} - \pi_{A}{}^B Y_{B0} + \pi Y_{A0}) .
\end{equation}
The second equivalent condition is
\begin{multline} \label{eq:kid-dY-equiv}
	12 \pi_{(A|B} r^{BD} Y_{D|C)}
	+ 12 \pi^{BD} r_{(A|B} Y_{D|C)}
	- 12 \pi r_{(A|}{}^B Y_{B|C)}
	\\
	+ 24 \pi \pi^{BD} \pi_{(A|B} Y_{D|C)}
	- 12 \left( \Lambda + \pi_{DE} \pi^{DE} + \pi^2 \right)
		\pi_{(A|}{}^{B} Y_{B|C)}
	\\
	- 12 Y_{(A|B} D_{|C)} D^B \pi
	+ 12 Y_{(A|B} D^E D_E \pi_{|C)}{}^B
	\\
	+ 6 \left[
		5 \pi^{BD} D_D \pi_{AC}
		+ \pi D^B \pi_{AC}
		+ 2 \pi_{AC} D^B \pi
		\right. \\
		- 3 \pi^{BD} D_{(A} \pi_{C)D}
		+ 4 \pi_{D(A} D_{C)} \pi^{DB}
		- \pi D_{(A} \pi_{C)}{}^{B}
		- 2 \pi^B{}_{(A} D_{C)} \pi
		\\ \left.
		- 4 \pi_{(A|}{}^D D_D \pi_{|C)}{}^{B}
		- 2 \pi_{(A}{}^D D^B \pi_{C)D}
	\right] Y_{B0}
	= 0 .
\end{multline}
\end{theorem}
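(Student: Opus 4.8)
The plan is to combine Theorem~\ref{thm:ccyk-propeq-4dim} with Proposition~\ref{prp:propeq}. Since the governing propagation identity~\eqref{eq:ccyk-propeq-4dim} is now fourth order, Proposition~\ref{prp:propeq} tells us that the cCYKID conditions are equivalent to the vanishing on $\Sigma$ of $\nabla_0^k \CKY[Y]$ and $\nabla_0^k \d Y$ for $k=0,1,2,3$. First I would invoke the structural observation already exploited in the proof of Theorem~\ref{thm:ccyk-propeq-4dim}, namely that~\eqref{eq:ccyk-propeq-4dim} is assembled from the two coupled \emph{second order} identities~\eqref{eq:ccyk-propeq} for the cCYK system and~\eqref{eq:k-propeq-Y} for $\K[\delta Y]$. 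Exactly as in the proof of Theorem~3 of~\cite{gpkh-ckid}, this lets us trade the $k=2,3$ conditions on the cCYK system for the $k=0,1$ conditions on $\K[\delta Y]$. Concretely, the full system becomes equivalent to: (i) $\nabla_0^k \CKY[Y]|_\Sigma = \nabla_0^k \d Y|_\Sigma = 0$ for $k=0,1$, which are precisely the conditions of Theorem~\ref{thm:ccykid} specialized to $n=4$; and (ii) $\nabla_0^k \K[\delta Y]|_\Sigma = 0$ for $k=0,1$, which are the KID conditions~\eqref{eq:kid} applied to $v=\delta Y$.

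Next I would show that the $k=0$ condition~\eqref{eq:kid0} is already subsumed by Theorem~\ref{thm:ccykid}, so that only~\eqref{eq:kid1} survives, giving the first stated form. This follows from~\eqref{eq:k-to-cky-ident}: in the $\Lambda$-vacuum the Ricci term $R_{(a}{}^c Y_{b)c}$ vanishes by antisymmetry of $Y$, leaving $\K_{ab}[\delta Y] = \tfrac{n-1}{3(n-2)} S_{ab}[\CKY[Y]]$, a first order operator on $\CKY[Y]$. On $\Sigma$ all tangential derivatives of $\CKY[Y]$ vanish because $\CKY[Y]|_\Sigma=0$, while the single normal derivative is killed by $\nabla_0\CKY[Y]|_\Sigma=0$; hence $S_{ab}[\CKY[Y]]|_\Sigma=0$ and therefore $\K_{ab}[\delta Y]|_\Sigma=0$, which is exactly the content of~\eqref{eq:kid0} for $v=\delta Y$. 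In parallel I would record the split of $v=\delta Y$. Using the connection decomposition and eliminating $\nabla_0 Y$ via the substitutions~\eqref{eq:cdtime} (legitimate modulo Theorem~\ref{thm:ccykid}), a direct computation gives $v_0=(\delta Y)_0 = D^B Y_{B0}$ and $v_A=(\delta Y)_A = \tfrac{n-1}{n-2}\big(D^B Y_{BA} - \pi_A{}^B Y_{B0} + \pi Y_{A0}\big)$; at $n=4$ the prefactor is $\tfrac{3}{2}$, reproducing~\eqref{eq:divY-split}. The surviving condition~\eqref{eq:kid1} with these substitutions is then~\eqref{eq:ccykid-4dim-kid1}.

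Finally, to obtain the equivalent algebraic form~\eqref{eq:kid-dY-equiv}, I would substitute $v_0$ and $v_A$ from~\eqref{eq:divY-split} into~\eqref{eq:kid1}. Since $v$ already carries one derivative of $Y$, the term $D_A D_B v_0$ produces third order spatial derivatives of $Y$; the plan is to remove all derivatives of $Y$ by repeatedly applying the lower order cCYKID conditions~\eqref{eq:tfcdelectric} and~\eqref{eq:tfcdmagnetic}, which express the first spatial derivatives $D_A Y_{B0}$ and $D_A Y_{BC}$ algebraically, together with their spatial derivatives and the relevant spatial integrability conditions, while using the Einstein equations~\eqref{eq:ricci-split} to eliminate $\nabla_0\pi_{AB}$ and the spatial Ricci scalar. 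After this reduction only terms of zeroth order in $Y$ remain, yielding~\eqref{eq:kid-dY-equiv}, and the equivalence of the two forms modulo Theorem~\ref{thm:ccykid} is then immediate. The main obstacle is precisely this last step: the elimination of all $Y$-derivatives is a lengthy tensor computation, and one must check that every substitution is valid modulo the already-imposed conditions and their differential consequences. I expect essentially all of the work to lie here, with the earlier structural steps being routine once Theorems~\ref{thm:ccyk-propeq-4dim} and~\ref{thm:ccykid} are in hand.
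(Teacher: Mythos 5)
Your proposal follows the paper's own proof almost exactly through the main structural steps: the reduction to $k=0,1$ conditions on $\CKY[Y]$, $\d Y$ and $\K[\delta Y]$ via the second-order building blocks of the fourth-order identity, the use of~\eqref{eq:k-to-cky-ident} to show that only the~\eqref{eq:kid1} part of the KID system for $v=\delta Y$ is independent, and the computation of the split components~\eqref{eq:divY-split}. Where you genuinely diverge is in the derivation of the second form~\eqref{eq:kid-dY-equiv}: you propose to substitute~\eqref{eq:divY-split} into~\eqref{eq:kid1} and then eliminate the resulting third-order spatial derivatives of $Y$ by repeated use of~\eqref{eq:tfcdelectric}, \eqref{eq:tfcdmagnetic} and their integrability conditions. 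This can be made to work, but it is considerably harder than you may anticipate, because those two conditions determine only the \emph{traceless} parts of $D_AY_{B0}$ and $D_AY_{BC}$, while $v_0$ and $v_A$ are built precisely from the undetermined traces $D^BY_{B0}$ and $D^BY_{BA}$; cancelling the term $D_AD_Bv_0$ therefore requires second-order integrability manipulations whose mechanism you have not exhibited. The paper sidesteps this entirely with one additional observation: since $\K_{AB}[\delta Y]=-\nabla_0\CKY_{(A:B)0}[Y]+O(\CKY[Y])$, the surviving condition $\left.\nabla_0\K_{AB}[\delta Y]\right|_\Sigma=0$ may be replaced by $\left.\nabla_0^2\CKY_{(A:B)0}[Y]\right|_\Sigma=0$, and because $\left.\nabla_0\CKY_{(A:B)0}[Y]\right|_\Sigma=0$ has already been reduced to the purely algebraic (in $Y$) condition~\eqref{eq:CD0CYKIDAB0}, one only needs to apply $\nabla_0$ to that expression and eliminate $\nabla_0 Y$, $\nabla_0\pi_{AB}$ and $\nabla_0 r_{ABCD}$ using~\eqref{eq:cdtime}, \eqref{eq:ricci-split} and~\eqref{eq:driem}. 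No third derivatives of $Y$ ever appear, and the calculation lands directly on~\eqref{eq:kid-dY-equiv}. Your route buys nothing over this and costs a substantially longer tensor computation; if you pursue your version, you must at least identify how the trace terms $D_AD_BD^CY_{C0}$ are disposed of.
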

\begin{proof}
As summarized before the statement of the theorem, in imitation of the
proof of~\cite[Thm.3]{gpkh-ckid}, an application of
Proposition~\ref{prp:propeq} implies that the conditions necessary and
sufficient to identify the initial data of a cCYK 2-form $Y_{ab}$ are
equivalent to
\begin{dgroup}
\begin{dmath} \label{eq:cyk-sigma}
	\left.\nabla_0^k \CKY[Y]\right|_\Sigma = 0 ,
\end{dmath}
\begin{dmath} \label{eq:dR-sigma}
	\left.\nabla_0^k \d Y\right|_\Sigma = 0 ,
\end{dmath}
\begin{dmath} \label{eq:K-dY-sigma}
	\left.\nabla_0^k \K[\delta Y]\right|_\Sigma = 0 ,
\end{dmath}
\end{dgroup}
for $k=0,1$. In Theorem~\ref{thm:ccykid}, we have already given a set of
initial data conditions that are intrinsic to $\Sigma$ and are
equivalent to~\eqref{eq:cyk-sigma} and~\eqref{eq:dR-sigma}. Though these
results were stated for $n>4$, all the same calculations remain valid in
dimension $n=4$.

On the other hand, when $n=4$, the propagation
identity~\eqref{eq:ccyk-propeq} fails to be generalized normally
hyperbolic and so~\eqref{eq:cyk-sigma} and~\eqref{eq:dR-sigma} cannot be
used to solve for the $\nabla_0^2\CYK[Y]$ and $\nabla_0^2\d Y$. So these
conditions may no longer be sufficient. Sufficiency is restored by
adding the conditions~\eqref{eq:K-dY-sigma}, which are of course
equivalent to the well-known KID conditions~\eqref{eq:kid} applied to
$v_a = (\delta Y)_a$, whose components specialize
to~\eqref{eq:divY-split} after eliminating $\nabla_0 Y_{A0}$ and
$\nabla_0 Y_{AB}$ using $\CYK_{0:0C} = 0$ and $\CYK_{0:BC} = 0$.

However, these additional KID conditions are not all independent. Namely,
splitting the identity~\eqref{eq:k-to-cky-ident} gives us the schematic
identities
\begin{equation}
\begin{gathered}
	\K_{00}[\delta Y] = O(\CKY[Y]), \quad
	\K_{A0}[\delta Y] = O(\CKY[Y]), \\
	\text{and} \quad
	\K_{AB}[\delta Y] = -\nabla_0 \CKY_{(A:B)0}[Y] + O(\CKY[Y]) ,
\end{gathered}
\end{equation}
where $O(-)$ denotes linear dependence on the argument and any of its
spatial derivatives. Hence, the only independent initial data conditions
will come from $\left.\nabla_0 \K_{AB}[\delta Y]\right|_\Sigma = 0$ or
only the~\eqref{eq:kid1} part of the KID conditions, which we have
copied to~\eqref{eq:ccykid-4dim-kid1} in the statement of the theorem.
Equivalently, as can be seen from the preceding identities, this
remaining independent condition can be replaced by $\left.\nabla_0^2
\CKY_{(A:B)0}\right|_\Sigma = 0$. In the proof of
Theorem~\ref{thm:ccykid}, we have already shown that the condition
$\left.\nabla_0 \CKY_{(A:B)0}\right|_\Sigma = 0$ is equivalent
to~\eqref{eq:CD0CYKIDAB0}, which no longer contains any spatial
derivatives of $Y_{ab}$. Thus, to obtain the new independent condition
on $Y_{ab}$, it is sufficient to apply $\nabla_0$
to~\eqref{eq:CD0CYKIDAB0} and once again eliminate all $\nabla_0
Y_{ab}$. In this way, while also eliminating $\nabla_0 \pi_{AB}$ using
the Einstein equations~\eqref{eq:ricci-split} and $\nabla_0 r_{ABCD}$
using~\eqref{eq:driem}, direct calculation gives us the desired initial
data condition~\eqref{eq:kid-dY-equiv}.
\end{proof}

\section{Conformal Killing Yano initial data}
\label{sec:cykid}
In a general dimension $n$ it is not yet known how to construct a
propagation identity for the conformal Killing-Yano (CYK) system,
without the closed condition that was used in the successful
construction in Section~\ref{sec:ccyk} on an Einstein
($\Lambda$-vacuum) background. But, as we will analyze in this section,
the problem can be solved if $n=4$. In principle, the solution can be
extracted from the previously studied case of the Killing
$(2,0)$-spinor~\cite{GOMEZLOBO2008, bk-kerrness1, bk-kerrness2}, which is the spinorial version of a
self-dual conformal Killing-Yano 2-form. Instead, we give a purely
tensorial derivation, taking advantage of the explicit calculations from
Section~\ref{sec:ccyk} and the conceptually clear approach to the
problem that we have described in Section~\ref{sec:propeq} and our
previous work~\cite{gpkh-ckid}. Below, we will freely use the notation
and results introduced in Sections~\ref{sec:propeq}
and~\ref{sec:ccyk}.

Recall formula~\eqref{eq:k-to-cky-ident}, which factors $\K_{ab}[\delta
Y]$ through $\CKY_{a:bc}[Y]$ in general dimension. And also note the
following formula, which is valid only in 4 dimensions (but without
restriction on the Ricci tensor $R_{ab}$):
\begin{dgroup*}
\begin{dmath}\label{eq:Kdual-to-CKY}
	\K_{ab}[{*}\d Y]
	= 3 \bar{S}_{ab}[\CKY[Y]]
		- 18 R_{(a}^c \eta_{b)c}{}^{de} Y_{de} .
\end{dmath}
\begin{dmath}
	\text{with} \quad
	\bar{S}_{ab}[C]
	= 2 \eta_{(a|}{}^{cde} \nabla_e C_{|b):cd} .
\end{dmath}
\end{dgroup*}
The Ricci-dependent term vanishes for $\Lambda$-vacua, when $R_{ab} =
\frac{2\Lambda}{n-2} g_{ab}$. For reference, our $4$-dimensional
conventions for the Hodge $*$ operation are
\begin{equation}
	({*}v)_{abc} = \eta_{abc}{}^d v_d , \quad
	({*}Y)_{ab} = \frac{1}{2} \eta_{ab}{}^{cd} Y_{cd} , \quad
	({*}w)_{a} = \frac{1}{6} \eta_{a}{}^{bcd} w_{bcd} ,
\end{equation}
with $\eta_{abcd}$ being the Levi-Civita tensor.

Recall also that the general 5-parameter propagation identity can be
specialized to both $n=4$ dimensions and also to the case which
decouples the $\CKY[Y]$ operator from the exterior derivative $\d Y$
(setting $p_7 = p_8 = r_2 = 0$). But in both cases at least one of the
inequalities from the hyperbolicity
conditions~\eqref{eq:ccyk-propeq-hyp-4dim} fails. In
Section~\ref{sec:ccyk-4dim}, this failure when $n=4$ was fixed by
decoupling $\K_{ab}[\delta Y]$ operator and propagating it separately.
By analogy, in this section, we will fix the failure of hyperbolicity
for the $n=4$ propagation identity decoupled from $\d Y$, by also
decoupling $\K_{ab}[{*}\d Y]$ and propagating it separately. In general
dimension, restoring the hyperbolicity for the decoupled case remains an
open problem.

By decoupling the general 5-parameter propagation cCYK identity
(Theorem~\ref{thm:ccyk-propeq}) from $\d Y$ (setting $p_7 = p_8 = r_2 =
0$), what remains is the following 1-parameter identity
\begin{equation} \label{eq:cyk-propeq2-old}
	{-P_3} = \sigma^1 \circ \rho^1 \circ \CKY ,
\end{equation}
where the single parameter is just an overall multiplicative constant.
To apply the decoupling strategy in Section~\ref{sec:ccyk-4dim}, we used
the fact that $P^2_{a:bc}[\CKY[Y]]$ factors through $\K_{ab}[\delta Y]$,
according to~\eqref{eq:p2-through-k}. Unfortunately,
$P^3_{a:bc}[\CKY[Y]]$ does not directly factor through $\K_{ab}[{*}\d
Y]$. Thus, it is convenient to introduce the alternative operator
$\bar{P}^3$, which does factor:
\begin{equation}
	\bar{P}^3_{a:bc}[\CKY[Y]] := \frac{3}{2} ({*}\bar{P}[\K[{*}\d Y]])_{a:bc} ,
\end{equation}
where the operator $\bar{P}$ was defined in~\eqref{eq:barP-def} and we
have extended the Hodge $*$ operator to
\begin{equation}
	({*}C)_{a:bc}[Y]
	:= \frac{2}{3} (\eta_{bc}{}^{pq} C_{a:pq}
		+ \eta_{a[b}{}^{pq} C_{c]:pq}) .
\end{equation}
With this choice, identity~\eqref{eq:cyk-propeq2-old} gets rewritten as
\begin{equation} \label{eq:cyk-propeq2-4dim}
	3 P^1 - \frac{3}{2} P^2 - \frac{1}{2} \bar{P}^3
		+ \frac{1}{2} P^4 - 3 P^5 - 5 P^6
	= \sigma^1 \circ \rho^1 \circ \CKY .
\end{equation}
To clarify the structure of this identity, let us rewrite it more
explicitly as
\begin{multline}
	3 \square \CKY_{a:bc}[Y]
	+ 3 \bar{P}_{a:bc}[\K[\delta Y]]
	- \frac{3}{4} ({*}\bar{P}[\K[{*}\d Y]])_{a:bc}
	+ l.o.t \\
	= \CKY_{a:bc}[Q[Y]] ,
\end{multline}
where both $P^1_{a:bc}[C] = \square C_{a:bc}$ and $Q[Y] = \rho^1\circ
\CKY[Y]$ are generalized normally hyperbolic in the required way
(Theorem~\ref{thm:ccyk-propeq-4dim}), while the $P^2$ and $\bar{P}^3$
terms have been rewritten as first order operators on $\K[\delta]$ and
$\K[{*}\d Y]$.

Before proceeding, let us specialize identity~\eqref{eq:v-propeq-Y} to
our choice of dimension and parameters ($n=4$, $y_1=1$), which shows
that the $\K_{ab}[\delta Y]$ propagates in a way compatible with our
choice of $Q[Y]$:
\begin{dmath} \label{eq:v-propeq-Y-4dim}
	(\square + \Lambda) (\delta Y)_a
		+ \frac{1}{2} R_{a}{}^{bcd} \CKY_{b\colon cd}[Y]
	= \frac{1}{2} (\delta Q[Y])_a ,
\end{dmath}
where $(R\cdot\CKY[Y])_a = R_a{}^{bcd} \CKY_{b:cd}$. Similarly, we must
verify that $\K_{ab}[{*}\d Y]$ also propagates in a way that is
compatible with $Q[Y]$. A direct calculation shows that
\begin{dmath} \label{eq:vdual-propeq-Y-4dim}
	(\square + \Lambda) ({*} \d Y)_a
		+  R_{a}{}^{bcd} {*}\CKY_{b\colon cd}[Y]
	= \frac{1}{2} ({*} \d Q[Y])_a .
\end{dmath}
Alternatively, substituting $Y\mapsto {*}Y$
into~\eqref{eq:v-propeq-Y-4dim}, we immediately
get~\eqref{eq:vdual-propeq-Y-4dim}, after using the following helpful
identities:
\begin{align}
	(\delta {*}Y)_a &= \frac{1}{2} ({*} \d Y)_a , \\
	\CKY_{a:bc}[{*}Y] &= {*}\CKY_{a:bc}[Y] , \\
	Q_{ab}[{*}Y] &= ({*}Q[Y])_{ab} .
\end{align}
Thus, rewriting the Killing equation propagation
identity~\eqref{eq:k-propeq} adapted to our situation, we get
\begin{dgroup*}
\begin{dmath} \label{eq:k-propeq-Y-4dim}
	\square \K_{ab}[\delta Y]
		- R^c{}_{ab}{}^{d} S_{cd}[\CKY[Y]]
		+ \frac{1}{2} \K_{ab}[R\cdot \CKY[Y]]
	= \frac{1}{2} \K_{ab}[\delta Q[Y]] ,
\end{dmath}
\begin{dmath} \label{eq:kdual-propeq-Y-4dim}
	\square \K_{ab}[{*}\d Y]
		- 2 R^c{}_{ab}{}^{d} S_{cd}[{*}\CKY[Y]]
		+ \K_{ab}[R\cdot {*}\CKY[Y]]
	= \frac{1}{2} \K_{ab}[{*}\d Q[Y]] ,
\end{dmath}
\end{dgroup*}
where again $(R\cdot C)_a = R_a{}^{bcd} C_{b:cd}$.

\begin{theorem} \label{thm:cyk-propeq-4dim}
Under the same hypotheses as Theorem~\ref{thm:ccyk-propeq}, but for
$n=4$, there exists the following 2-parameter family of 4th order propagation
identities of the form
\begin{dmath} \label{eq:cyk-propeq-4dim} \textstyle
	\square (3P^1 + (y_2 - 3) \frac{1}{2} P^2 + (y_3-1) \frac{1}{2} \bar{P}^3) \circ \CKY
		+ \text{l.o.t}
	= \left(\square\sigma^1
		+ \frac{1}{2} y_2 \bar{P}\circ \K \circ \delta
		+ \frac{3}{8} y_3 {*}\bar{P}\circ \K \circ {*}\d\right) \circ \rho^1 \circ \CKY ,
\end{dmath}
where {l.o.t} stands for operators of differential order three or lower
acting on the CYK operator and $y_2$, $y_3$ are free parameters. The
necessary and sufficient conditions for the generalized
normal hyperbolicity of the corresponding $P$ operator are
\begin{equation} \label{eq:cyk-propeq-hyp-4dim}
	y_3 \ne 0, \quad y_2 \ne 0 , \quad \text{and} \quad y_3 \ne -2 .
\end{equation}
and the corresponding $Q = \rho^1\circ \CKY$ operator is always
normally-hyperbolic (being independent of the free parameters).
\end{theorem}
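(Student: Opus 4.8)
The plan is to mirror the proof of Theorem~\ref{thm:ccyk-propeq-4dim}, the difference being that we must now decouple \emph{two} first-order consequences of the CYK system and propagate each by its own Killing-type identity: the divergence $\delta Y$ (through $P^2$) and the co-divergence ${*}\d Y$ (through $\bar{P}^3$). I would begin from the decoupled one-parameter identity~\eqref{eq:cyk-propeq2-4dim}, already specialized to $n=4$ and to the choice $p_7 = p_8 = r_2 = 0$ that isolates the $\yd{2,1}$ sector, and apply the wave operator $\square$ to both sides. The operators obstructing hyperbolicity are precisely $P^2$ and $\bar{P}^3$; everything else ($P^1$ on the one hand, and the curvature terms $P^4,P^5,P^6$ on the other) either already has the right symbol or becomes lower order after multiplication by $\square$.

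The next step is the free rewriting that injects the two new parameters. I would split each offending term as
\[
-\tfrac{3}{2}\,\square P^2\circ\CKY
= \tfrac{1}{2}(y_2-3)\,\square P^2\circ\CKY
- \tfrac{y_2}{2}\,\square P^2\circ\CKY ,
\]
\[
-\tfrac{1}{2}\,\square\bar{P}^3\circ\CKY
= \tfrac{1}{2}(y_3-1)\,\square\bar{P}^3\circ\CKY
- \tfrac{y_3}{2}\,\square\bar{P}^3\circ\CKY ,
\]
retaining the first summand as part of the principal operator and rewriting the second via the factorization~\eqref{eq:p2-through-k} of $P^2\circ\CKY$ through $\K[\delta Y]$ and the defining factorization of $\bar{P}^3\circ\CKY$ through $\K[{*}\d Y]$. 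Commuting $\square$ past the first-order operators $\bar{P}$ and ${*}\bar{P}$ (at the cost of l.o.t.), these relocated pieces take the form $\bar{P}\circ\square\K\circ\delta$ and ${*}\bar{P}\circ\square\K\circ{*}\d$.

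I would then eliminate $\square\K[\delta Y]$ and $\square\K[{*}\d Y]$ using the compatible Killing propagation identities~\eqref{eq:k-propeq-Y-4dim} and~\eqref{eq:kdual-propeq-Y-4dim}, whose mutual compatibility with the \emph{single} operator $Q = \rho^1\circ\CKY$ is exactly what the dual identity~\eqref{eq:vdual-propeq-Y-4dim}, obtained from~\eqref{eq:v-propeq-Y-4dim} by $Y\mapsto{*}Y$, guarantees. Each substitution trades a fourth-order object for $\tfrac{1}{2}\K[\delta Q[Y]]$ (respectively $\tfrac{1}{2}\K[{*}\d Q[Y]]$) plus curvature terms of differential order at most three in $Y$. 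Recalling $Q = \rho^1\circ\CKY$, the surviving fourth-order pieces assemble, after transfer to the right-hand side, into the $\tfrac{1}{2}y_2\,\bar{P}\circ\K\circ\delta$ and $\tfrac{3}{8}y_3\,{*}\bar{P}\circ\K\circ{*}\d$ contributions displayed in~\eqref{eq:cyk-propeq-4dim}, the remainders being absorbed into the l.o.t. Collecting terms leaves the principal part of $P$ equal to $\square\bigl(3P^1 + \tfrac{1}{2}(y_2-3)P^2 + \tfrac{1}{2}(y_3-1)\bar{P}^3\bigr)$, as claimed.

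For the hyperbolicity conditions, the operator $Q = \rho^1\circ\CKY$ is identical to the one in Theorem~\ref{thm:ccyk-propeq-4dim} and depends on no free parameter, so its normal hyperbolicity is inherited directly. For $P$, I would use the observation (already invoked for Theorem~\ref{thm:ccyk-propeq-4dim}) that generalized normal hyperbolicity is unaffected by an overall factor of $\square$, reducing the question to the non-degeneracy, for non-null $p$, of the principal symbol of the second-order $\yd{2,1}$ operator $3P^1 + \tfrac{1}{2}(y_2-3)P^2 + \tfrac{1}{2}(y_3-1)\bar{P}^3$, which is settled by Lemma~\ref{lem:hyp-P-ccyk}. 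I expect the main obstacle to be this final symbol computation: the Hodge-dual bookkeeping defining $\bar{P}^3$ and the extended ${*}$ on $\yd{2,1}$-tensors is delicate in $n=4$, and one must carefully assemble the $p$-dependent symbol matrix to read off the three sharp inequalities $y_2\neq 0$, $y_3\neq 0$ and $y_3\neq-2$. Establishing that the $\bar{P}^3$ factorization closes up with exactly the coefficients needed to keep both decouplings compatible with the one operator $Q$ — structurally ensured by $\CKY[{*}Y]={*}\CKY[Y]$ and $Q[{*}Y]={*}Q[Y]$ — is the conceptual crux that makes the construction work.
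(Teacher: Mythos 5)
Your proposal is correct and follows essentially the same route as the paper: apply $\square$ to the decoupled identity~\eqref{eq:cyk-propeq2-4dim}, split the $P^2$ and $\bar{P}^3$ terms with the free parameters $y_2$, $y_3$, push the relocated pieces through the factorizations onto $\K[\delta Y]$ and $\K[{*}\d Y]$, eliminate $\square\K\circ\delta$ and $\square\K\circ{*}\d$ via~\eqref{eq:k-propeq-Y-4dim} and~\eqref{eq:kdual-propeq-Y-4dim}, and read off the hyperbolicity inequalities from Lemmas~\ref{lem:hyp-Q-ccyk} and~\ref{lem:hyp-P-ccyk} (the paper adapts the latter by setting $x=y=0$, $w=1$ and translating $\bar{P}^3$ back to $P^3$, which is the symbol bookkeeping you correctly flag as the remaining computation).
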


The proof is directly analogous to that of
Theorem~\ref{thm:ccyk-propeq-4dim}.

\begin{proof}
The first step is to apply the wave operator $\square$ to both sides of
the propagation identity~\eqref{eq:cyk-propeq2-4dim}. Then, note that we
are completely free to do the following rewriting:
\begin{dgroup*}
\begin{dmath} \textstyle
	\square (-\frac{3}{2}) P^2 \circ \CKY
	= \square (y_2-3)\frac{1}{2} P^2 \circ \CKY
		- y_2 \square \bar{P}\circ \K \circ \delta
	= \square (y_2-3)\frac{1}{2} P^2 \circ \CKY
		- y_2\bar{P}\circ  \square\K \circ \delta
		+ \text{l.o.t} ,
\end{dmath}
\begin{dmath} \textstyle
	\square (-\frac{1}{2}) \bar{P}^3 \circ \CKY
	= \square (y_3-1)\frac{1}{2} \bar{P}^3 \circ \CKY
		- \frac{3}{4} y_3 \square {*}\bar{P}\circ \K \circ {*}\d
	= \square (y_3-1)\frac{1}{2} \bar{P}^3 \circ \CKY
		- \frac{3}{4} y_3 {*}\bar{P}\circ \square\K \circ {*}\d
		+ \text{l.o.t} .
\end{dmath}
\end{dgroup*}
Finally, using~\eqref{eq:k-propeq-Y-4dim}
and~\eqref{eq:kdual-propeq-Y-4dim} to eliminate $\square\K \circ \delta$
and $\square\K \circ {*}\d$ from the above formulas, we arrive directly
at the desired propagation identity~\eqref{eq:cyk-propeq-4dim}.
Recalling the relevant hyperbolicity conditions from
Lemmas~\ref{lem:hyp-Q-ccyk} and~\ref{lem:hyp-P-ccyk} (the latter lemma
is adapted by setting $x=y=0$ and $w=1$ to decouple the $\d$ and $\CKY$
operators, while the translation from the $\bar{P}^3$ to the $P^3$
operator is done by comparing~\eqref{eq:cyk-propeq2-old}
and~\eqref{eq:cyk-propeq2-4dim}), which are unchanged when the operators
contributing to the principal symbol are multiplied by a power of
$\square$, we get the corresponding
inequalities~\eqref{eq:cyk-propeq-hyp-4dim}.
\end{proof}

Next we use the previous propagation equations to construct conformal
Killing-Yano initial data (CYKID) in dimension four.

\begin{theorem}\label{thm:cykid}\label{theo:CYKID}
Consider a globally hyperbolic Einstein $\Lambda$-vacuum Lorentzian ma\-ni\-fold,
$(M,g)$ of dimension $n=4$ with $R_{ab}=\Lambda g_{ab}$, and a Cauchy
hypersurface $\Sigma\subset M$. The necessary and sufficient conditions
yielding a set of \emph{conformal Killing-Yano initial data} (CYKID) for
$Y_{ab}$ on $\Sigma$ are the following equations, where we indicate the
provenance of each of them. Each equality holds modulo the ones
preceding it.

\medskip
\begin{subequations}\label{eq:cykid-4d}
\noindent
$\frac{2}{3}\left.\CKY_{(A:B)0}[Y]\right|_\Sigma=0 \colon$
\quad {\yd{2}} part of \eqref{eq:tfcdelectric} for $n=4$, or
\begin{dmath}\label{eq:cykid-4d-AB0}
	\bCK[Y_{\cdot 0}]_{AB} - 2 \pi_{(A}{}^C Y_{B)C} = 0,
\end{dmath}
$\left.\CKY_{A:BC}[Y]\right|_\Sigma=0 \colon$
\quad {\yd{2,1}} part of \eqref{eq:tfcdmagnetic} for $n=4$, or
\begin{dmath}\label{eq:cykid-4d-ABC}
	\bCKY[Y]_{A\colon BC}
	-6\pi_{A [B} Y_{0 C]}
	-3 g_{A [B} \pi_{C]}{}^{E} Y_{0 E} 
	+ 3 \pi g_{A [B} Y_{0 C]} = 0. 
\end{dmath}
$\frac{1}{6} \left.\nabla_0 \CKY_{(A:B)0}[Y]\right|_\Sigma=0 \colon$
\quad \eqref{eq:CD0CYKIDAB0} for $n=4$, or
\begin{dmath}\label{eq:CD0CYKIDAB0-4dim}
	(D_{C}\pi_{AB} - D_{(A}\pi_{B)C}) Y^{C}{}_{0}
	\\
	+ \pi \pi_{(A}{}^{C} Y_{B)C} 
	+ \pi_{(A|C} \pi^{CD} Y_{D|B)}
	+ r_{(A}{}^{C} Y_{B)C} 
	= 0 \;,
\end{dmath}
$-\frac{1}{6} \left.\nabla_0 \CKY_{A:BC}[Y]\right|_\Sigma=0 \colon$
\quad \eqref{eq:CD0CYKIDABC} for $n=4$, or
\begin{dmath}\label{eq:CD0CYKIDABC-4dim}
	(D_{[C}{\pi_{B]}{}^{E}}) Y_{A E}
	+(D_{[C|}{\pi_{A}{}^{E}}) Y_{|B] E}
	\\
	+(D^{E}{\pi_{A [B}}) Y_{C] E}
	+\frac{1}{2}\left( r_{B C A}{}^{E} + 2\pi_{A [B} \pi_{C]}{}^{E} \right) Y_{E0}
	+ \left( r_{A [B} -\pi_{A E} \pi^{E}{}_{[B} +\pi \pi_{A [B}
		-\frac{2\Lambda}{n-2} g_{A[B} \right) Y_{C]0}
	= 0 \;,
\end{dmath}
$\left.\nabla_0\K_{AB}[\delta Y]\right|_\Sigma=0 \colon$
\begin{dmath} \label{eq:nabla0-kid-4d}
	D_{A} D_{B} v_{0}
	+ (2(\pi\cdot\pi)_{A B} - \pi \pi_{AB} - r_{AB}) v_0 \\
	- 2 \pi_{(B}{}^{C} D_{A)} v_C
	- (D^{C}{\pi_{A B}}) v_{C} + 2\Lambda g_{AB} v_0 = 0 ,
\end{dmath}
$\left.\nabla_0\K_{AB}[*\d Y]\right|_\Sigma=0 \colon$
\begin{dmath} \label{eq:nabla0-kiddual-4d}
	D_{A} D_{B} (*w)_{0}
	+ (2(\pi\cdot\pi)_{A B} - \pi \pi_{AB} - r_{AB}) (*w)_0 \\
	- 2 \pi_{(B}{}^{C} D_{A)} (*w)_C
	- (D^{C}{\pi_{A B}}) (*w)_{C} + 2\Lambda g_{AB} (*w)_0 = 0 ,
\end{dmath}
\end{subequations} 
where we have used the same spatial differential operators $\bCK$ and
$\bCKY$ as in the proof of Theorem \ref{thm:ccykid}, and the components
$v_0, v_A$ of $v_a = (\delta Y)_a$ are the same as
in~\eqref{eq:divY-split}, while the $(*w)_0, (*w)_A$ components of
$(*w)_a = ({*}\d Y)_a$ are
\begin{equation}\label{eq:v-dual-split-4d}
	(*w)_{0} \coloneq \eps_{ABC} D^{C}Y^{AB} , \quad
	(*w)_{A} \coloneq  
	\eps_{ABC} \left(\pi^{DB} Y_{D}{}^{C} + 3 D^{B}Y^{C}{}_{0}\right) ,
\end{equation}
with $\eps_{ABC} := \eta_{0ABC}$.
\end{theorem}

The proof is directly analogous to that of Theorem~\ref{thm:ccykid-4dim}.

\begin{proof}
To construct the CYKID conditions, it is sufficient to, once again,
apply Proposition~\ref{prp:propeq} to the 4th order propagation
identity~\eqref{eq:cyk-propeq-4dim} for the $\CYK[Y]$ operator obtained
in Theorem~\ref{thm:cyk-propeq-4dim}. But, more practically, following
the logic explained in the proof of Theorem~\ref{thm:ccykid-4dim}, since
the 4th order identity was obtained by combining compatible second order
propagation identities for the $\CYK[Y]$, $\K[\delta Y]$ and $\K[{*}\d
Y]$ operators, the following initial data conditions are sufficient:
\begin{dgroup}
\begin{dmath} \label{eq:cyk-sigma-4} 
	\left.\nabla_0^k \CKY[Y]\right|_\Sigma = 0 ,
\end{dmath}
\begin{dmath} \label{eq:K-dY-sigma-4} 
	\left.\nabla_0^k \K[\delta Y]\right|_\Sigma = 0 ,
\end{dmath}
\begin{dmath} \label{eq:K-dualdY-sigma-4} 
	\left.\nabla_0^k \K[*\d Y]\right|_\Sigma = 0 ,
\end{dmath}
\end{dgroup}
for $k=0$, $1$.

We can get~\eqref{eq:cykid-4d-AB0} and~\eqref{eq:cykid-4d-ABC} by
computing $\CKY_{(A:B)0}[Y]$ and $\CKY_{A:BC}[Y]$ from the formulas
~\eqref{eq:cky-dy-split}, or just take the appropriately symmetrized
projections of~\eqref{eq:tfcdelectric} and~\eqref{eq:tfcdmagnetic},
specialized to $n=4$, as indicated in the theorem. The remaining
components allow us to systematically eliminate any $\nabla_0$
derivatives of $Y_{A0}$ and $Y_{AB}$ in the remaining calculations. The
computation of $\nabla_0\CKY_{(A:B)0}[Y]$ and $\nabla_0\CKY_{A:BC}[Y]$
follows the same logic as in the proof of Theorem~\ref{thm:ccykid},
producing~\eqref{eq:CD0CYKIDAB0-4dim} and~\eqref{eq:CD0CYKIDABC-4dim} as
indicated in the theorem.

Again following the logic of the proof of Theorem~\ref{thm:ccykid-4dim},
splitting the identities~\eqref{eq:k-to-cky-ident}
and~\eqref{eq:Kdual-to-CKY} and systematically eliminating all
$\nabla_0$ derivatives of $Y_{A0}$ and $Y_{AB}$, allows us to write
\begin{equation}\label{eq:K[deltaY]-4d}
\begin{gathered}
	\K_{00}[\delta Y] = O(\CKY[Y]), \quad
	\K_{A0}[\delta Y] = O(\CKY[Y]), \\
	\K_{AB}[\delta Y] = -\nabla_0 \CKY_{(A:B)0}[Y] + O(\CKY[Y]) ,
\end{gathered}
\end{equation}
\begin{equation}\label{eq:K[*dY]-4d}
\begin{gathered}
	\K_{00}[*\d Y] = O(\CKY[Y]), \quad
	\K_{A0}[*\d Y] = O(\CKY[Y]), \\
	\K_{AB}[*\d Y] = -6 \varepsilon^{CDE} g_{C(A} \nabla_{0}\CKY[Y]_{B):DE}
		+ O(\CKY[Y]) ,
\end{gathered}
\end{equation}
where, as usual, $O(-)$ denotes linear dependence on the argument and
any of its spatial derivatives. Hence, setting any of the above
expressions to zero does not add any new independent initial data
conditions. Obviously, the same will be true of $\nabla_0\K_{00}[\delta
Y]$, $\nabla_0\K_{A0}[\delta Y]$, and $\nabla_0\K_{00}[{*}\d Y]$ and
$\nabla_0\K_{A0}[{*}\d Y]$.

It remains only to compute the initial data conditions from
$\nabla_0\K_{AB}[\delta Y]$ and $\nabla_0\K_{AB}[{*}\d Y]$. Again, as in
the proof of Theorem~\ref{thm:ccykid-4dim}, we know that it would be
sufficient to plug into the second KID condition~\eqref{eq:kid1} the
vectors $v_a = (\delta Y)_a$ and $(*w)_a = ({*}\d Y)_a$, whose split
components, with $\nabla_0$ derivatives eliminated, are by direct
computation given by~\eqref{eq:divY-split}
and~\eqref{eq:v-dual-split-4d} respectively. The result gives us the
remaining CYKID conditions~\eqref{eq:nabla0-kid-4d}
and~\eqref{eq:nabla0-kiddual-4d}. The resulting expressions are third
order spatial differential operators on $Y_{A0}$ and $Y_{AB}$. We can
reduce them to first order differential operators by simply applying
$\nabla_0$ to~\eqref{eq:K[deltaY]-4d} and~\eqref{eq:K[*dY]-4d},
respectively substituting~\eqref{eq:CD0CYKIDAB0-4dim}
and~\eqref{eq:CD0CYKIDABC-4dim} for $\nabla_0 \CKY_{(A:B)0}[Y]$ and
$\nabla_0 \CKY_{A:BC}[Y]$, and systematically eliminating $\nabla_0
Y_{A0}$ and $\nabla_0 Y_{AB}$. However the resulting expressions become
rather long and unenlightening, so we omit them.
\end{proof}

Remark~\ref{rmk:finite-type} applies equally well to check the finite
type property of the 4-dimensional CYKID conditions~\eqref{eq:cykid-4d};
it is sufficient to look at the symbols of~\eqref{eq:cykid-4d-AB0}
and~\eqref{eq:cykid-4d-ABC}.

\section{Discussion}

We derived a set of necessary and sufficient conditions (the cCYKID
equations) ensuring that a $\Lambda$-vacuum initial data set for the
Einstein equations admits a closed conformal Killing-Yano 2-form
(Theorem~\ref{thm:ccykid} in dimensions $n>4$, and
Theorem~\ref{thm:ccykid-4dim} for $n=4$) or, in the special dimension
$n=4$, just a conformal Killing-Yano 2-form (Theorem~\ref{thm:cykid}).
These initial data equations include both differential conditions on the
spatial components $Y_{A0}$ and $Y_{AB}$ of the spacetime 2-form
$Y_{ab}$, as well as purely algebraic conditions that involve the
intrinsic and extrinsic geometry of the initial data surface. While
these results are special to Lorentzian signature, the propagation
identities (Theorems~\ref{thm:ccyk-propeq}, \ref{thm:ccyk-propeq-4dim}
and~\ref{thm:cyk-propeq-4dim}) that we have used to derive the initial
data conditions are fully covariant and hence remain valid in any
pseudo-Riemannian signature. The method that we have used to arrive at
Theorem~\ref{thm:ccyk-propeq} is a representation-theoretic exhaustive
search based on covariance and fixed total degree of various
differential operators. In fact, we have shown that, in dimensions $n>4$
(excluding $n=6$), the result is guaranteed to be the most general one.
As a result, we have also concluded
(Corollary~\ref{cor:cyk-propeq-nogo}) that there does not exist a second
order covariant propagation identity for non-closed CYK 2-forms.

As we have indicated in the Introduction, the ability to describe
$\Lambda$-vacuum Einstein initial data giving rise to a cCYK (or, for
$n=4$, also CYK) 2-form can improve the initial data characterization given
in~\cite{gpgl-kerr} of the Kerr rotating
black hole solution. In higher dimensions, the same idea could be used
to give the first initial data characterizations of
members of the Kerr-NUT-(A)dS rotating black holes. It would be
interesting to explore these possibilities in future work.
As discussed in Remark \ref{rem:isometric-embedding} these prospective
results admit the interpretation of necessary and sufficient conditions
for the existence of isometric embeddings of Riemannian manifolds
in the corresponding ambient spacetimes. Moreover, a general formulation of 
the non-linear stability problem for the Kerr-NUT-(A)dS rotating black holes
must include the idea of general vacuum initial data close in some 
topology to general Kerr-NUT-(A)dS data, constructed from the cCYKID initial 
data characterization of Theorem \ref{theo:CYKID}. 

At the moment, no propagation identity is known for the higher
dimensional CYK 2-forms ($n>4$) or for higher rank cCYK $p$-forms
($p>2$). It would be interesting to study these equations using the
approaches used in this work: representation-theoretic exhaustive search
and clever decoupling of independently propagated integrability
conditions.

\paragraph{Acknowledgements}
IK was partially supported by the Praemium Academiae of M.~Markl,
GA\v{C}R project GA19-06357S and RVO: 67985840.
AGP is supported by grant GA19-01850S of the Czech Science Foundation.

\appendix

\section{Tensor bases} \label{app:tensors}

In this appendix, we list several sequences of tensor-valued covariant
differential operators which, according to the representation-theoretic
discussion in Section~\ref{sec:ccyk} span the space of operators of a
certain total order and tensor type. The notation for these operators
follows tables preceding Remark~\ref{rmk:littlewood}. We presume
throughout that the cosmological vacuum Einstein equations hold, $R_{ab}
= \frac{2\Lambda}{n-2} g_{ab}$, so that the Ricci tensor never appears
in the formulas below.

\begin{remark} \label{rmk:weyl}\em
In the representation-theoretic discussion of Section~\ref{sec:ccyk}, we
noted that, being traceless, the Weyl tensor $W_{abcd}$ is precisely of
representation type $\yd{2,2}$. The Riemann tensor $R_{abcd}$ has all
the same symmetries, but is not traceless, due to a possibly
non-vanishing cosmological constant $\Lambda$. Thus, following
strict representation-theoretic logic, we should write all terms
involving $W_{abcd}$ and $\Lambda$ separately, like so:
\begin{equation}
	c_1 O(W_{abcd}) + c_2' O(\Lambda) + \cdots .
\end{equation}
However, expressing $W_{abcd}$ in terms $R_{abcd}$, $g_{ab}$ and
$\Lambda$, any such expression becomes
\begin{equation}
	c_1 O(R_{abcd}) + c_2 O(\Lambda) + \cdots ,
\end{equation}
where $c_1$ stays the same but $c_2$ may now be different. Since for the
purposes of computer algebra it is more economical to work directly with
$R_{abcd}$, rather than $W_{abcd}$, we choose to work with the
coefficients $c_1$ and $c_2$ in the second formulation directly.
\end{remark}

The following is a basis of the possible second total order covariant
differential operators of type $\yd{1,1}\,Y \to \yd{1,1}\,Q$:
\begin{align} \label{eq:Q-def}
	Q^1_{ab}[Y] &= \square Y_{ab} , \\
	Q^2_{ab}[Y] &= 2\nabla_{[a} \nabla^d Y_{b]d} = -(\d\delta Y)_{ab} , \\
	Q^3_{ab}[Y] &= R_{ab}{}^{de} Y_{de} , \\
	Q^4_{ab}[Y] &= \Lambda Y_{ab} .
\end{align}

The following are bases of the possible second total order covariant
differential operators of type $\yd{2,1}\,C \to \yd{2,1}\,P$ and
$\yd{1,1,1}\,\Xi \to \yd{2,1}\,P$:
\begin{align} \label{eq:P-def}
	P^1_{a:bc}[C] &= \square C_{a:bc} , \\
\notag
	P^2_{a:bc}[C] &= \nabla_b \nabla^d (C_{a:dc}+C_{c:da})
		- \nabla_c \nabla^d (C_{a:db}+C_{b:da}) \\
	&\quad {}
		-\frac{2}{n-1} g_{a[b|} \nabla^e\nabla^f C_{(e:f)|c]}
		+ \frac{3}{2(n-1)} g_{a[b} R_{c]}{}^{def} C_{d:ef} , \\
\notag
	P^3_{a:bc}[C] &= 2\nabla_a \nabla^d C_{d:cb}
		- \nabla_b \nabla^d C_{d:ac} + \nabla_c \nabla^d C_{d:ab} \\
	&\quad {}
		+ \frac{6}{n-1} g_{a[b|} \nabla^e\nabla^f C_{(e:f)|c]}
		+ \frac{3}{2(n-1)} g_{a[b} R_{c]}{}^{def} C_{d:ef} , \\
\notag
	P^4_{a:bc}[C] &= 2 R_{bc}{}^{ef} C_{a:ef}
		- R_{ca}{}^{ef} C_{b:ef} + R_{ba}{}^{ef} C_{c:ef} \\
	&\quad {}
		+ \frac{6}{(n-1)} g_{a[b} R_{c]}{}^{def} C_{d:ef} , \\
\notag
	P^5_{a:bc}[C] &= R_a{}^e{}_b{}^f (C_{e:cf}+C_{f:ce})
		- R_a{}^e{}_c{}^f (C_{e:bf}+C_{f:be}) \\
	&\quad {}
		+ \frac{3}{(n-1)} g_{a[b} R_{c]}{}^{def} C_{d:ef} , \\
	P^6_{a:bc}[C] &= \Lambda C_{a:bc} ; \\
	P^7_{a:bc}[\Xi]
	&= \CKY_{a:bc}[\delta\Xi] , \\
	P^8_{a:bc}[\Xi]
	&= \frac{1}{6} (2R_{bc}{}^{de} \Xi_{a de}
		- R_{ab}{}^{de} \Xi_{c de}
		+ R_{ac}{}^{de} \Xi_{b de}) .
\end{align}

The following are bases of the possible second total order covariant
differential operators of type $\yd{1,1,1}\,\Xi \to \yd{1,1,1}\,\hat{P}$
and $\yd{2,1}\,C \to \yd{1,1,1}\,\hat{P}$:
\begin{align} \label{eq:hatP-def}
	\hat{P}^1_{abc}[\Xi] &= \square \Xi_{abc} , \\
	\hat{P}^2_{abc}[\Xi] &= -(d\delta\Xi)_{abc} , \\
	\hat{P}^3_{abc}[\Xi]
		&= R_{ab}{}^{de} \Xi_{c de}
		 + R_{bc}{}^{de} \Xi_{a de}
		 + R_{ca}{}^{de} \Xi_{b de} , \\
	\hat{P}^4_{abc}[\Xi] &= \Lambda \Xi_{abc} ; \\
	\hat{P}^5_{abc}[C]
		&= \nabla_a\nabla^d C_{d:bc}
		 + \nabla_b\nabla^d C_{d:ca}
		 + \nabla_c\nabla^d C_{d:ab} , \\
	\hat{P}^6_{abc}[C]
		&= R_{ab}{}^{de} C_{c:de}
		 + R_{bc}{}^{de} C_{a:de}
		 + R_{ca}{}^{de} C_{b:de} .
\end{align}

The following is a basis of the possible third total order covariant
differential operators of type $\yd{1,1}\,Y \to \yd{2,1}\,T$:
\begin{align} \label{eq:T-def}
\notag
	T^1_{a:bc}[Y] &= \square \CKY_{a:bc}[Y] , \\
\notag
	T^2_{a:bc}[Y]
	&= \nabla_{(a}\nabla_{b)} (\delta Y)_c
	  -\nabla_{(a}\nabla_{c)} (\delta Y)_b
	  -\frac{2}{n-1} g_{a[b} \square (\delta Y)_{c]} \\
	&\quad {}
	  +\frac{2\Lambda}{(n-1)(n-2)} g_{a[b} (\delta Y)_{c]} , \\
\notag
	T^3_{a:bc}[Y] &= 2 R_{bc}{}^{ef} \CKY_{a:ef}[Y]
		- R_{ca}{}^{ef} \CKY_{b:ef}[Y] + R_{ba}{}^{ef} \CKY_{c:ef}[Y] \\
	&\quad {}
		+ \frac{6}{n-1} g_{a[b} R_{c]}{}^{def} \CKY_{d:ef}[Y] , \\
\notag
	T^4_{a:bc}[Y] &= 2 R_a{}^{(e}{}_b{}^{f)} \CKY_{e:cf}[Y]
		- 2 R_a{}^{(e}{}_c{}^{f)} \CKY_{e:bf}[Y] \\
	&\quad {}
		+ \frac{3}{(n-1)} g_{a[b} R_{c]}{}^{def} \CKY_{d:ef}[Y] , \\
	T^5_{a:bc}[Y]
	&=  \frac{2}{6} R_{bc}{}^{de} (\d Y)_{ade}
		- \frac{1}{6} R_{ab}{}^{de} (\d Y)_{cde}
		+ \frac{1}{6} R_{ac}{}^{de} (\d Y)_{bde} , \\
	T^6_{a:bc}[Y] &= -\frac{1}{n-1} R_{bca}{}^d (\delta Y)_{d}
		+ \frac{4\Lambda}{(n-1)^2 (n-2)} g_{a[b} (\delta Y)_{c]} , \\
	T^7_{a:bc}[Y] &= \nabla_a R_{bc}{}^{de} Y_{de} , \\
	T^8_{a:bc}[Y] &= \Lambda \CKY_{a:bc}[Y] .
\end{align}

The following is a basis of the possible third total order covariant
differential operators of type $\yd{1,1}\,Y \to \yd{1,1,1}\,\hat{T}$:
\begin{align} \label{eq:hatT-def}
	\hat{T}^1_{abc}[Y]
	&= \square (\d Y)_{abc} , \\
	\hat{T}^2_{abc}[Y]
	&= R_{ab}{}^{de} \CKY_{c:de}[Y]
	 + R_{bc}{}^{de} \CKY_{a:de}[Y]
	 + R_{ca}{}^{de} \CKY_{b:de}[Y] , \\
	\hat{T}^3_{abc}[Y] 
	&= R_{ab}{}^{de} (\d Y)_{c de}
	 + R_{bc}{}^{de} (\d Y)_{a de}
	 + R_{ca}{}^{de} (\d Y)_{b de} , \\
	\hat{T}^4_{abc}[Y] 
	&= \Lambda (\d Y)_{abc} .
\end{align}

\section{Generalized normal-hyperbolicity} \label{app:hyp-conds}

In this appendix, we find the necessary and sufficient conditions for
which the generic $P$ and $Q$ operators from
identity~\eqref{eq:gen-ccyk} are generalized normally hyperbolic. Our
strategy is to first pick a differential order, say $k$, and to
parametrize the most general covariant ansatz for the principal symbols
of the potential adjugate operators $P'$ and $Q'$ at that order. Then
one can check whether the adjugate identity~\eqref{eq:adjugate} could be
satisfied at that order.

Before proceeding, in addition to the second order $P^i$, $\hat{P}^i$
and $Q^i$ operators introduced in Section~\ref{sec:ccyk} and explicitly
defined in Appendix~\ref{app:tensors}, we also need to define a fourth
order operator
\begin{dmath}
	P^9_{a\colon bc}[C]
	= \nabla^{(e}\nabla^{f)} (\nabla_{(a}\nabla_{b)} C_{e:cf} - \nabla_{(a}\nabla_{c)} C_{e:bf})
	\\
	+ \frac{1}{n-1} \square \nabla^{(d}\nabla^{e)} (g_{ab} C_{d:ec} - g_{ac} C_{d:eb}) .
\end{dmath}

Since $P$ and $Q$ are of second order, they act (by pre-composition and
up to lower-order terms) as a linear map between the spaces of principal
symbols of order $k$ and $k+2$. As can be seen from the following tensor
product decomposition table (cf.~Remark~\ref{rmk:littlewood} and the
explanations of the tables in Section~\ref{sec:ccyk})
\begin{center}
\begin{tabular}{@{}LL|CCCCCC@{}}
	\toprule
	&& \square & \nabla^2 & \nabla^4 & \nabla^6 & \nabla^8 & \cdots \\
	&& \mathbb{R} & \yd{2} & \yd{4} & \yd{6} & \yd{8} & \cdots
		\\[0.5ex] \hline &&&&& \\[-2ex]
	Y_{ab} & \yd{1,1} &
		\yd{1,1}\, Q^1 & \yd{1,1}\, Q^2 &   &   &   & \cdots
	\\
	C_{a:bc} & \yd{2,1} &
		\yd{2,1}\, P^1 & \yd{2,1}\, P^{2,3} + \yd{1,1,1}\, \hat{P}^5 &
		\yd{2,1}\, P^9 &   &   & \cdots
	\\
	\Xi_{abc} & \yd{1,1,1} &
		\yd{1,1,1}\, \hat{P}^1 & \yd{1,1,1}\, \hat{P}^2 + \yd{2,1}\, P^7 &
		  &   &   & \cdots
	\\ \bottomrule
\end{tabular} \:,
\end{center}
there is an order $k$ ($k=2$ for $Q$, and $k=4$ for $P$) after which
there are essentially no new principal symbols, meaning that for $k'\ge
k$ all operators with independent principal symbols of order $k'+2$ can
be obtained by acting with $\square$ on the operators of order $k'$.
Therefore, starting at order $k$, the pre-composition action of say $P$
on the space of potential adjugate operators $P'$ can be represented by
a square matrix. If this matrix is invertible, then the adjugate
identity~\eqref{eq:adjugate} can be satisfied, which shows generalized
normal hyperbolicity of $P$. If this matrix is singular, then it has a
right null-vector, which parametrizes an operator $P'$ such that
$P'\circ P = 0 + \text{l.o.t}$. But then, by
Lemma~\ref{lem:not-normhyp}, $P$ cannot be generalized
normally hyperbolic. The same argument works for $Q$.

In the next two Lemmas, we record the results of these calculations for
$P$ and $Q$ from~\eqref{eq:gen-ccyk}.

\begin{lemma} \label{lem:hyp-Q-ccyk}
An operator of the form
\begin{equation}
	Q_{ab}[Y] = s Q^1_{ab}[Y] + (s-t) Q^2[Y]_{ab} + \text{l.o.t} ,
\end{equation}
is generalized normally hyperbolic iff
\begin{equation}
	s \ne 0, \quad t \ne 0 ,
\end{equation}
due to the adjugate identity
\begin{equation}
	\left[\frac{1}{s} Q^1 + \left(\frac{1}{s}-\frac{1}{t}\right) Q^2\right] \circ Q
		= \square^2 + \text{l.o.t} .
\end{equation}
\end{lemma}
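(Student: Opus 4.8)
The plan is to reduce the entire statement to the two-dimensional matrix algebra generated by the principal symbols of the two top-order constituents of $Q$. Among $Q^1,\dots,Q^4$, only $Q^1$ and $Q^2$ are of full second differential order, the operators $Q^3$ (Riemann) and $Q^4$ (cosmological) being of lower order and hence harmless contributions to the l.o.t. Since $Q$ maps $2$-forms to $2$-forms it is determined, so by the criterion recalled before Lemma~\ref{lem:not-normhyp} everything hinges on the invertibility of the principal symbol at a non-null covector $p_a$. Writing $p^2 := p_c p^c$, these symbols are $\sigma_p(Q^1)Y_{ab} = p^2 Y_{ab}$ and $(\sigma_p(Q^2)Y)_{ab} = 2p_{[a}p^d Y_{b]d} =: (MY)_{ab}$, so that $\sigma_p(Q) = s\,p^2 I + (s-t)M$.

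First I would establish the single algebraic relation on which everything rests, namely $M^2 = -p^2 M$, or equivalently $M(M + p^2 I)=0$. This is the one genuinely computational step, and I expect it to be the crux of the lemma. Setting $v_b := p^d Y_{bd}$, the antisymmetry of $Y$ gives $p^b v_b = p^b p^d Y_{bd}=0$, whence $p^d (MY)_{bd} = p_b\,(p\!\cdot\! v) - p^2 v_b = -p^2 v_b$, and therefore $(M^2 Y)_{ab} = -p^2 (MY)_{ab}$. Once this is in hand, the span of $I$ and $M$ closes under composition and the remainder is linear algebra.

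For sufficiency, assuming $s,t\ne 0$, I would simply verify that the claimed adjugate has the correct principal symbol. With $\sigma_p(Q') = \tfrac1s p^2 I + (\tfrac1s-\tfrac1t)M$, expanding $\sigma_p(Q')\sigma_p(Q)$ and substituting $M^2 = -p^2 M$ leaves $(p^2)^2 I$ together with a term whose coefficient of $p^2 M$ collapses, $\tfrac1s(s-t) + (\tfrac1s-\tfrac1t)t = \tfrac{s-t}{s} + \tfrac{t-s}{s} = 0$. Thus $\sigma_p(Q'\circ Q) = (p^2)^2 I = \sigma_p(\square^2)$; since both $Q'\circ Q$ and $\square^2$ have order four, matching their principal symbols yields $Q'\circ Q = \square^2 + \text{l.o.t}$, which is exactly the adjugate identity~\eqref{eq:adjugate} with $m=2$ and establishes generalized normal hyperbolicity.

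For necessity I would invoke Lemma~\ref{lem:not-normhyp}, producing in each boundary case an explicit non-vanishing operator that annihilates $Q$ to top order. If $t=0$ then $\sigma_p(Q) = s(p^2 I + M)$, and $M(p^2 I + M) = p^2 M + M^2 = 0$, so $Q' = Q^2$ gives $Q'\circ Q = 0 + \text{l.o.t}$. If $s=0$ then $\sigma_p(Q) = -tM$, and $(p^2 I + M)M = p^2 M + M^2 = 0$, so $Q' = Q^1 + Q^2$ gives $Q'\circ Q = 0 + \text{l.o.t}$. In both cases $Q'$ is non-vanishing (its symbol is a nonzero polynomial in $p$), so Lemma~\ref{lem:not-normhyp} forbids generalized normal hyperbolicity; the case $s=t=0$ is subsumed since then $\sigma_p(Q)$ vanishes outright. (Equivalently, $\sigma_p(Q)$ is then visibly singular at any non-null $p$, as $M$ has both a nontrivial kernel and a nontrivial $-p^2$-eigenspace once $n>2$.) Combining the two directions gives the stated equivalence.
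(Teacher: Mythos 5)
Your proposal is correct and follows essentially the same route as the paper: Appendix~\ref{app:hyp-conds} parametrizes the adjugate ansatz by the only two independent second-order symbols (those of $Q^1$ and $Q^2$), checks invertibility of pre-composition by $Q$ on that two-dimensional space, and invokes Lemma~\ref{lem:not-normhyp} in the singular cases, which is exactly what your explicit computation via the relation $M^2=-p^2M$ carries out. The symbol identity, the cancellation of the $p^2M$ coefficient for $s,t\ne 0$, and the annihilating operators $Q^2$ (for $t=0$) and $Q^1+Q^2$ (for $s=0$) all check out.
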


\begin{lemma} \label{lem:hyp-P-ccyk}
An operator of the form
\begin{equation}
	P = \begin{bmatrix}
		u P^1 + (v-u)\frac{1}{2} P^2 + u\frac{1}{2} P^3 & 0 \\
		0 & q \hat{P}^1 + q\frac{1}{2} \hat{P}^2
	\end{bmatrix}
	+ \begin{bmatrix}
		w P^3 & x P^7 \\
		y \hat{P}^5 & z\frac{1}{2} \hat{P}^2
	\end{bmatrix}
\end{equation}
is generalized normally-hyperbolic iff
\begin{equation}
\begin{gathered}
	u\ne 0, \quad
	v\ne 0, \quad
	q\ne 0, \\
	xy-wz \ne 0, \quad
	\text{and} \quad
	(n-2)(6w-v+2u)-2u\ne 0 ,
\end{gathered}
\end{equation}
due to the adjugate identity
\begin{equation}
	P' \circ P = \begin{bmatrix} \square^3 & 0 \\ 0 & \square^3 \end{bmatrix}
		+ \text{l.o.t}
\end{equation}
for
\begin{multline} \small
	P' = \begin{bmatrix}
		\square \left(u' P^1 + (v'-u')\frac{1}{2} P^2 + u'\frac{1}{2} P^3\right)
			+ (p'-\frac{3}{4}w') P^9 & 0 \\
		0 & \square \left(q' \hat{P}^1 + q'\frac{1}{2} \hat{P}^2\right)
	\end{bmatrix}
	\\
	+ \square \begin{bmatrix}
		w'\frac{1}{4} P^3 & x'\frac{1}{2} P^7 \\
		y'\frac{1}{2} \hat{P}^5 & z'\frac{1}{2} \hat{P}^2
	\end{bmatrix} ,
\end{multline}
with
\begin{gather}
	u' = \frac{1}{u} , \quad
	v' = \frac{1}{v} , \quad
	q' = \frac{1}{q} ,
	\\
	\begin{bmatrix}
		w' & x' \\
		y' & z'
	\end{bmatrix}
	= \frac{1}{xy-wz} \begin{bmatrix}
			-z & x \\
			y & -w
	\end{bmatrix}
	= \begin{bmatrix}
		w & x \\
		y & z
	\end{bmatrix}^{-1} , \\
	p'
	= \frac{(n-3)[2(u-v)^2 + 3(uv + 2uw - 4vw)] + 3u(5v+2w)}
			{2 u v [(n-2)(6w-v+2u)-2u]} .
\end{gather}
\end{lemma}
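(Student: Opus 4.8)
The plan is to reduce the statement to a finite-dimensional linear-algebra computation at the level of principal symbols, exactly as outlined in the preamble of this appendix. Since $P$ raises differential order by two and, by the stabilization table above, all independent principal symbols of the relevant tensor types are already exhausted by order $k=4$ (namely $P^1$, the pair $P^{2,3}$, and $P^9$ for the $\yd{2,1}$-diagonal; $\hat{P}^1$ and $\hat{P}^2$ for the $\yd{1,1,1}$-diagonal; and the single couplings $P^7$ and $\hat{P}^5$, while the Riemann- and $\Lambda$-type operators $P^{4,5,6}$, $\hat{P}^{3,4,6}$ do not contribute), the pre-composition action $P'\mapsto P'\circ P$ is represented, modulo lower-order terms, by a square matrix $M$ on the finite-dimensional space of order-four adjugate symbols. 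Generalized normal hyperbolicity is then equivalent to $M$ being invertible: if $M$ is nonsingular we can solve $M\cdot(\text{coefficients of }P') = (\text{coefficients of }\square^3\oplus\square^3)$ to produce an adjugate, whereas if $M$ is singular a right null-vector yields a nonzero $P'$ with $P'\circ P = 0 + \text{l.o.t}$, so Lemma~\ref{lem:not-normhyp} forbids hyperbolicity.

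First I would assemble the principal-symbol composition rules for $P^1,P^2,P^3,P^7,P^9,\hat{P}^1,\hat{P}^2,\hat{P}^5$; these are the same computer-algebra computations that produced the matrices in~\eqref{eq:PT-matrix}--\eqref{eq:hatPT-matrix}. Feeding the block form of $P$ into these rules exhibits $M$ in block form: the $\yd{1,1,1}$-diagonal part decouples and behaves like the scalar situation of Lemma~\ref{lem:hyp-Q-ccyk}, contributing the factor $q$; the $\yd{2,1}$-diagonal part contributes the factors $u$ and $v$ through the same mechanism (with $u,v$ playing the roles of $s,t$ there); the remaining $\nabla^2$ degrees of freedom assemble into the $2\times 2$ block $\left[\begin{smallmatrix} w & x\\ y& z\end{smallmatrix}\right]$ mixing the $P^3$, $P^7$, $\hat{P}^5$ and $\hat{P}^2$ directions, whose invertibility is the condition $xy-wz\ne 0$; and the interaction of the coupling $P^3$ term with the fourth-order slot $P^9$ produces the remaining scalar factor $(n-2)(6w-v+2u)-2u$. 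Thus $\det M$ is, up to a nonzero numerical constant and a power of $\square$, proportional to $uvq\,(xy-wz)\,[(n-2)(6w-v+2u)-2u]$, and the five listed inequalities are precisely $\det M\ne 0$.

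For the sufficiency direction I would not diagonalize $M$ abstractly but instead verify the explicit adjugate $P'$ in the statement by direct substitution: the coefficients $u'=1/u$, $v'=1/v$, $q'=1/q$, the inverse block $\left[\begin{smallmatrix} w'&x'\\ y'& z'\end{smallmatrix}\right]$, and the scalar $p'$ are exactly the components of $M^{-1}$ applied to the identity symbol, which is why their denominators reproduce the five inequalities. Plugging this $P'$ into the composition rules and collecting the order-six terms should return $\square^3$ on each diagonal block, with all off-diagonal and fourth-order-slot contributions cancelling, confirming $P'\circ P = \square^3\oplus\square^3 + \text{l.o.t}$. For necessity, whenever one of the inequalities fails we have $\det M = 0$, so a right null-vector of $M$ furnishes an operator $P'$ with $P'\circ P = 0 + \text{l.o.t}$ and Lemma~\ref{lem:not-normhyp} completes the argument.

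The main obstacle will be the bookkeeping in the composition table, specifically isolating the nontrivial scalar factor $(n-2)(6w-v+2u)-2u$. Unlike $u,v,q$ and $xy-wz$, which can be read off from the block structure, this factor arises only after tracking how the $\nabla^2$-type operator $P^3$ propagates into the $\nabla^4$ ($P^9$) symbol slot under composition, so the coefficient $p'$ in $P'$ must be tuned to cancel a residual $P^9$ term. Getting the $n$-dependence and the precise combination $6w-v+2u$ correct is the delicate computational step, and it is the reason this slot is handled separately (via $P^9$) from the purely scalar diagonal mechanism of Lemma~\ref{lem:hyp-Q-ccyk}.
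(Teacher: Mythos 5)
Your proposal is correct and follows essentially the same route as the paper: the paper's justification of this lemma is exactly the reduction, described in the preamble of Appendix~\ref{app:hyp-conds}, to the invertibility of the square matrix representing pre-composition with $P$ on the stabilized (order-$4$) space of principal symbols spanned by $\square P^{1,2,3}$, $P^9$, $\square\hat{P}^{1,2}$, $\square P^7$, $\square\hat{P}^5$, with sufficiency established by exhibiting the explicit adjugate $P'$ and necessity via the right null-vector argument and Lemma~\ref{lem:not-normhyp}. You also correctly isolate the one genuinely delicate point, namely that the extra factor $(n-2)(6w-v+2u)-2u$ and the coefficient $p'$ arise from cancelling the residual $P^9$ component generated by the composition.
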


\bibliographystyle{utphys-alpha}
\bibliography{kid}

\providecommand{\href}[2]{#2}\begingroup\raggedright\begin{thebibliography}{10}

\bibitem{bk-kerrness1}
T.~B{\"a}ckdahl and J.~A.~V. Kroon, ``The {`non-Kerrness'} of domains of outer
  communication of black holes and exteriors of stars,''
  \href{http://dx.doi.org/10.1098/rspa.2010.0535}{{\em Proceedings of the Royal
  Society A} {\bfseries 467} (2011) 1701--1718},
  \href{http://arxiv.org/abs/1010.2421}{{\ttfamily arXiv:1010.2421}}.

\bibitem{bk-kerrness2}
T.~B{\"a}ckdahl and J.~A.~V. Kroon, ``Constructing {``non-Kerrness''} on
  compact domains,'' \href{http://dx.doi.org/10.1063/1.3702569}{{\em Journal of
  Mathematical Physics} {\bfseries 53} (2012) 042503},
  \href{http://arxiv.org/abs/1111.6019}{{\ttfamily arXiv:1111.6019}}.

\bibitem{bgp}
C.~Baer, N.~Ginoux, and F.~Pfaeffle, \href{http://dx.doi.org/10.4171/037}{{\em
  Wave Equations on {Lorentzian} Manifolds and Quantization}}, vol.~2 of {\em
  ESI lectures in mathematics and physics}.
\newblock European Mathematical Society, 2007.
\newblock \href{http://arxiv.org/abs/0806.1036}{{\ttfamily arXiv:0806.1036}}.

\bibitem{Batista2015}
C.~Batista, ``Integrability conditions for {Killing-Yano} tensors and conformal
  {Killing-Yano} tensors,''
  \href{http://dx.doi.org/10.1103/PhysRevD.91.024013}{{\em Physical Review D}
  {\bfseries 91} (2015) 024013},
  \href{http://arxiv.org/abs/1406.3069}{{\ttfamily arXiv:1406.3069}}.

\bibitem{beig-chrusciel}
R.~Beig and P.~T. Chrusciel, ``Killing initial data,''
  \href{http://dx.doi.org/10.1088/0264-9381/14/1a/007}{{\em Classical and
  Quantum Gravity} {\bfseries 14} (1997) A83--A92},
  \href{http://arxiv.org/abs/gr-qc/9604040}{{\ttfamily arXiv:gr-qc/9604040}}.

\bibitem{berezdivin}
R.~Berezdivin, ``The analytic noncharacteristic {Cauchy} problem for
  nonlightlike isometries in vacuum space‐times,''
  \href{http://dx.doi.org/10.1063/1.1666565}{{\em Journal of Mathematical
  Physics} {\bfseries 15} (1974) 1963--1966}.

\bibitem{berger}
B.~K. Berger, ``Homothetic and conformal motions in spacelike slices of
  solutions of {Einstein’s} equations,''
  \href{http://dx.doi.org/10.1063/1.523052}{{\em Journal of Mathematical
  Physics} {\bfseries 17} (1976) 1268--1273}.

\bibitem{kerrness-numeric}
S.~Bhagwat, M.~Okounkova, S.~W. Ballmer, D.~A. Brown, M.~Giesler, M.~A. Scheel,
  and S.~A. Teukolsky, ``On choosing the start time of binary black hole
  ringdowns,'' \href{http://dx.doi.org/10.1103/physrevd.97.104065}{{\em
  Physical Review D} {\bfseries 97} (2018) 104065},
  \href{http://arxiv.org/abs/1711.00926}{{\ttfamily arXiv:1711.00926}}.

\bibitem{cariglia}
M.~Cariglia, ``Hidden symmetries of dynamics in classical and quantum
  physics,'' \href{http://dx.doi.org/10.1103/RevModPhys.86.1283}{{\em Reviews
  in Modern Physics} {\bfseries 86} (2014) 1283--1336},
  \href{http://arxiv.org/abs/1411.1262}{{\ttfamily arXiv:1411.1262}}.

\bibitem{GEROCH-BRUHAT}
Y.~Choquet-Bruhat and R.~Geroch, ``Global aspects of the {C}auchy problem in
  general relativity,'' \href{http://dx.doi.org/10.1007/BF01645389}{{\em
  Communications in Mathematical Physics} {\bfseries 14} (1969) 329--335}.

\bibitem{COLL77}
B.~Coll, ``On the evolution equations for {Killing} fields,''
  \href{http://dx.doi.org/10.1063/1.523164}{{\em Journal of Mathematical
  Physics} {\bfseries 18} (1977) 1918--1922}.

\bibitem{eisenhart1934}
L.~P. Eisenhart, ``Separable systems of {St\"ackel},''
  \href{http://dx.doi.org/10.2307/1968433}{{\em Annals of Mathematics}
  {\bfseries 35} (1934) 284--305}.

\bibitem{foures-bruhat-1952}
Y.~Four{\`{e}}s-Bruhat, ``Th{\'{e}}or{\`{e}}me d'existence pour certains
  syst{\`{e}}mes d'{\'{e}}quations aux d{\'{e}}riv{\'{e}}es partielles non
  lin{\'{e}}aires,'' \href{http://dx.doi.org/10.1007/bf02392131}{{\em Acta
  Mathematica} {\bfseries 88} (1952) 141--225}.

\bibitem{fkk-review}
V.~P. Frolov, P.~Krtou{\v{s}}, and D.~Kubiz{\v{n}\'a}k, ``Black holes, hidden
  symmetries, and complete integrability,''
  \href{http://dx.doi.org/10.1007/s41114-017-0009-9}{{\em Living Reviews in
  Relativity} {\bfseries 20} (2017) 6},
  \href{http://arxiv.org/abs/1705.05482}{{\ttfamily arXiv:1705.05482}}.

\bibitem{fulton-harris}
W.~Fulton and J.~Harris, {\em Representation Theory: A First Course}, vol.~129
  of {\em Graduate Texts in Mathematics}.
\newblock Springer, New York, 1991.

\bibitem{fulton}
W.~Fulton, \href{http://dx.doi.org/10.1017/CBO9780511626241}{{\em Young
  Tableaux}}, vol.~35 of {\em London Mathematical Society Student Texts}.
\newblock Cambridge University Press, Cambridge, 1996.

\bibitem{gpgl-kerr}
A.~Garc\'{\i}a-Parrado G\'{o}mez-Lobo, ``Local non-negative initial data scalar
  characterization of the {Kerr} solution,''
  \href{http://dx.doi.org/10.1103/PhysRevD.92.124053}{{\em Physical Review D}
  {\bfseries 92} (2015) 124053},
  \href{http://arxiv.org/abs/1510.07561}{{\ttfamily arXiv:1510.07561}}.

\bibitem{GOMEZLOBO2008}
A.~Garc{\'{\i}}a-Parrado G{\'{o}}mez-Lobo and J.~A. Valiente~Kroon, ``Killing
  spinor initial data sets,''
  \href{http://dx.doi.org/10.1016/j.geomphys.2008.04.003}{{\em Journal of
  Geometry and Physics} {\bfseries 58} (2008) 1186--1202},
  \href{http://arxiv.org/abs/0712.3373}{{\ttfamily arXiv:0712.3373}}.

\bibitem{gpkh-ckid}
A.~García-Parrado and I.~Khavkine, ``Conformal {Killing} initial data,''
  \href{http://dx.doi.org/10.1063/1.5126683}{{\em Journal of Mathematical
  Physics} {\bfseries 60} (2019) 122502},
  \href{http://arxiv.org/abs/1905.01231}{{\ttfamily arXiv:1905.01231}}.

\bibitem{gasperin-williams}
E.~Gasperin and J.~L. Williams, ``The conformal {Killing} spinor initial data
  equations,'' 2017.
\newblock \href{http://arxiv.org/abs/1704.07586}{{\ttfamily arXiv:1704.07586}}.

\bibitem{grvh-ky}
G.~Gibbons, R.~Rietdijk, and J.~van Holten, ``Susy in the sky,''
  \href{http://dx.doi.org/10.1016/0550-3213(93)90472-2}{{\em Nuclear Physics B}
  {\bfseries 404} (1993) 42--64},
  \href{http://arxiv.org/abs/hep-th/9303112}{{\ttfamily arXiv:hep-th/9303112}}.

\bibitem{hamermesh}
M.~Hamermesh, {\em Group theory and its application to physical problems}.
\newblock Addison-Wesley Series in Physics. Addison-Wesley, Reading, MA, 1962.

\bibitem{Houri_2012}
T.~Houri, D.~Kubiz{\v{n}}{\'{a}}k, C.~M. Warnick, and Y.~Yasui, ``Local metrics
  admitting a principal {Killing-Yano} tensor with torsion,''
  \href{http://dx.doi.org/10.1088/0264-9381/29/16/165001}{{\em Classical and
  Quantum Gravity} {\bfseries 29} (2012) 165001},
  \href{http://arxiv.org/abs/1203.0393}{{\ttfamily arXiv:1203.0393}}.

\bibitem{Houri_2007b}
T.~Houri, T.~Oota, and Y.~Yasui, ``Closed conformal {Killing-Yano} tensor and
  geodesic integrability,''
  \href{http://dx.doi.org/10.1088/1751-8113/41/2/025204}{{\em Journal of
  Physics A: Mathematical and Theoretical} {\bfseries 41} (2007) 025204},
  \href{http://arxiv.org/abs/0707.4039}{{\ttfamily arXiv:0707.4039}}.

\bibitem{Houri_2007a}
T.~Houri, T.~Oota, and Y.~Yasui, ``Closed conformal {Killing-Yano} tensor and
  {Kerr-NUT-de Sitter} space-time uniqueness,''
  \href{http://dx.doi.org/10.1016/j.physletb.2007.09.034}{{\em Physics Letters
  B} {\bfseries 656} (2007) 214--216},
  \href{http://arxiv.org/abs/0708.1368}{{\ttfamily arXiv:0708.1368}}.

\bibitem{Houri_2018}
T.~Houri, K.~Tomoda, and Y.~Yasui, ``On integrability of the killing
  equation,'' \href{http://dx.doi.org/10.1088/1361-6382/aaa4e7}{{\em Classical
  and Quantum Gravity} {\bfseries 35} (2018) 075014},
  \href{http://arxiv.org/abs/1704.02074}{{\ttfamily arXiv:1704.02074}}.

\bibitem{Houri_2015}
T.~Houri and Y.~Yasui, ``A simple test for spacetime symmetry,''
  \href{http://dx.doi.org/10.1088/0264-9381/32/5/055002}{{\em Classical and
  Quantum Gravity} {\bfseries 32} (2015) 055002},
  \href{http://arxiv.org/abs/1410.1023}{{\ttfamily arXiv:1410.1023}}.

\bibitem{IONESCU-KLAINERMAN-K}
A.~D. Ionescu and S.~Klainerman, ``On the local extension of {Killing}
  vector-fields in {Ricci} flat manifolds,''
  \href{http://dx.doi.org/10.1090/s0894-0347-2012-00754-1}{{\em Journal of the
  American Mathematical Society} {\bfseries 26} (2012) 563--593},
  \href{http://arxiv.org/abs/1108.3575}{{\ttfamily arXiv:1108.3575}}.

\bibitem{kh-compat}
I.~Khavkine, ``Compatibility complexes of overdetermined {PDEs} of finite type,
  with applications to the {Killing} equation,''
  \href{http://dx.doi.org/10.1088/1361-6382/ab329a}{{\em Classical and Quantum
  Gravity} {\bfseries 36} (2019) 185012},
  \href{http://arxiv.org/abs/1805.03751}{{\ttfamily arXiv:1805.03751}}.

\bibitem{king-ortho}
R.~C. King, ``Modification rules and products of irreducible representations of
  the unitary, orthogonal, and symplectic groups,''
  \href{http://dx.doi.org/10.1063/1.1665778}{{\em Journal of Mathematical
  Physics} {\bfseries 12} (1971) 1588--1598}.

\bibitem{koike-terada-ortho}
K.~Koike and I.~Terada, ``Young-diagrammatic methods for the representation
  theory of the classical groups of type {$B_n$}, {$C_n$}, {$D_n$},''
  \href{http://dx.doi.org/10.1016/0021-8693(87)90099-8}{{\em Journal of
  Algebra} {\bfseries 107} no.~2, (May, 1987) 466--511}.

\bibitem{kfk-kerrnutads}
P.~Krtou\v{s}, V.~P. Frolov, and D.~Kubiz\v{n}\'ak, ``Hidden symmetries of
  higher-dimensional black holes and uniqueness of the {Kerr-NUT-(A)dS}
  spacetime,'' \href{http://dx.doi.org/10.1103/PhysRevD.78.064022}{{\em
  Physical Review D} {\bfseries 78} (2008) 064022},
  \href{http://arxiv.org/abs/0804.4705}{{\ttfamily arXiv:0804.4705}}.

\bibitem{littlewood-ortho}
D.~E. Littlewood, ``Products and plethysms of characters with orthogonal,
  symplectic and symmetric groups,''
  \href{http://dx.doi.org/10.4153/CJM-1958-002-7}{{\em Canadian Journal of
  Mathematics} {\bfseries 10} (1958) 17--32}.

\bibitem{XACT}
J.~M. Mart\'{\i}n-Garc\'{\i}a, ``x{A}ct: efficient tensor computer algebra.''
  \url{http://www.xact.es}.

\bibitem{XPERM}
J.~M. Mart\'{\i}n-Garc\'{\i}a, ``x{P}erm: fast index canonicalization for
  tensor computer algebra,''
  \href{http://dx.doi.org/10.1016/j.cpc.2008.05.009}{{\em Computer Physics
  Communications} {\bfseries 179} (2008) 597--603},
  \href{http://arxiv.org/abs/0803.0862}{{\ttfamily arXiv:0803.0862}}.

\bibitem{MONCRIEF-KID1}
V.~Moncrief, ``Spacetime symmetries and linearization stability of the
  {E}instein equations. {I},'' \href{http://dx.doi.org/10.1063/1.522572}{{\em
  Journal of Mathematical Physics} {\bfseries 16} (1975) 493--498}.

\bibitem{MONCRIEF-KID2}
V.~Moncrief, ``Space--time symmetries and linearization stability of the
  {Einstein} equations. {II},'' \href{http://dx.doi.org/10.1063/1.522814}{{\em
  Journal of Mathematical Physics} {\bfseries 17} (1976) 1893--1902}.

\bibitem{penrose-spinor}
R.~Penrose, ``A spinor approach to general relativity,''
  \href{http://dx.doi.org/10.1016/0003-4916(60)90021-x}{{\em Annals of Physics}
  {\bfseries 10} (1960) 171--201}.

\bibitem{racz-kid1}
I.~R{\'a}cz, ``On the existence of {Killing} vector fields,''
  \href{http://dx.doi.org/10.1088/0264-9381/16/6/304}{{\em Classical and
  Quantum Gravity} {\bfseries 16} (1999) 1695--1703},
  \href{http://arxiv.org/abs/gr-qc/9811031}{{\ttfamily arXiv:gr-qc/9811031}}.

\bibitem{racz-kid2}
I.~R{\'a}cz, ``Symmetries of spacetime and their relation to initial value
  problems,'' \href{http://dx.doi.org/10.1088/0264-9381/18/23/307}{{\em
  Classical and Quantum Gravity} {\bfseries 18} (2001) 5103--5113},
  \href{http://arxiv.org/abs/gr-qc/0111012}{{\ttfamily arXiv:gr-qc/0111012}}.

\bibitem{sergyeyev-krtous}
A.~Sergyeyev and P.~Krtou\v{s}, ``Complete set of commuting symmetry operators
  for the {Klein-Gordon} equation in generalized higher-dimensional
  {Kerr-NUT-(A)dS} spacetimes,''
  \href{http://dx.doi.org/10.1103/PhysRevD.77.044033}{{\em Physical Review D}
  {\bfseries 77} (2008) 044033},
  \href{http://arxiv.org/abs/0711.4623}{{\ttfamily arXiv:0711.4623}}.

\bibitem{tanimoto-ky}
M.~Tanimoto, ``The role of {Killing-Yano} tensors in supersymmetric mechanics
  on a curved manifold,''
  \href{http://dx.doi.org/10.1016/0550-3213(95)00086-8}{{\em Nuclear Physics B}
  {\bfseries 442} (1995) 549--560},
  \href{http://arxiv.org/abs/hep-th/9303112}{{\ttfamily arXiv:hep-th/9303112}}.

\bibitem{sagemath}
{The Sage Developers}, {\em {S}ageMath, the {S}age {M}athematics {S}oftware
  {S}ystem ({V}ersion 9.0.beta6)}, 2019.
\newblock \url{https://www.sagemath.org}.
\newblock See documentation on Lie Algebras for details.

\bibitem{lie-cas}
M.~A.~A. van Leeuwen, A.~M. Cohen, and B.~Lisser, ``{LiE} 2.2.2, a package for
  {Lie} group computations,'' 2000.
\newblock \url{http://www-math.univ-poitiers.fr/~maavl/LiE/}.

\bibitem{yano1952}
K.~Yano, ``Some remarks on tensor fields and curvature,''
  \href{http://dx.doi.org/10.2307/1969782}{{\em Annals of Mathematics}
  {\bfseries 55} (1952) 328--347}.

\end{thebibliography}\endgroup

\end{document}